\def\cqedsymbol{\ifmmode$\lrcorner$\else{\unskip\nobreak\hfil
\penalty50\hskip1em\null\nobreak\hfil$\lrcorner$
\parfillskip=0pt\finalhyphendemerits=0\endgraf}\fi} 
\newcommand{\cqed}{\renewcommand{\qed}{\cqedsymbol}}
\newtheorem{lemma}{Lemma}[section]
\newtheorem{proposition}[lemma]{Proposition}
\newtheorem{theorem}[lemma]{Theorem}
\theoremstyle{definition}
\newcommand{\Oh}{\mathcal{O}}
\newcommand{\pmc}{\Omega}
\newcommand{\cc}{\mathtt{cc}}
\newcommand{\N}{\mathbb{N}}
\newcommand{\ourclass}{\mathcal{C}}
\newcommand{\coverfam}{\mathcal{A}}
\newcommand{\lexlt}{<_{\mathrm{lex}}}
\newcommand{\indlt}{\prec}
\newcommand{\dpres}{\Upsilon}
\newcommand{\States}{\mathbf{States}}
\newcommand{\sepcont}{\widehat{S}}
\newcommand{\weight}{\mathfrak{w}}
\title{%
Induced subgraphs of bounded treewidth and the container method%
\thanks{M. Chudnovsky is supported by NSF grant DMS-1763817. This material is based upon work supported in part by the U. S. Army Research Office under grant number  W911NF-16-1-0404. P. Rz\k{a}\.zewski is supported by Polish National Science Centre grant no. 2018/31/D/ST6/00062. P. Seymour is supported by AFOSR grant A9550-19-1-0187 and NSF grant DMS-1800053.
  This research is a part of a project that has received funding from the European Research Council (ERC) under the European Union's Horizon 2020 research and innovation programme
Grant Agreement no.~714704.}} 
\author{ 
  Tara Abrishami\thanks{Princeton University, Princeton, NJ 08544} \and
  Maria Chudnovsky\thanks{Princeton University, Princeton, NJ 08544} \and
 Marcin Pilipczuk\thanks{Institute of Informatics, University of Warsaw, Banacha 2, 02-097 Warsaw, Poland} \and
 Pawe\l{} Rz\k{a}\.{z}ewski\thanks{Faculty of Mathematics and Information Science, Warsaw University of Technology, Poland, and 
  Institute of Informatics, University of Warsaw, Banacha 2, 02-097 Warsaw, Poland} \and
  Paul Seymour\thanks{Princeton University, Princeton, NJ 08544}}
\date{}
\begin{document}

\begin{titlepage}
\def\thepage{}
\thispagestyle{empty}
\maketitle

\begin{textblock}{20}(0, 13.0)
\includegraphics[width=40px]{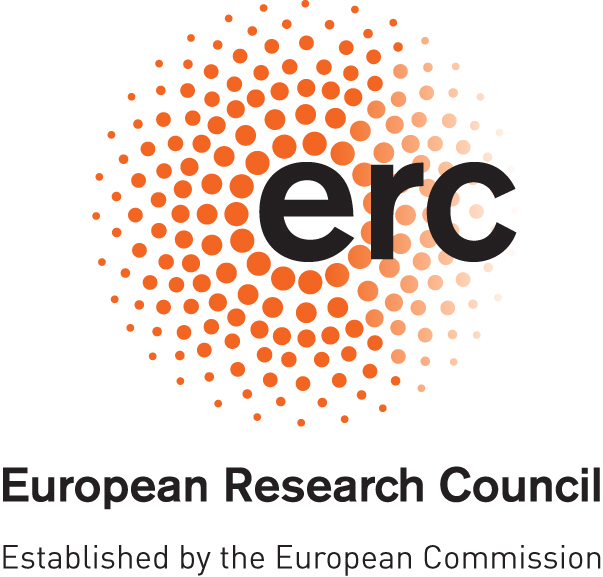}%
\end{textblock}
\begin{textblock}{20}(-0.25, 13.4)
\includegraphics[width=60px]{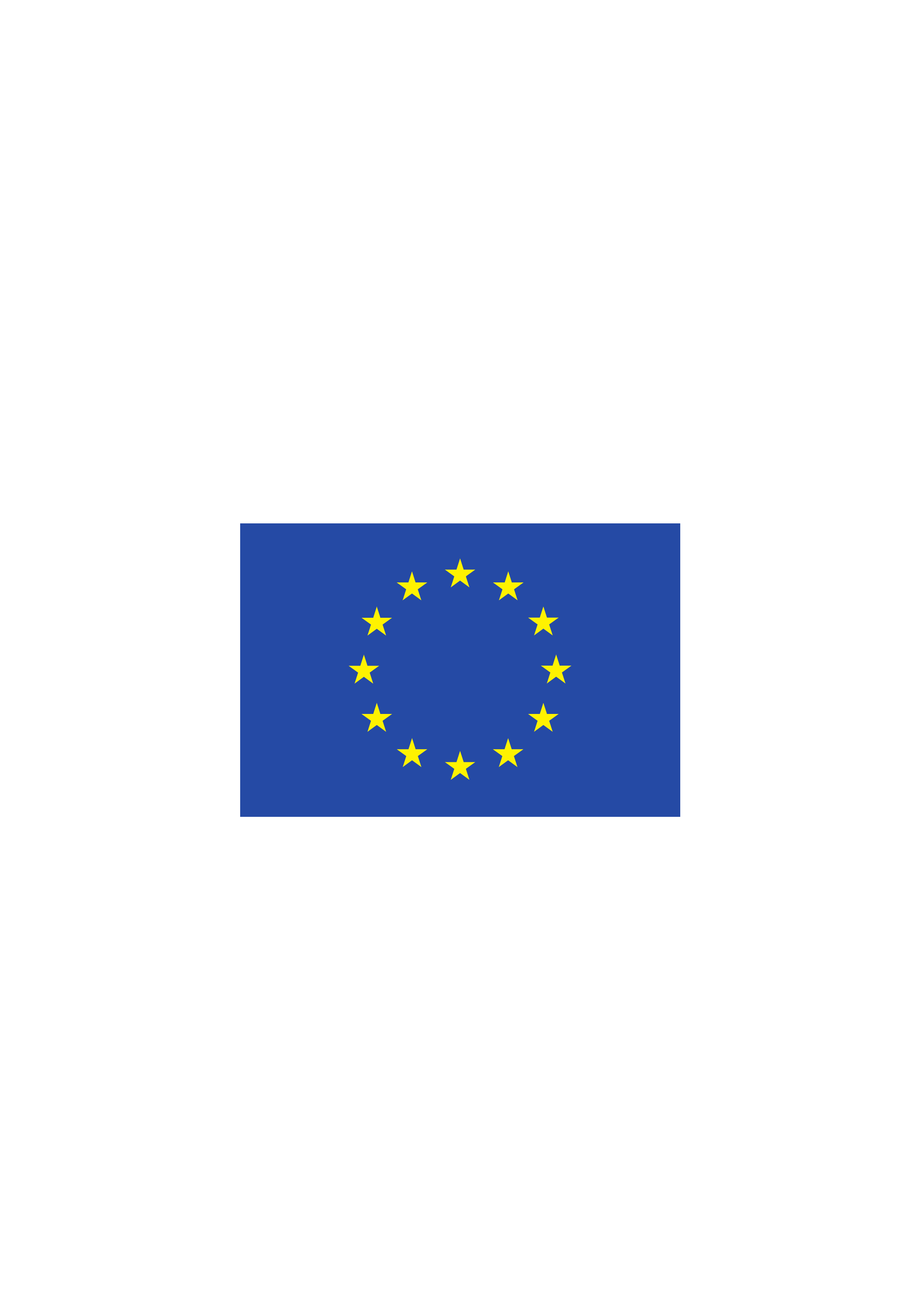}%
\end{textblock}

\begin{abstract}
	A {\em hole} in a graph is an induced cycle of length at least $4$.
	A hole is {\em long} if its length is at least $5$.
  By $P_t$ we denote a path on $t$ vertices. 
	In this paper we give polynomial-time algorithms for the following problems:
  \begin{itemize}
   \item the \textsc{Maximum Weight Independent Set} problem in	long-hole-free graphs, and 
   \item the \textsc{Feedback Vertex Set} problem in $P_5$-free graphs.
  \end{itemize}
  Each of the above results resolves a corresponding long-standing open problem. 

An \emph{extended $C_5$} is a five-vertex hole with an additional vertex adjacent to one or two consecutive vertices of the hole. 
Let $\ourclass$ be the class of graphs excluding an extended $C_5$ and holes of length at least $6$ as induced subgraphs;
$\ourclass$ contains all long-hole-free graphs and all $P_5$-free graphs. We show that, given an $n$-vertex graph $G \in \ourclass$
with vertex weights and an integer $k$, one can in time $n^{\Oh(k)}$ find a maximum-weight induced subgraph of $G$ of treewidth less than $k$.
This implies both aforementioned results.

To achieve this goal, we extend the framework of potential maximal cliques (PMCs) to \emph{containers}. 
Developed by Bouchitt\'{e} and Todinca [SIAM J. Comput. 2001] and extended by Fomin, Todinca, and Villanger [SIAM J. Comput. 2015],
this framework allows to solve high variety of tasks, including finding a maximum-weight induced subgraph of treewidth less than $k$ for fixed $k$,
     in time polynomial in the size of the graph and the number of potential maximal cliques. 
Further developments, tailored to solve the \textsc{Maximum Weight Independent Set} problem within this framework (e.g., for $P_5$-free [SODA 2014] or $P_6$-free graphs [SODA 2019]),
enumerate only a specifically chosen subset of all PMCs of a graph.
In all aforementioned works, the final step is an involved dynamic programming algorithm  whose state space is based on the considered list of PMCs.

Here we modify the dynamic programming algorithm and show that it
is sufficient to consider only a \emph{container} for each potential maximal clique: a
superset of the  maximal clique that intersects the sought solution only in the vertices of the potential maximal clique.
This strengthening of the framework not only allows us to obtain our main result, but also leads to significant simplifications of reasonings in previous papers.
\end{abstract}
\end{titlepage}

\section{Introduction}\label{sec:introduction}
An \emph{independent set} (or \emph{stable set}) in a simple graph $G$ is a set $I \subseteq V(G)$ such that no edge in $E(G)$ has both endpoints in $I$. Given a graph $G$ with non-negative vertex weights, the \textsc{Maximum Weight Independent Set} problem (\textsc{MWIS}) asks for an independent set of $G$ with the greatest total weight.
The \textsc{MWIS} problem is NP-hard in general~\cite{Karp72}.
Over the last several decades researchers have been trying to understand  what restrictions on the input graph
allow efficient algorithms for \textsc{MWIS}.

Given a graph $G$, a {\em hole} in $G$  is an induced cycle of length at least
four, and an {\em antihole} in $G$ is an induced subgraph which is the
complement of an odd cycle of length at least four.%
\footnote{Sometimes a hole is defined to have length at least five, that is, a cycle of length $4$ is not a hole. Since we use the notion of chordal graphs in this work (which are exactly hole-free graphs by our definition), we prefer to treat a four-vertex cycle as a hole and call all other holes \emph{long}.}
 A hole (or antihole) is {\em long} if it has at least five vertices,
{\em even} if it has an even number of vertices, and {\em odd} if it has an
odd number of vertices.
Probably the best known result concerning an efficient algorithm for MWIS is the polynomial-time algorithm for \textsc{MWIS} in perfect graphs due to Gr\"{o}tschel, Lov\'{a}sz, and Schrijver~\cite{GLS}. 
Recall that, by the Strong Perfect Graph Theorem~\cite{CRST}, 
a graph $G$ is perfect if and only if $G$ contains no odd holes and no odd
antiholes.
However, the algorithm of Gr\"{o}tschel, Lov\'{a}sz, and Schrijver~\cite{GLS} relies on the
ellipsoid method. Designing a \emph{combinatorial} polynomial-time
algorithm for \textsc{MWIS} in perfect graphs remains an important open problem. 
Furthermore, the question of the existence of a combinatorial polynomial-time algorithm
for \textsc{Maximum Weight Clique} in perfect graphs without long antiholes was open and received a considerable amount of attention.
Note that, in the complement of the input graph, this task is equivalent to \textsc{MWIS} in perfect graphs with no  long
holes, i.e., graphs with no long holes and no odd antiholes.

Meanwhile, it turned out that our toolbox for proving NP-hardness of \textsc{MWIS} leaves some interesting graph classes where \textsc{MWIS} can be tractable. 
Following the discussion in the previous paragraph, no NP-hardness result is known
for \textsc{MWIS} in {\em long-hole-free graphs}, that is, graphs with
no long holes.
The question of the existence of an efficient algorithm for \textsc{MWIS} in
this graph
class remained a long-standing open problem with a number of tractability results in subclasses~\cite{BasavarajuCK12,BrandstadtG12,BrandstadtGM12,BrandstadtLM10,BrandstadtM15,BerryBGM15}. 
Here we answer this question in the affirmative.

\begin{theorem}
\label{thm:long_hole_poly_time}
The \textsc{Maximum Weight Independent Set} problem in long-hole-free graphs can be solved in polynomial time.\end{theorem}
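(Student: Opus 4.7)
The plan is to obtain Theorem~\ref{thm:long_hole_poly_time} as an immediate corollary of the main technical result advertised in the abstract: that for every $n$-vertex graph $G \in \ourclass$ with vertex weights and every integer $k$, a maximum-weight induced subgraph of $G$ of treewidth less than $k$ can be found in time $n^{\Oh(k)}$. So the task reduces to two small observations, and the real work (the container-based strengthening of the Bouchitt\'e--Todinca / Fomin--Todinca--Villanger framework) is deferred to the technical sections.

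The first observation is that every long-hole-free graph belongs to $\ourclass$. Recall that $\ourclass$ is the class of graphs with no induced extended $C_5$ and no induced hole of length at least $6$. Any extended $C_5$ contains $C_5$ as an induced subgraph, and $C_5$ is itself a long hole; moreover every hole of length at least $6$ is a long hole. Hence a graph with no induced hole of length at least $5$ has neither an induced extended $C_5$ nor an induced hole of length at least $6$, so it lies in $\ourclass$.

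The second observation is that \textsc{MWIS} is exactly the problem of finding a maximum-weight induced subgraph of treewidth less than $1$. Indeed, a graph has treewidth $0$ if and only if it has no edges, which happens if and only if its vertex set is an independent set of the ambient graph. Therefore applying the main technical result to $G$ with $k=1$ returns, in time $n^{\Oh(1)}$, a vertex set $I \subseteq V(G)$ of maximum weight such that $G[I]$ has treewidth less than $1$; this set $I$ is precisely a maximum-weight independent set of $G$. Combined with the first observation, this proves Theorem~\ref{thm:long_hole_poly_time}.

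Since the deduction above is purely definitional, the only real obstacle is establishing the $n^{\Oh(k)}$ algorithm for $\ourclass$ in the first place: that requires (i) enumerating a suitable polynomial-size family of potential maximal cliques together with their containers in graphs from $\ourclass$, exploiting the exclusion of extended $C_5$ and long holes of length $\geq 6$, and (ii) modifying the Fomin--Todinca--Villanger dynamic programming so that it works with a container per PMC rather than with the PMC alone. That development is the subject of the remainder of the paper, and Theorem~\ref{thm:long_hole_poly_time} is the $k=1$ specialisation of its output.
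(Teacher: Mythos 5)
Your proposal is correct and follows exactly the paper's own route: the paper derives Theorem~\ref{thm:long_hole_poly_time} from Theorem~\ref{thm:main} by observing that long-hole-free graphs lie in $\ourclass$ and that a vertex set is independent if and only if it induces a graph of treewidth less than $1$, then invoking the main theorem with $k=1$.
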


Similarly, no NP-hardness result for \textsc{MWIS} is  known for $P_t$-free graphs for any $t$,
where $P_t$ is the path on $t$ vertices.
Since $P_4$-free graphs have bounded cliquewidth, many computational problems,
including \textsc{MWIS}, can be solved in $P_4$-free graphs in linear time. 
Only recently, polynomial-time algorithms for \textsc{MWIS} in $P_5$-free~\cite{LokshtanovVV14}
and $P_6$-free graphs~\cite{GrzesikKPP19} were developed; the case of $P_7$-free graphs remains open.

\medskip

Given a graph $G$, the \textsc{Feedback Vertex Set} problem (\textsc{FVS})  asks for a minimum-sized set $X \subseteq V(G)$
such that $G-X$ is a forest. Equivalently, we can ask for a maximum-sized set $Y \subseteq V(G)$
that induces a forest in $G$; the latter formulation is sometimes called
\textsc{Maximum Induced Forest}. 
The problem is one of the classic NP-hard optimization problems, with its directed version
on the Karp's list of 21 NP-hard problems~\cite{Karp72}. 
Similarly as for \textsc{MWIS}, \textsc{FVS} is polynomial-time solvable in $P_4$-free graphs
due to their simple nature, while no NP-hardness nor polynomial-time tractability result is known in $P_t$-free graphs for any $t \geq 5$.
Thus, the complexity of \textsc{FVS} in $P_5$-free graphs remained open
with~\cite{BrandstadtK85,ChiarelliHJMP18,BonamyDFJP19,DFJPPR19} among partial 
results.
In this work, we show tractability of \textsc{FVS} in $P_5$-free graphs.

\begin{theorem}
\label{thm:fvs_poly_time}
The \textsc{Feedback Vertex Set} problem in $P_5$-free graphs can be solved in polynomial time.\end{theorem}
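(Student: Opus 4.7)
The plan is to derive Theorem~\ref{thm:fvs_poly_time} as an immediate corollary of the main algorithmic result announced in the abstract: given an $n$-vertex graph $G \in \ourclass$ with vertex weights and an integer $k$, one can find in time $n^{\Oh(k)}$ a maximum-weight induced subgraph of $G$ of treewidth less than $k$. Two short observations are needed.

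First, I would verify that every $P_5$-free graph lies in $\ourclass$, that is, contains no induced hole of length at least $6$ and no induced extended $C_5$. For a hole $v_1 v_2 \cdots v_\ell$ with $\ell \geq 6$, the subpath $v_1 v_2 v_3 v_4 v_5$ is induced (the only potential chord, $v_1 v_\ell$, does not involve these five vertices), and is therefore an induced $P_5$. For an extended $C_5$ on a hole $v_1 v_2 v_3 v_4 v_5$ (cyclic) with extra vertex $u$: if $u$ is adjacent only to $v_1$, then $u\,v_1\,v_5\,v_4\,v_3$ is an induced $P_5$; if $u$ is adjacent to the consecutive pair $v_1,v_2$, then $u\,v_2\,v_3\,v_4\,v_5$ is an induced $P_5$. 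In both cases, the only required non-edges are among pairs of non-consecutive vertices of the original $C_5$, which by definition are non-edges, and edges from $u$ to vertices of the hole outside its prescribed neighborhood.

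Second, a graph is a forest if and only if it has treewidth at most $1$, i.e., treewidth strictly less than $2$. Hence a maximum induced forest of $G$ is exactly a maximum-cardinality induced subgraph of $G$ of treewidth less than $k = 2$, and its complement in $V(G)$ is a minimum feedback vertex set. To prove Theorem~\ref{thm:fvs_poly_time}, I would therefore take the $P_5$-free input graph $G$, note by the first observation that $G \in \ourclass$, assign every vertex weight $1$, and invoke the main algorithm with $k = 2$. The running time is $n^{\Oh(2)} = n^{\Oh(1)}$, and the complement of the returned vertex set is the desired minimum feedback vertex set.

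The derivation is thus essentially free once the main algorithmic theorem is in hand; the substantive work of the paper lies in establishing that theorem via the container version of the \textsc{PMC} framework, and not in this reduction. The only place a careless reader could slip is in checking the structural containment $P_5\text{-free} \subseteq \ourclass$, so I would write out the two induced-$P_5$ witnesses explicitly as above.
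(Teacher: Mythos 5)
Your proposal matches the paper's own (very brief) derivation of Theorem~\ref{thm:fvs_poly_time} from Theorem~\ref{thm:main}: you check $P_5$-free $\subseteq \ourclass$, observe that forests are exactly the graphs of treewidth less than $2$, and invoke the main algorithm with $k=2$. The explicit induced-$P_5$ witnesses in a long hole and in each form of an extended $C_5$ are correct and supply the detail that the paper states without proof.
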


Both Theorem~\ref{thm:long_hole_poly_time} and Theorem~\ref{thm:fvs_poly_time} are straightforward corollaries of the following more general result.
A graph $H$ is an {\em extended} $C_5$ if $H$ is obtained from a five-vertex hole by
adding a simplicial vertex, i.e., a vertex adjacent to one or two consecutive vertices of the cycle.
Let $\ourclass$ be the family of graphs with no hole of length at least $6$ and no extended
$C_5$ as an induced subgraph. 
We prove the following.
\begin{theorem}\label{thm:main}
Given an $n$-vertex graph $G \in \ourclass$ with vertex weights $\weight:V(G) \to \N$
and an integer $k$, one can in time $n^{\Oh(k)}$ find a maximum-weight induced subgraph
of $G$ of treewidth less than $k$.
\end{theorem}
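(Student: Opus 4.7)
The plan is to instantiate the container extension of the Bouchitté-Todinca / Fomin-Todinca-Villanger potential-maximal-clique (PMC) framework announced in the abstract. Rather than enumerate every PMC of $G$, one enumerates a family of pairs $(\pmc, C)$ where $\pmc$ is a PMC and $C \supseteq \pmc$ carries the promise that, for the sought solution $F \subseteq V(G)$, we have $F \cap C = F \cap \pmc$. I would split the proof into three stages: (i) set up the container-based dynamic program and show its running time is polynomial in $n$ and in the number of pairs; (ii) for each $G \in \ourclass$ and integer $k$, produce a family of $n^{\Oh(k)}$ such pairs that is \emph{complete}, meaning that some optimal $F$ of treewidth less than $k$ is certified by pairs from the family; (iii) run the DP to obtain a maximum-weight solution.

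Stage (i) is a direct modification of the standard PMC-based DP over blocks $(\pmc, A)$, where $A$ is a union of full components of $G \setminus S$ for some minimal separator $S \subseteq \pmc$. After guessing $F \cap \pmc$, the container promise allows the DP to delete every vertex of $C \setminus \pmc$ from the recursive instance on $A$; correctness then follows by the usual argument that an optimal $F$ admits a witness triangulation whose bags are PMCs from the family, each consistent with its container. The bookkeeping is straightforward once the container promise is in place.

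Stage (ii) is the heart of the proof. Because $F$ has treewidth less than $k$, every PMC $\pmc$ of a witness triangulation of $G$ satisfies $|\pmc \cap F| \leq k$, so one can afford to guess $\pmc \cap F$ in $n^{\Oh(k)}$ ways. For each such guess, the plan is to use the structure of $\ourclass$ to recover the remainder of $\pmc$ and to produce a container $C \supseteq \pmc$ that both swallows the vertices ``outside'' $\pmc$ that might otherwise sneak into $F$ and is still enumerable from a polynomial amount of side information. This mirrors PMC-enumeration strategies used for $P_5$-free graphs by Lokshtanov, Vatshelle, and Villanger and for $P_6$-free graphs by Grzesik et al., generalized here to the joint exclusion of holes of length at least $6$ and extended $C_5$'s.

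The main obstacle will be the structural analysis required in stage (ii): controlling how a minimal separator $S$ attaches to a full component $D$ of $G \setminus S$ under only these two excluded induced subgraphs. I expect the argument to hinge on a careful case analysis of how a hypothetical long hole or extended $C_5$ could be built from vertices of $S$ and $D$, ruling out pathological attachments and showing that $\Oh(k)$ anchor vertices inside $\pmc \cap F$ already determine the ``relevant'' portion of $S$ and a suitable container swallowing the rest. Once stage (ii) delivers the family of $n^{\Oh(k)}$ pairs, stage (i) together with stage (iii) yield the claimed running time.
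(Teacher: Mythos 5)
Your three-stage plan (container-based DP, enumerate $n^{\Oh(k)}$ containers via the structure of $\ourclass$, run the DP) matches the paper's high-level strategy, which pipelines Theorem~\ref{thm:PMC_containers} with Theorem~\ref{thm:DP_intro} and uses that treewidth less than $k$ implies $k$-colorability. But stage~(i) is not ``a direct modification'' with ``straightforward bookkeeping,'' and this is exactly where the paper has to introduce a genuinely new idea. When the enumerated container $A$ is strictly larger than the PMC $\pmc$ it covers, the components of $G-A$ at one DP node no longer refine the blocks at a neighboring node: a child component $D' \in \cc(G-A')$ can intersect both the parent component $D$ and the parent container $A$, which never happens in the Fomin--Villanger / Fomin--Todinca--Villanger setting with exact bags. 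If the maximum-weight solution is not unique, the optimal partial solutions returned at different child states can then disagree on $A$ outside $\pmc$, and the assembly step has no way to glue them consistently. The paper's fix is to canonize to the lexicographically minimum solution among all maximum-weight ones and show (Lemma~\ref{lem:lex}) that lex-minimality can be enforced one state at a time, so every relevant partial solution is literally the trace of a single fixed $V(F)$. Without this (or some equivalent tie-breaking device), the correctness argument you describe does not go through.

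There is a second conceptual gap in how you set up the enumeration. You organize it around pairs $(\pmc, C)$ and speak in stage~(ii) of ``recovering the remainder of $\pmc$'' from the guess of $\pmc \cap V(F)$. This asks for more than you can afford: $\ourclass$ contains $p$-prisms, which have exponentially many PMCs, so you cannot hope to recover or carry $\pmc$ explicitly. The paper's DP never learns $\pmc$. Its state is a triple $(A,Q,D)$ with $A \in \coverfam$ a container, $Q \subseteq A$ a guessed intersection with $|Q|\le k$, and $D \in \cc(G-A)$; the container promise $A \cap V(F) = \pmc \cap V(F)$ lets $Q$ stand in for $\pmc \cap V(F)$ without ever materializing $\pmc$, and a single $A$ can serve as container for many PMCs. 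Finally, stage~(ii) as written is only a direction; the actual structural argument proceeds through containers for minimal separators (profiles and measuring sets, Lemmas~\ref{lem:Z_is_a_clique}--\ref{lem:i_nbrs_comparable}), a pure/impure split of PMCs according to whether their adhesions already lie in the separator-container family, and a PMC-lifting step to enumerate the pure PMCs. Your intuition about analyzing how a long hole or extended $C_5$ could arise is correct in spirit but falls well short of a proof.
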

In Theorems~\ref{thm:long_hole_poly_time} and~\ref{thm:main} and in the remainder of the paper, we assume that addition and comparison of weights of subsets
of vertices of $G$ can be done in constant time. 
The definitions of treewidth and tree decompositions can be found in Section~\ref{sec:preliminaries}.

Since $\ourclass$ contains all $P_5$-free graphs and all long-hole-free graphs,
while a set $Y \subseteq V(G)$ is independent if and only if $Y$ induces a graph of treewidth
less than $1$ and $Y$ induces a forest if and only if $Y$ induces a graph of treewidth less than $2$, 
Theorem~\ref{thm:main} directly implies Theorem~\ref{thm:long_hole_poly_time} and
Theorem~\ref{thm:fvs_poly_time}. 
It also generalizes the result of Lokshtanov, Villanger, and Vatshelle~\cite{LokshtanovVV14}
on tractability of \textsc{MWIS} in $P_5$-free graphs.

\paragraph{The framework of potential maximal cliques.}
A cornerstone technique for solving the \textsc{MWIS} problem in various graph classes was
introduced by Bouchitt\'{e} and Todinca~\cite{BouchitteT01,BouchitteT02}.
To explain it in more detail, we need some definitions (see also Section~\ref{sec:preliminaries} for the notation).

A graph is \emph{chordal} if it contains no holes. 
Equivalently, a graph is chordal if it admits a tree decomposition where
every bag is a maximal clique.

Let $G$ be a graph. A set $S \subseteq V(G)$ is a \emph{minimal separator}
if there are two distinct connected components $A,B$ of $G-S$ with $N(A) = N(B) = S$.
A set $\mathcal{E} \subseteq \binom{V(G)}{2} \setminus E(G)$ 
is a \emph{chordal completion}
or \emph{fill-in} of $G$ if $G+\mathcal{E} := (V(G), E(G) \cup \mathcal{E})$ is chordal; a chordal completion is \emph{minimal}
if it is inclusion-wise minimal. 
A set $\pmc \subseteq V(G)$ is a \emph{potential maximal clique} (PMC) if
there exists a minimal chordal completion $\mathcal{E}$ such that $\pmc$ is a maximal clique
in $G+\mathcal{E}$.
A graph class $\mathcal{G}$ has  \emph{a polynomial number of minimal separators (PMCs)}
if there exists a constant $c$  such that every $G \in \mathcal{G}$ has at most
$(|V(G)|)^c$ minimal separators (PMCs, respectively). 

The core of the contributions of Bouchitt\'{e} and Todinca~\cite{BouchitteT01,BouchitteT02}
can be summarized as follows:
\begin{enumerate}
\item A graph class has a polynomial number of minimal separators if and only if 
it has a polynomial number of PMCs.
\item All minimal separators and all PMCs of a graph can be enumerated in time polynomial
in the input and output. 
\item Given a graph $G$ and a list of all PMCs of $G$, one can solve \textsc{MWIS} in $G$
in time polynomial in $|V(G)|$ and the size of the list.
The algorithm is an involved dynamic programming
algorithm whose state space is based on the list of PMCs of $G$.
\end{enumerate}
Consequently, \textsc{MWIS} is polynomial-time solvable in any class of graphs that has a polynomial
number of PMCs or minimal separators. 
This result generalizes a number of earlier tractability results for specific graph classes.

The framework of Bouchitt\'{e} and Todinca has been generalized by Fomin and Villanger~\cite{FominV10}
and Fomin, Todinca, and Villanger~\cite{FominTV15} to other problems than just \textsc{MWIS},
including the problem of finding a maximum-weight induced subgraph of treewidth less than
$k$ for constant $k$ and satisfying some fixed property expressible in counting monadic
second order logic (CMSO). Note that this general problem includes \textsc{Feedback Vertex Set}.

However, the above technique has limitations. Consider the following example.
A \emph{$p$-prism} is a graph consisting of two cliques of size $p$ and a matching of their vertices.
More precisely, a $p$-prism has vertex set $\{a_1, \hdots, a_p, b_1, \hdots, b_p\}$, and its set of edges consists of the pairs of the 
following form:  $a_ia_j$ and $b_ib_j$ for $1 \leq i < j \leq p$, and $a_ib_i$ for $1 \leq i \leq p$.
It is easy to see that a $p$-prism has $2^p-2$ minimal separators and
$p2^{p-1}$ PMCs, while being $P_5$-free and long-hole-free. 
Thus, the framework of Bouchitt\'{e} and Todinca per se cannot provide a
polynomial-time algorithm for \textsc{MWIS} in long-hole-free graphs or $P_t$-free
graphs for any $t \geq 5$. 
In~\cite{long-hole-free-subexp}, it is proven that in long-hole-free graphs
$p$-prisms are the only obstacles to a polynomial number of PMCs and minimal separators: 
an $n$-vertex long-hole-free graph without a $p$-prism as an induced subgraph
has $n^{p + \Oh(1)}$ minimal separators.

The complexity of \textsc{MWIS} in $P_5$-free graphs was a long-standing open problem
until 2014, when Lokshtanov, Vatshelle, and Villanger~\cite{LokshtanovVV14}
presented an algorithm based on an ingenious modification of the framework of Bouchitt\'{e}
and Todinca. 
The main engine of their approach is encapsulated in the following statement:
\begin{theorem}[\cite{LokshtanovVV14}]
\label{thm:PMC_nonexhaustive_list}
Given a graph $G$ and a list $\Pi$ of potential maximal cliques of $G$, one can compute in time $O(|\Pi|n^5m)$ the maximum weight independent set $I$, such that there exists a minimal chordal completion $\mathcal{E}$ of $G$ such that every maximal clique $\pmc$ of $G + \mathcal{E}$ is on the list $\Pi$ and satisfies $|\pmc \cap I| \leq 1$.  
\end{theorem}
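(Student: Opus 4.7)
The plan is to adapt the Bouchitt\'e--Todinca dynamic programming so that it simultaneously enforces restriction of bags to $\Pi$ and the constraint $|\pmc \cap I| \leq 1$. Recall the structural correspondence: every minimal chordal completion $\mathcal{E}$ of $G$ arises from a tree decomposition $(T, \{B_t\})$ of $G$ whose bags are PMCs of $G$ and whose edge-intersections are minimal separators of $G$, and conversely. Rooting such a tree partitions $G$ into \emph{blocks} $(S, C)$, where either $S = \emptyset$ (the root block) or $S$ is a minimal separator and $C$ is a full connected component of $G - S$; $C$ carries the vertices in the subtree rooted at $S$.

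I would define DP states $\alpha(S, C, v)$ indexed by a block $(S, C)$ together with a \emph{profile} $v \in S \cup \{\bot\}$. The value $\alpha(S, C, v)$ stores the maximum weight of $I \cap (C \setminus S)$ over independent sets $I \subseteq S \cup C$ with $I \cap S = \{v\}$ when $v \in S$ (resp.\ $I \cap S = \emptyset$ when $v = \bot$), such that $G[S \cup C]$ admits a tree decomposition whose root bag contains $S$, all of whose bags are PMCs from $\Pi$ (restricted to $S \cup C$), and each of whose bags meets $I$ in at most one vertex. The answer to the theorem is $\alpha(\emptyset, V(G), \bot)$. The recurrence enumerates a top bag $\pmc \in \Pi$ with $S \subseteq \pmc \subseteq S \cup C$ and a candidate $w \in \pmc \cup \{\bot\}$ representing $\pmc \cap I$: we demand $w = v$ whenever $v \neq \bot$, and forbid $w \in S \setminus \{v\}$. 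For each connected component $C_i$ of $G[S \cup C] \setminus \pmc$, set $S_i := N_G(C_i) \cap \pmc$ (a minimal separator of $G$ having $C_i$ as a full component) and recurse with profile $v_i := w$ if $w \in S_i$ and $v_i := \bot$ otherwise. Defining $\alpha$ to exclude the weight of $I \cap S$ prevents double-counting of vertices shared across separators, and gives
\[
\alpha(S, C, v) \;=\; \max_{\pmc,\, w} \Bigl[\, \weight(w) \cdot \mathbf{1}[w \notin S \cup \{\bot\}] \;+\; \sum_i \alpha(S_i, C_i, v_i) \,\Bigr].
\]

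Correctness is a routine induction: any admissible witness $(I, \mathcal{E})$ yields a rooted PMC-bag tree decomposition whose top bag and choice of $w := \pmc \cap I$ reproduce a valid DP transition, and, conversely, the DP's optimal choices assemble a witness $(I, \mathcal{E})$. The number of relevant blocks is polynomial in $n$ and $|\Pi|$ (by the standard Bouchitt\'e--Todinca analysis, since each $(S, C)$ arises either as the full-component structure of some $\pmc \in \Pi$ or of a minimal separator contained in some such $\pmc$), each admitting $O(n)$ profiles, with $|\Pi|$ choices of $\pmc$ and $O(n)$ choices of $w$ per transition, evaluable in $O(nm)$ time, giving the target bound $O(|\Pi| n^5 m)$. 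The main obstacle is the bookkeeping required to enforce the \emph{global} constraint $|\pmc \cap I| \leq 1$ from purely \emph{local} DP decisions: the choice of $w$ at a PMC $\pmc$ must be faithfully propagated as the profile of every subproblem whose separator contains $w$, and forbidden in every other subproblem's separator; verifying that this local discipline suffices for the global constraint reduces to reassembling the chosen bags into a coherent tree decomposition whose PMC bags and minimal-separator intersections match the recursion, which is precisely the content of the Bouchitt\'e--Todinca structural correspondence.
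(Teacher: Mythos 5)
Your proposal is correct and essentially reproduces the original Lokshtanov--Vatshelle--Villanger proof: a Bouchitt\'e--Todinca dynamic program over blocks $(S,C)$, enriched with a profile $v \in S \cup \{\bot\}$ recording $I \cap S$. The present paper does not reprove Theorem~\ref{thm:PMC_nonexhaustive_list} --- it is cited --- but instead generalizes it: Theorem~\ref{thm:DP}, specialized to $k=1$, yields Theorem~\ref{thm:DP_MWIS}, which replaces exact PMCs by \emph{containers} $A \supseteq \Omega$ with $A \cap I = \Omega \cap I$. That relaxation breaks the clean block recursion you rely on: when the state set $A$ strictly contains the true bag, the components of $G-A$ need not align with those of $G-\Omega$, so distinct subproblems may return optimal partial solutions that disagree on their intersection with $A$, ruining the reassembly step. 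The paper copes by canonizing the target to the $\indlt$-minimum solution (lexicographically first among maximum-weight ones) and running an iterative, rounds-based DP on states $(A,Q,D)$ whose correctness is proved by induction on depth in the promised tree decomposition. Your block-based recursion is simpler precisely because, with exact PMC bags, the adhesion separators $S_i = N(C_i) \cap \Omega$ and the components $C_i$ genuinely partition the subproblem, so no canonization or rounds structure is needed; the paper's heavier machinery is what handling the class $\ourclass$ requires.
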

That is, one no longer requires to list \emph{all} potential maximal cliques of the input graph.
Instead, it is sufficient to find a list $\mathcal{S}$ of polynomial size with the following property:
For the sought solution $I$ there exists a minimal
chordal completion $\mathcal{E}$, such that all maximal cliques of $G+\mathcal{E}$ are in $\mathcal{S}$
and every maximal clique of $G+\mathcal{E}$ intersects $I$ in at most one vertex. 
Based on this modified approach, Grzesik, Klimo\v{s}ov\'{a}, Pilipczuk, and Pilipczuk
presented a polynomial-time algorithm for \textsc{MWIS} in $P_6$-free graphs~\cite{GrzesikKPP19}.

The PMC enumeration algorithm for $P_5$-free graphs of~\cite{LokshtanovVV14} enumerates
PMCs in three steps. 
Let $I$ be the sought solution (an independent set of maximum weight). 
Initially, the algorithm observes that there always exists a minimal chordal completion $\mathcal{E}$
such that no edge of $\mathcal{E}$ is incident with $I$, as completing $V(G) \setminus I$ into a clique 
turns $G$ into a split graph (in particular, a chordal graph). 
Thus, we can restrict to $\mathcal{E}$ being \emph{$I$-safe},
that is, not containing an edge incident with $I$. 
Then, immediately for every maximal clique $\pmc$ of $G+\mathcal{E}$ we have that $|\pmc \cap I| \leq 1$. 
In the first phase of the enumeration, an argument independent of the graph class
handles maximal cliques $\pmc$ of $G+\mathcal{E}$ with $|\pmc \cap I| = 1$. 
The second phase of the enumeration considers maximal cliques $\pmc$ 
that are disjoint from $I$, but contained in the union of the neighborhoods of two elements of $I$.
The third phase of the enumeration handles the remaining maximal cliques. 
As shown in~\cite{LokshtanovVV14}, in $P_5$-free graphs
there is only a polynomial number of PMCs of the third
type (for all choices of a solution $I$ and an $I$-safe minimal chordal completion $\mathcal{E}$)
and they can be enumerated in polynomial time. 
The example of a $p$-prism shows that there can be exponentially many PMCs of the second type.
In~\cite{LokshtanovVV14}, the selection of the PMCs of the second type to list is handled by an insightful argument
specific to $P_5$-free graphs that stops to work in $P_6$-free graphs.
Partially due to this, the work for $P_6$-free graphs~\cite{GrzesikKPP19} is substantially more involved and elaborate.

\paragraph{Our technical contribution.}
In this work, we generalize the framework to \emph{containers} of PMCs.
For an induced subgraph $F$ of $G$,
an \emph{$F$-container}
for a set $\pmc \subseteq V(G)$ is a set $A \subseteq V(G)$ 
such that $\pmc \subseteq A$ and $A \cap V(F) = \pmc \cap V(F)$.
A roughly similar notion of a container first appeared in~\cite{BMS,ST}. 
In Section~\ref{sec:DP} we prove the following:

\begin{restatable}{theorem}{DPtheorem}
\label{thm:DP_intro}
Assume we are given a graph $G$ with weight function $\weight : V(G) \to \N$, 
a family $\coverfam$ of subsets of $V(G)$, and a positive integer $k$ with the following promise:
\begin{displayquote}
For every induced subgraph $F$ of $G$ of treewidth less than $k$
and every potential maximal clique $\Omega$ of $G$
if $|V(F) \cap \Omega| \leq k$ then $\coverfam$ contains an
$F$-container for $\Omega$.
\end{displayquote}
\noindent Then, one can in time $|\coverfam|^2 |V(G)|^{\Oh(k)}$
find a maximum-weight induced subgraph of $(G,\weight)$ of treewidth less than $k$.
\end{restatable}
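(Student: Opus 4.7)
The plan is to emulate the dynamic programming scheme of Bouchitt\'e--Todinca and Fomin--Todinca--Villanger for computing a maximum-weight induced subgraph of treewidth less than $k$, but with one crucial change: at every step where the classical framework branches over a potential maximal clique $\pmc$ of $G$, we instead branch over a pair $(A, T)$ with $A \in \coverfam$ and $T \subseteq A$, $|T| \le k$, playing the role of $\pmc \cap V(F)$ for the sought subgraph $F$. The defining property of an $F$-container, $A \cap V(F) = \pmc \cap V(F)$, ensures that $T$ alone carries all PMC information relevant to $F$, while $A$ (rather than $\pmc$) is what controls the branching into subproblems.

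As a starting point I would invoke the standard structural statement underlying the FTV framework: for any induced subgraph $F$ of $G$ with $\mathrm{tw}(F) < k$, there exists a minimal chordal completion $\mathcal{E}$ of $G$ such that every maximal clique $\pmc$ of $G+\mathcal{E}$ (automatically a PMC of $G$) satisfies $|\pmc \cap V(F)| \le k$, and the clique tree of $G+\mathcal{E}$ restricted to $V(F)$ is a tree decomposition of $F$ of width less than $k$. Combined with the promise on $\coverfam$, this yields, for the optimum $F$, a clique-tree decomposition of $G$ in which every bag has an associated container from $\coverfam$. Our DP must reconstruct the value of this decomposition without knowing $F$ or $\mathcal{E}$.

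The DP operates on triples $(S, C, T)$, where $(S, C)$ is a \emph{block} (i.e., $S$ is a minimal separator of $G$ and $C$ is a full connected component of $G - S$) and $T \subseteq S$ with $|T| \le k$ is a guess for $V(F) \cap S$. The value $\alpha(S, C, T)$ would be the maximum weight of $F' \subseteq S \cup C$ with $F' \cap S = T$ admitting a tree decomposition compatible with the clique-tree structure and of width less than $k$. The recurrence iterates over pairs $(A, T')$ with $A \in \coverfam$, $S \subseteq A$, $T' \subseteq A$, $|T'| \le k$, and $T' \cap S = T$, and updates $\alpha(S, C, T)$ by $\weight(T' \setminus S) + \sum_{C'} \alpha(N(C'), C', T' \cap N(C'))$, where $C'$ ranges over the components of $G[(S \cup C) \setminus A]$. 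The only blocks that ever arise are the pairs $(N(C'), C')$ for $A \in \coverfam$ and $C'$ a component of $G - A$, so there are at most $O(|\coverfam| \cdot |V(G)|)$ of them; combined with $|\coverfam|\cdot |V(G)|^{\Oh(k)}$ choices for $(A, T, T')$ at each state, the overall running time is $|\coverfam|^2 \cdot |V(G)|^{\Oh(k)}$, matching the claim.

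The main difficulty is correctness, in particular the induction that lifts the FTV clique-tree decomposition of $G+\mathcal{E}$ to the container-based DP. The subtle point is that the components of $G[(S \cup C) \setminus A]$ can differ from those of $G[(S \cup C) \setminus \pmc]$, because $A$ may strictly extend $\pmc$ outside $V(F)$; one must check that every block appearing at a lower level of the PMC-based decomposition still appears as a block of the container-based decomposition (possibly after merging in vertices of $A \setminus \pmc$ that do not touch $F$). The equality $A \cap V(F) = \pmc \cap V(F)$ ensures that the extra vertices of $A$ are disjoint from $F$ and only shift block boundaries without altering their intersections with $F$; formalizing this, and then verifying that the recursion assembles a valid tree decomposition of the returned set of width less than $k$ with no double-counting of weights, is the technical heart of the argument.
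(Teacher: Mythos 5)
Your proposal correctly reproduces the overall shape of the argument — reduce to a container-indexed DP via the FTV structural lemma, then show the recursion over pairs $(A, T')$ with $A \in \coverfam$ reconstructs the optimal solution — but it stops short at exactly the place where the paper has to do real work, and does not supply the idea that makes that step go through.

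You flag the obstacle yourself: when $A$ is a container for a bag $\pmc$ and $A'$ is a container for a neighboring bag $\pmc'$, a component $D'$ of $G - A'$ can simultaneously meet the current component $D$ and meet $A \setminus \pmc$. (This cannot happen in the vanilla FTV setting, where the state sets are the actual bags.) The danger is not merely that block boundaries shift; it is that the partial optimum stored at $(A', Q', D')$ might be \emph{a different} optimal extension than the one you need — one that uses vertices of $D' \cap (A \setminus \pmc)$ — and those vertices are then inconsistent with the commitment $Q = A \cap V(F)$ made at the parent state. Saying that ``the extra vertices of $A$ are disjoint from $F$'' does not help: they are disjoint from the particular $F$ you are tracking, but the DP table has no knowledge of which $F$ that is, and when the maximum-weight subgraph is not unique it can happily store a partial solution belonging to a different optimum. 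The paper's resolution is to canonize: order $V(G)$ arbitrarily, define an order $\indlt$ that ranks solutions first by weight and then lexicographically, and prove (\'a la Lemma~\ref{lem:lex}) that the unique $\indlt$-minimum solution $F$ restricts to the unique $\indlt$-minimum feasible partial solution of every relevant state $(A_t, A_t \cap V(F), D)$. The DP then always keeps the $\indlt$-minimum feasible partial solution rather than merely a maximum-weight one, and the assembly step becomes consistent by construction. This canonization is the crux of the proof, and your write-up neither discovers it nor proposes a substitute.

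Two smaller points. First, your recurrence $\weight(T' \setminus S) + \sum_{C'} \alpha(N(C'), C', T' \cap N(C'))$ double-counts the weight of $T' \cap N(C')$ across the children, since each $\alpha(N(C'), C', \cdot)$ is defined to include the weight of its $T$-argument; the paper avoids this by having the table entry $\dpres(A,Q,D)$ store a subset of $D$ only, disjoint from $Q$. Second, your state space is block-based (minimal separator $S$ plus full component $C$), whereas the paper indexes directly by $(A, Q, D)$ with $A \in \coverfam$ and $D \in \cc(G-A)$, and iterates the update rule for $|V(G)|$ rounds rather than inducting on a block hierarchy; the block-based formulation can be made to work but adds the burden of arguing that the relevant blocks are exactly those arising from members of $\coverfam$, which in the container regime is less clean than it is with exact PMCs.
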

Going back to the outlined algorithm for \textsc{MWIS} in $P_5$-free graphs of~\cite{LokshtanovVV14},
observe that the following family:
$$\mathcal{F}(G) := \{N[X] \setminus X'~|~X \subseteq V(G) \wedge |X| \leq 2 \wedge X' \subseteq X\}$$
is of size $\Oh(|V(G)|^2)$ and 
contains an $I$-container for every independent set $I$ and 
PMC of the first or second type. 
Thus, with Theorem~\ref{thm:DP_intro} in hand, the algorithm of~\cite{LokshtanovVV14}
can be simplified to only its third phase.
That is, the PMCs of the first and second type are handled  by arguments independent of the studied graph class.

We show how to compute a family $\coverfam$ suitable for Theorem~\ref{thm:DP_intro} for the class $\ourclass$.


\begin{restatable}{theorem}{pmccontainers}
\label{thm:PMC_containers}
Given an $n$-vertex graph $G \in \ourclass$ and an integer $k$, 
one can in $n^{\Oh(k)}$ time compute a family $\mathcal{X}$ of size $\Oh(n^{8k+60})$
such that for every $k$-colorable induced subgraph $F$ of $G$
and every potential maximal clique $\Omega$ of $G$
there exists $\mathcal{S} \in \mathcal{X}$ such that $\mathcal{S}$ is an
$F$-container for $\Omega$. 
\end{restatable}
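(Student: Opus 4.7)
The plan is to construct the family $\mathcal{X}$ by enumerating bounded-size ``sketches'' that simultaneously pin down the intersection $\pmc \cap V(F)$ and the shape of $\pmc$ outside $V(F)$. I would iterate over all ordered tuples of vertices of length $\Oh(k)$, yielding $n^{\Oh(k)}$ sketches; each sketch is intended to encode (a)~a candidate for $\pmc \cap V(F)$, of size at most $k$, and (b)~a bounded number of auxiliary vertices that describe, via adjacent minimal separators, the components of $G - \pmc$ that abut $\pmc$. From each sketch I would assemble a candidate container $\mathcal{S}$ as the union of the candidate intersection, the closed neighborhoods of the auxiliary vertices (with vertices of $V(F)$ not in the candidate intersection removed), and those connected components of $G$ minus the candidate intersection that are certified, via the sketch, to be disjoint from $V(F)$.

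The correctness hinges on a structural lemma: for every PMC $\pmc$ of $G$ and every $k$-colorable induced subgraph $F$, some sketch of the above form yields a valid $F$-container for $\pmc$. To prove the lemma I would fix a minimal chordal completion $\mathcal{E}$ in which $\pmc$ is maximal, fix a proper $k$-coloring $V(F) = V_1 \cup \cdots \cup V_k$, and analyze each connected component $C$ of $G - \pmc$ with $N(C) \subseteq \pmc$ separately. Either $C$ is disjoint from $V(F) \setminus \pmc$, in which case the whole of $C$ can safely be added to the container, or $C$ meets $V(F)$ and one needs a bounded-size certificate inside $C$ whose closed neighborhood already captures $\pmc \cap N(C)$.

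The heart of the argument, and the expected main obstacle, is the second case: showing that when $C$ meets $V(F)$, the exclusion of long holes and of extended $C_5$ in $\ourclass$ forces $N(C)$ to be dominated in $C$ by a constant number of vertices. The intended argument is that a long induced path in $C$ with endpoints in $N(C)$, combined with a closing path routed through a single color class of $F$ (which is independent in $G$) and through $\pmc$, would produce either a hole of length at least $6$ in $G$ or an extended $C_5$ spanning an $F$-vertex and its neighbours in $\pmc$. The exclusion of extended $C_5$ is essential to handle the boundary case where the closing path has length two and the forbidden hole would otherwise collapse to a $5$-cycle with a single attached vertex.

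Once the structural lemma is in place, the final step is purely combinatorial: there are $n^{\Oh(k)}$ sketches, each yields a single container in polynomial time, and the bound $|\mathcal{X}| = \Oh(n^{8k+60})$ falls out of the bookkeeping, with roughly $\Oh(1)$ auxiliary vertices per color class (contributing the $8k$ in the exponent) plus a constant additive contribution from tracking the global adjacent-separator structure of $\pmc$ (contributing the $+60$).
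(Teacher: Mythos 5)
The structural lemma on which your whole argument rests is false in the class $\ourclass$. You claim that when a component $C$ of $G-\Omega$ meets $V(F)$, forbidding long holes and extended $C_5$ forces $N(C)$ to be dominated inside $C$ by a bounded number of vertices. Consider the $p$-prism (which is both $P_5$-free and long-hole-free, hence in $\ourclass$) with clique sides $\{a_1,\dots,a_p\}$ and $\{b_1,\dots,b_p\}$. Take the minimal separator $S=\{a_1,a_2,b_3,\dots,b_p\}$ with full component $A=\{a_3,\dots,a_p\}$. Each $a_j\in A$ has $N[a_j]\cap S=\{a_1,a_2,b_j\}$, so dominating $S$ from inside $A$ requires essentially all of $A$; no bounded certificate exists, yet $A$ can meet $V(F)$ (take $F$ to be a single vertex of $A$). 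Note also that $A$ is a clique, so your ``long induced path in $C$'' argument never gets off the ground here: there is no long path to trigger a hole or extended $C_5$. The obstruction is not a path but the matching structure, and it is exactly what makes the problem hard. A second, smaller gap: you assemble the container from sets like $N[z]$ ``with vertices of $V(F)$ not in the candidate intersection removed,'' but the sketch has only $\Oh(k)$ vertices and hence does not determine $V(F)$; you have no way to perform that removal.

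The paper's actual argument is structurally quite different and is specifically designed to get around the failure of bounded domination. For minimal separators it builds a fixed $12$-vertex gadget $W$ from two full components, classifies all vertices of $G$ by their ``profile'' in $W$, and for the problematic ambiguous profiles uses $2k$ representative vertices (one $i^j_L$ and one $i^j_R$ per color class) together with a \emph{comparability} lemma (Lemma~\ref{lem:i_nbrs_comparable}): for independent vertices with an ambiguous profile, their neighborhoods in the measuring set $Z_R$ (or $Z_L$) are totally ordered by inclusion, so picking the inclusion-maximal one per color class filters out all $F$-vertices. This gives separator containers of size $\Oh(n^{2k+13})$. It then splits PMCs into \emph{pure} (all adhesions among the enumerated separator containers) and \emph{impure}. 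Impure PMCs are covered as a union of four separator containers plus a common neighborhood $N(z_\ell)\cap N(z_r)$, using the extra ``anticomplete'' property supplied by Theorem~\ref{thm:n^11_sep_containers} and Lemma~\ref{lemma:PMC_minus_nbrhd_covered}; pure PMCs are enumerated \emph{exactly} via the survival-sequence machinery of Lemma~\ref{lem:PMClift-long-hole-free}. Your proposal attempts none of these steps, and your exponent bookkeeping ($8k$ from ``$\Oh(1)$ per color class'') does not reflect the true source, which is four separator containers each carrying $2k$ representatives, plus the extra $n^4|\mathcal{Y}|^4$ blow-up from the survival-sequence enumeration.
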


Due to the notion of containers,
the enumeration algorithm and our reasoning in Theorem~\ref{thm:PMC_containers}
is arguably simpler and shorter
than its counterpart for $P_6$-free graphs~\cite{GrzesikKPP19}.

Theorem~\ref{thm:main} follows by pipelining Theorem~\ref{thm:PMC_containers}
and Theorem~\ref{thm:DP_intro} and observing that a graph of treewidth less than $k$
is always $k$-colorable.

\paragraph{Organization.}
In Section~\ref{sec:preliminaries}, we define minimal separators and potential maximal cliques and review their properties. 
In Section~\ref{sec:containers_separators}, we consider containers for minimal separators,
   which are later used in Section~\ref{sec:containers_PMCs}
   to prove Theorem~\ref{thm:PMC_containers}.
In Section~\ref{sec:DP_algorithm}, we prove Theorem \ref{thm:DP_intro}.
Section~\ref{sec:conclusion} concludes the paper and includes a discussion on possible extensions of Theorem~\ref{thm:DP_intro}.

\section{Preliminaries}\label{sec:preliminaries}
Let $G$ be a graph with vertex set $V(G)$ and edge set $E(G)$. Let $X \subseteq V(G)$. We denote by $G[X]$ the subgraph of $G$ induced by $X$, and by $G - X$ the subgraph induced by $V(G) \setminus X$. The set of connected components of $G - X$ (as a family of vertex sets) is given by $\cc(G - X)$. The \emph{open neighborhood} of $X$ in $G$, denoted $N_G(X)$, is the set of vertices in $V(G) \setminus X$
with a neighbor in $X$. The \emph{closed neighborhood} of $X$ in $G$, denoted $N_G[X]$, is given by $N_G[X] = N_G(X) \cup X$. We write $N(X)$ and $N[X]$ to mean the open and closed neighborhoods of $X$ in $G$ when $G$ is clear from context. If $Y \subseteq V(G)$, we say that $X$ is \emph{complete} to $Y$ if for every $x \in X$ and $y \in Y$ it holds that $xy \in E(G)$. We say that $X$ is \emph{anticomplete} to $Y$ if for every $x \in X$ and $y \in Y$ it holds that $xy \not \in E(G)$. A \emph{path} is a graph $G$ with vertex set $p_1 \hdots p_n$ such that $p_ip_{i+1} \in E(G)$ for $1 \leq i < n$. The \emph{length} of a path is its number of edges.
A \emph{path from $a$ to $b$ through $X$} is a path with endpoints $a$ and $b$ and interior in $X$. If $a$ and $b$ are adjacent, the path from $a$ to $b$ through $X$ is the edge $ab$. 

A \emph{$k$-coloring} of a graph $G$ is a partition of $V(G)$ into $k$ independent sets.
A graph $G$ is \emph{$k$-colorable} if it admits a $k$-coloring.

A \emph{tree decomposition} $(T, \beta)$ of a graph $G$ is a tree $T$ and a function $\beta : V(T) \to 2^{V(G)}$ such that the following properties hold: (1) for every $uv \in E(G)$, there exists $t \in V(T)$ such that $u, v \in \beta(t)$, and (2) for every $v \in V(G)$, the set $\{t \in V(T) : v \in \beta(t)\}$ induces a nonempty connected subgraph of $T$. The sets $\beta(t)$ for $t \in V(T)$ are called the \emph{bags} of $(T,\beta)$. The \emph{width} of the decomposition $(T,\beta)$
is $\max_{t \in V(T)} |\beta(t)| - 1$ and the \emph{treewidth} of a graph is the minimum
possible width of its decomposition.

Let $X \subseteq V(G)$. The set $X$ is a \emph{minimal separator} if there exist $u, v \in V(G)$ such that $u$ and $v$ are in different connected components of $G - X$, and $u$ and $v$ are in the same connected component of $G - Y$ for every $Y \subsetneq X$. The vertices $u$ and $v$ are said to be \emph{separated by $X$}. A component $D \in \cc(G - X)$ is a \emph{full component} for $X$ if $N(D) = X$. A set $X \subseteq V(G)$ is a minimal separator if and only if there are at least two full components for $X$. Two vertices $u, v \in V(G)$ are separated by a minimal separator $X$ if and only if $u$ and $v$ are in different full components for $X$. 

A \emph{potential maximal clique} (PMC) of a graph $G$ is a set $\Omega \subseteq V(G)$ such that $\Omega$ is a maximal clique of $G + F$ for some minimal chordal completion $F$ of $G$. The following result characterizes PMCs: 

\begin{theorem}[\cite{BouchitteT01}]
\label{thm:PMC_characterization}
A set $\Omega \subseteq V(G)$ is a PMC of $G$ if and only if: 
\begin{enumerate}
    \item for every distinct $x, y \in \Omega$ with $xy \not \in E(G)$, there exists $D \in \cc(G - \Omega)$ such that $x, y \in N(D)$. We say that $D$ \emph{covers} the non-edge $xy$. 
    \item for every $D \in \cc(G - \Omega)$ it holds that $N(D) \subsetneq \Omega$
\end{enumerate}
\end{theorem}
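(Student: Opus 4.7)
The plan is to prove both directions of the equivalence, using classical tools on chordal graphs, clique trees, and minimal triangulations.

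For necessity, fix a minimal chordal completion $\mathcal{E}$ with $\Omega$ a maximal clique of $H := G + \mathcal{E}$. For condition~(2), each $D \in \cc(G - \Omega)$ is contained in some $D' \in \cc(H - \Omega)$, and $N_G(D) \subseteq N_H(D') \subseteq \Omega$; so if $N_G(D) = \Omega$ then $N_H(D') = \Omega$. But $\Omega$ is a bag of a clique tree of the chordal graph $H$, and the tree edge separating $\Omega$ from the subtree hosting $D'$ intersects $\Omega$ in some $S \subsetneq \Omega$, forcing $N_H(D') \subseteq S \subsetneq \Omega$---a contradiction. For condition~(1), pick $x, y \in \Omega$ with $xy \notin E(G)$, so $xy \in \mathcal{E}$. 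If no component of $G - \Omega$ covers the non-edge $xy$, a short connectivity argument shows that $\Omega \setminus \{x, y\}$ separates $x$ from $y$ in $G$: any $xy$-path in $G$ avoiding $\Omega \setminus \{x, y\}$ has its interior inside a single component $D \in \cc(G - \Omega)$, which would then cover $xy$. Via the Rose--Tarjan--Lueker criterion for minimal fill-ins, this separation is exactly what makes $xy$ dispensable, so $H - xy$ remains chordal, contradicting minimality of $\mathcal{E}$.

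For sufficiency, assume (1) and (2); I would construct $\mathcal{E}$ so that $\Omega$ is a maximal clique of the chordal graph $G + \mathcal{E}$. Set $\mathcal{E}_0 := \binom{\Omega}{2} \setminus E(G)$, turning $\Omega$ into a clique. For each $D \in \cc(G - \Omega)$, apply induction to obtain a minimal chordal completion $\mathcal{F}_D$ of $G[\Omega \cup D] + \mathcal{E}_0$ that adds no new edge between $D$ and $\Omega$; such $\mathcal{F}_D$ exists because in $G[\Omega \cup D] + \mathcal{E}_0$ the set $\Omega$ is already a clique, so every induced long cycle meets $\Omega$ in at most two consecutive vertices and can be triangulated purely inside $D$. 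Let $\mathcal{E} := \mathcal{E}_0 \cup \bigcup_D \mathcal{F}_D$. Chordality of $G + \mathcal{E}$ follows because any induced cycle of length at least $4$ contains at most two vertices of $\Omega$ (non-consecutive ones would be adjacent via $\mathcal{E}_0$, yielding a chord), hence lies inside a single $G[\Omega \cup D] + \mathcal{E}_0 + \mathcal{F}_D$, which is chordal by construction. Condition~(2) preserves maximality of $\Omega$: every $v \notin \Omega$ lies in some $D$ with $N_G(D) \subsetneq \Omega$, and $\mathcal{F}_D$ adds no edges between $D$ and $\Omega$. Minimality of $\mathcal{E}$ follows because the edges of each $\mathcal{F}_D$ are minimal by construction, while each $xy \in \mathcal{E}_0$ is witnessed by the covering component $D$ from condition~(1): a shortest $xy$-path through $G[D]$ together with $x$ and $y$ yields an induced cycle of length at least $4$ that is chordless in $G + (\mathcal{E} \setminus \{xy\})$, so removing $xy$ destroys chordality.

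The main technical obstacle will be the two delicate steps in the sufficiency direction: choosing each $\mathcal{F}_D$ so as not to introduce edges between $D$ and $\Omega$, and extracting from the covering component of condition~(1) a genuinely induced witness cycle for each edge of $\mathcal{E}_0$. Both rest on careful use of the connectivity structure of $G - \Omega$ together with the fact that $\Omega$ has already been completed into a clique by $\mathcal{E}_0$, so that the remaining triangulation work decomposes cleanly over the components of $G - \Omega$.
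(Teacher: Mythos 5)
This is a cited result — the paper quotes it from Bouchitt\'e and Todinca and gives no proof of its own, so there is no internal argument to compare against; I can only assess your attempt on its merits. The necessity proof of condition~(2) via the clique tree is fine, but both remaining pieces have genuine gaps. For the necessity of~(1), you correctly observe that if no component of $G-\Omega$ covers $xy$ then $\Omega\setminus\{x,y\}$ separates $x$ from $y$ \emph{in $G$}, but the leap to ``$H-xy$ remains chordal'' is not supplied by the Rose--Tarjan--Lueker criterion: a chordless cycle of $H-xy$ witnessing non-chordality is built from $H$-edges, and fill edges could in principle connect $x$ to $y$ across several components of $G-\Omega$. To close this you would need the nontrivial lemma (which Bouchitt\'e--Todinca prove as a preliminary step) that for a maximal clique $\Omega$ of a minimal triangulation $H$ of $G$ one has $\cc(G-\Omega)=\cc(H-\Omega)$ with $N_G(D)=N_H(D)$ for each such component; without it the $G$-separation you establish says nothing about $H$.

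The sufficiency direction has a harder problem: the asserted existence of a minimal chordal completion $\mathcal{F}_D$ of $G[\Omega\cup D]+\mathcal{E}_0$ that adds no edge between $D$ and $\Omega$ is simply false, even assuming condition~(2). Take $\Omega=\{a,b,c\}$ a triangle, $D=\{d_1,d_2,d_3\}$ inducing a path $d_1-d_2-d_3$, with $a d_1, b d_3\in E(G)$ and $c$ anticomplete to $D$. Then $N(D)=\{a,b\}\subsetneq\Omega$, so~(2) holds, yet $a-d_1-d_2-d_3-b-a$ is a chordless $5$-cycle and every triangulation of it must add a chord with one endpoint in $\{a,b\}$ and one in $D$. (The set $\Omega$ is still a PMC here — e.g.\ add $ad_2,ad_3$ — which shows the theorem is not in danger, only your construction.) So the ``triangulate purely inside $D$'' argument cannot be the route; the actual Bouchitt\'e--Todinca sufficiency proof instead passes through the minimal-separator structure (the sets $N(D)$ are pairwise parallel minimal separators, and a minimal triangulation is chosen by completing a suitable maximal family of parallel separators containing them). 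Relatedly, your minimality check for $\mathcal{E}_0$-edges is also underspecified: a shortest $x$-$y$ path through a covering component is a path, not a cycle, and turning it into a chordless witness cycle in $G+(\mathcal{E}\setminus\{xy\})$ needs a second internally disjoint $x$-$y$ path that your construction does not obviously provide.
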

Theorem \ref{thm:PMC_characterization} gives an algorithm to test whether a set $\Omega \subseteq V(G)$ is a PMC of $G$ in time $O(mn)$. 
We also have the following result relating PMCs and minimal separators: 

\begin{proposition}[\cite{BouchitteT01}]
\label{prop:PMC_adhesions_are_seps}
Let $\Omega \subseteq V(G)$ be a PMC of $G$. Then, for every $D \in \cc(G - \Omega)$, the set $N(D)$ is a minimal separator of $G$. 
\end{proposition}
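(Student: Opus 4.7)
The plan is to verify the minimal-separator criterion by exhibiting two distinct full components of $S := N(D)$ in $G - S$. The set $D$ itself is one full component by the very definition of $S$, so everything reduces to constructing a second full component $D'$ of $S$.

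For that I would first invoke condition~2 of Theorem~\ref{thm:PMC_characterization} to pick a vertex $v \in \Omega \setminus S$ (this set is nonempty because $N(D) \subsetneq \Omega$), and take $D'$ to be the component of $G - S$ that contains $v$. A brief sanity check confirms $D' \ne D$: suppose toward a contradiction that some path $P$ in $G - S$ joins $v$ to a vertex of $D$; consider the last vertex $u$ of $P$ that lies in $\Omega$. Its successor on $P$ lies outside $\Omega$, hence inside some component of $G - \Omega$, and the tail of $P$ is entirely contained in $V(G) \setminus \Omega$ and ends in $D$, forcing that component to be $D$; but then $u \in N(D) = S$, contradicting $P$ avoiding $S$.

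The main step is then to show $N(D') = S$. The inclusion $N(D') \subseteq S$ is immediate since $D'$ is a component of $G - S$. For the other direction, fix $s \in S$; if $sv \in E(G)$ then $v \in D'$ already witnesses $s \in N(D')$, so we may assume $sv$ is a non-edge of $\Omega$. Here condition~1 of Theorem~\ref{thm:PMC_characterization} supplies a component $D'' \in \cc(G - \Omega)$ covering $sv$, giving a neighbor $w \in D''$ of $s$ and a neighbor $w' \in D''$ of $v$. Since $D'' \cap S = \emptyset$ (because $S \subseteq \Omega$) and $D''$ is connected in $G - \Omega$, and therefore also in $G - S$, the vertices $w$, $w'$, and $v$ all lie in a common component of $G - S$, which must be $D'$. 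Thus $w \in D'$ and $s \in N(D')$, as required.

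Combining the two paragraphs, $D$ and $D'$ are distinct full components of $S$, which is exactly the two-full-components characterization of a minimal separator. I expect the only mildly delicate point to be the verification $D \ne D'$, where one must argue that no path in $G - S$ can sneak from $\Omega \setminus S$ into $D$; once that is unwound from the PMC characterization, the remainder of the argument is a routine application of conditions~1 and~2 of Theorem~\ref{thm:PMC_characterization}.
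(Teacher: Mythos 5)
Your proof is correct. The paper cites Proposition~\ref{prop:PMC_adhesions_are_seps} from Bouchitt\'e and Todinca without supplying a proof, so there is nothing in the paper to compare against; your argument---exhibiting $D$ itself and a second full component $D'$ containing a vertex of $\Omega \setminus N(D)$, then using condition~1 of Theorem~\ref{thm:PMC_characterization} to cover the non-edges from that vertex to the rest of $N(D)$---is the standard derivation from the PMC characterization and is carried out cleanly, including the easy-to-overlook check that $D$ and $D'$ really are distinct components of $G - N(D)$.
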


%

\section{Containers for minimal separators}\label{sec:containers_separators}
Let $G$ be a graph in $\ourclass$ and let $n := |V(G)|$. 
Fix an integer $k \geq 0$. 
The goal of this section is to construct a family $\mathcal{F} \subseteq V(G)$ of size $n^{\Oh(k)}$,
such that for every $k$-colorable induced subgraph $F$ of $G$ and for every minimal
separator $S$ of $G$, an $F$-container for $S$ belongs to $\mathcal{F}$.

We call minimal separators $S$ such that $S \subseteq N(v)$ for some $v \in V(G)$ \emph{primitive separators}. The following result deals with primitive separators. 
 
\begin{theorem}
\label{thm:seps_in_nbrhd_single_vtx}
Given an  $n$-vertex graph $G$, one can in polynomial time construct a family $\mathcal{F}_0$ of size at most $n^2$ such that all primitive separators belong to $\mathcal{F}_0$. 
\end{theorem}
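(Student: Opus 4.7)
The plan is to exploit the fact that a primitive separator $S \subseteq N(v)$ can always be recognized as $N(D)$ for some connected component $D$ of $G - N[v]$, and then enumerate all such pairs $(v, D)$. Concretely, I would define
\[
\mathcal{F}_0 := \{\, N_G(D) \;:\; v \in V(G),\; D \in \cc(G - N[v]) \,\}.
\]
Since each vertex $v$ contributes at most $n$ components and there are $n$ choices of $v$, the family has size at most $n^2$, and it can clearly be built in polynomial time by running, for each $v$, a linear-time search on $G - N[v]$ and computing the neighborhoods of the resulting components.

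The first thing to verify is that every set placed in $\mathcal{F}_0$ is meaningful, though the theorem statement does not literally require this. For $D \in \cc(G - N[v])$, every vertex in $N(D)$ must lie in $N[v]$ (otherwise it would extend $D$ inside $G - N[v]$) and cannot equal $v$ (since $v$ has no neighbour in $D$), so $N(D) \subseteq N(v)$. This also ensures that $v$ lies in $V(G) \setminus N(D)$, and a short check shows that the component $D_v$ of $G - N(D)$ containing $v$ satisfies $N(D_v) = N(D)$, so $N(D)$ is indeed a minimal separator with $D, D_v$ as two full components.

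The main content is the converse: every primitive separator appears in $\mathcal{F}_0$. Given a primitive separator $S$ with $S \subseteq N(v)$, the key observation is that $v \notin S$ (because $S \subseteq N(v)$ excludes $v$ from $S$), so $v$ lies in some component of $G - S$. Since $S$ has at least two full components, I can pick a full component $D$ of $S$ that does not contain $v$. Then $D \cap S = \emptyset$, and also $D \cap N(v) = \emptyset$ (any common vertex would force $v \in N(D) = S$, a contradiction), hence $D \subseteq V(G) \setminus N[v]$. The final step is to argue that $D$ is not just connected in $G - N[v]$ but actually a whole component: if $D$ were strictly contained in some component $D^\ast$ of $G - N[v]$, then $D^\ast \setminus D$ would be disjoint from $N[v]$ and therefore from $S$ (since $S \subseteq N(v)$), giving an edge from $D$ to a vertex outside $D$ and outside $S$, contradicting that $D$ is a component of $G - S$. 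Thus $D \in \cc(G - N[v])$ and $S = N(D) \in \mathcal{F}_0$.

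I do not expect a serious obstacle here; the only slightly delicate point is the maximality argument that promotes $D$ from being a connected subset of $G - N[v]$ to being an entire component, and this is exactly the place where $S \subseteq N(v)$ is used in its full strength. The rest is bookkeeping on sizes and running time.
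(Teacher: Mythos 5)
Your proposal is correct and follows the same construction and argument as the paper: define $\mathcal{F}_0$ as the family of sets $N(D)$ over all $v \in V(G)$ and $D \in \cc(G - N[v])$, then for a primitive $S \subseteq N(v)$ observe that $v \notin S$, pick a full component $D$ of $S$ avoiding $v$, and check $D \in \cc(G - N[v])$ with $N(D) = S$. Your write-up simply spells out the verification that $D$ is an entire component of $G - N[v]$ (a step the paper leaves implicit) and adds an optional sanity check that every member of $\mathcal{F}_0$ is itself a minimal separator; neither changes the underlying approach.
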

\begin{proof}
Let 
$$\mathcal{F}_0 := \bigcup_{v \in V(G)} \{N(C) \mid C \in \cc(G - N[v])\}.$$
Suppose $S$ is a minimal primitive separator of $G$ and $S \subseteq N(v)$ for some $v \in V(G)$. Note that $v \notin S$.
Let $D \in \cc(G - S)$ be a full component for $S$ with $v \not \in D$. Then, $D \in \cc(G - N[v])$ and $S = N(D)$, so $S \in \mathcal{F}_0$. 
\end{proof}

In the rest of this section we focus on separators that are not primitive. 
Let $G \in \mathcal{S}$ and $S$ be a minimal separator of $G$.

\begin{figure}[tb]
\begin{center}
\includegraphics{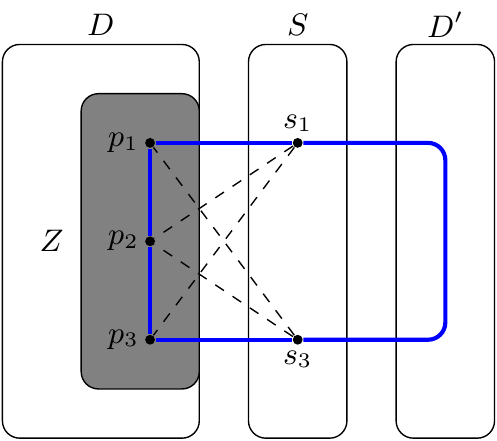}
\caption{Proof of Lemma~\ref{lem:Z_is_a_clique}.}\label{fig:31}
\end{center}
\end{figure}

\begin{lemma}
\label{lem:Z_is_a_clique}
Let $D$ be a full component for $S$. Let $Z \subseteq D$ be a minimal connected subset of $D$ such that $N(Z) = S$. Then, $Z$ is a clique. 
\end{lemma}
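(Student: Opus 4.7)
I argue by contradiction: assume $Z$ is not a clique, and I will produce either a hole of length at least $6$ or an extended $C_5$ in $G$, contradicting $G \in \ourclass$. The strategy is to pick two non-adjacent vertices $z_1, z_2 \in Z$, extract via minimality a ``private'' neighbor $s_i \in S$ for each (so that $N(s_i) \cap Z = \{z_i\}$), and then close an induced cycle by joining $s_1$ to $s_2$ through a shortest path whose interior lies in a second full component $D'$ of $G - S$, which exists because $S$ is a minimal separator.

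To locate $z_1$ and $z_2$ I look at the block-cut tree of $G[Z]$. If $G[Z]$ is $2$-connected then every vertex is non-cut, and since $Z$ is not a clique some non-adjacent pair exists. Otherwise the block-cut tree has at least two leaf blocks, each containing a vertex that is not its (unique) shared cut vertex; picking one such vertex from two distinct leaf blocks yields $z_1, z_2$ that are both non-cut in $G[Z]$ and non-adjacent, since vertices in distinct blocks cannot be joined by an edge.

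Since $z_i$ is non-cut, $Z \setminus \{z_i\}$ is a connected proper subset of $Z$, and by the minimality of $Z$ it cannot cover $S$ via neighborhoods. Hence there exists $s_i \in S$ with no neighbor in $Z \setminus \{z_i\}$, and $s_i \in N(Z)$ then forces $N(s_i) \cap Z = \{z_i\}$. In particular $s_1 \neq s_2$, because $s_1 \sim z_1$ while $s_2 \not\sim z_1$.

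Let $P = p_0 p_1 \cdots p_\ell$ be a shortest $z_1$--$z_2$ path in $G[Z]$, so $\ell \geq 2$, and let $Q$ be a shortest $s_1$--$s_2$ path with interior in $D'$, of length $r \geq 1$ (each $s_i$ has a neighbor in the connected set $D'$). Concatenating gives a cycle $C$ of length $\ell + r + 2$. The identities $N(s_i) \cap Z = \{z_i\}$ preclude chords from $s_i$ to interior vertices of $P$, the shortest-path property of $P$ and $Q$ precludes internal chords, and $S$ separating $D$ from $D'$ precludes chords between $P$- and $Q$-vertices, so $C$ is induced. If $\ell \geq 3$ or $r \geq 2$, then $|C| \geq 6$ is a forbidden long hole; otherwise $\ell = 2$, $r = 1$ and $C = s_1 z_1 p_1 z_2 s_2$ is a $5$-hole, and any vertex $d' \in N(s_1) \cap D'$ (which exists because $D'$ is a full component for $S$) is adjacent to $s_1$, possibly to the consecutive vertex $s_2$, and to none of $z_1, p_1, z_2 \in D$, yielding a forbidden extended $C_5$. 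The main subtlety is the first step, ensuring that two non-adjacent vertices of $Z$ can both be removed without disconnecting the set; once the block-cut tree argument supplies these, the rest is a routine short-hole / long-hole dichotomy combined with an extended-$C_5$ extension.
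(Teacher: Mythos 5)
Your proof is correct and has the same overall skeleton as the paper's: locate two non-adjacent vertices $z_1, z_2 \in Z$ whose removal leaves $Z$ connected, use minimality of $Z$ to extract ``private'' neighbors $s_1, s_2 \in S$, close an induced cycle through a second full component $D'$, and run the length-$\geq 6$-hole vs.\ extended-$C_5$ dichotomy. The one genuine difference is the device for locating the non-adjacent, non-cut pair. The paper takes $p_1, \ldots, p_t$ to be a longest induced path in $G[Z]$; maximality of $P$ simultaneously delivers non-adjacency of the endpoints, $t \geq 3$, and the fact that $Z \setminus \{p_1\}$ and $Z \setminus \{p_t\}$ stay connected (if $p_1$ were a cut vertex, a neighbor of $p_1$ in a component of $Z - p_1$ not containing $p_2$ would extend the path). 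You instead appeal to the block-cut tree of $G[Z]$: two distinct leaf blocks each supply a non-cut vertex, and non-cut vertices from distinct blocks cannot be adjacent; you then separately take a shortest $z_1$--$z_2$ path to get $P$. Both are sound. The paper's trick is more compact and folds the choice of $P$ into the choice of the pair, at the cost of the small non-obvious claim that longest-induced-path endpoints are non-cut; your version makes the non-cut requirement explicit and uses standard block-cut-tree facts, at the cost of a 2-connected vs.\ not case split and of choosing $P$ afterwards. Everything else -- the identities $N(s_i) \cap Z = \{z_i\}$, the induced-ness check across $S, D, D'$, the accounting $\ell + r + 2 \geq 5$, and the extended-$C_5$ via $d' \in N(s_1) \cap D'$ in the length-5 case -- matches the paper.
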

\begin{proof}
See  Figure~\ref{fig:31} for an illustration.
Suppose that $Z$ is not a clique.
Let $P = p_1-\hdots-p_t$ be an induced path in $Z$ of maximum possible length.
Then $t \geq 3$.
The maximality of $P$ implies that the sets $Z \setminus \{p_1\}$ and $Z \setminus \{p_t\}$ induce connected subgraphs. The minimality of $Z$ implies that there exist $s_1$ and $s_t$ in $S$ such that $s_1 \in S \cap (N(p_1) \setminus N(Z \setminus \{p_1\}))$ and $s_t \in S \cap (N(p_t) \setminus N(Z \setminus \{p_t\}))$. Let $Q$ be a shortest path from $s_1$ to $s_t$ through a full component $D' \neq D$ for $S$. Since $G \in \ourclass$, and thus $s_1 - p_1 - P - p_t - s_t - Q - s_1$ is not a hole of length at least $6$, we conclude that $t=3$ and $s_1$ is adjacent to $s_t$.
Let $d' \in D'$ be a neighbor of $s_1$. Now $G[s_1,s_t,p_1,p_2,p_3,d']$ is an
extended $C_5$, a contradiction. This proves that $Z$ is a clique.
\end{proof}


\begin{lemma}
\label{lem:f(Z)_is_a_clique}
 Let $D$ be a full component of $S$ and let $Z$ be as in Lemma \ref{lem:Z_is_a_clique}. Then, for every $z \in Z$ there exists $f(z) \in S$ such that $z$ is the unique neighbor of $f(z)$ in $Z$.
 \end{lemma}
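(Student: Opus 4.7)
The plan is to obtain $f(z)$ as a direct consequence of the minimality of $Z$, leveraging that $Z$ is a clique by Lemma \ref{lem:Z_is_a_clique}. Fix $z \in Z$. If $|Z|=1$, then $z$ is the only vertex of $Z$, so every $s \in S$ is trivially adjacent to $z$ as its unique neighbor in $Z$, and I set $f(z):=s$ for any such $s$ (which exists since $S$ is a minimal separator and hence non-empty).

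For the main case $|Z| \geq 2$, since $Z$ is a clique the set $Z \setminus \{z\}$ is itself a non-empty clique, and hence connected. By the minimality of $Z$ among connected subsets of $D$ whose neighborhood covers $S$, the set $Z \setminus \{z\}$ fails to have this property, so there exists $s \in S$ with $s \notin N(Z \setminus \{z\})$. Since $s \in S \subseteq N(Z)$, the vertex $s$ has some neighbor in $Z$, and that neighbor must therefore be $z$. Setting $f(z) := s$ gives exactly the required property.

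I do not expect any real obstacle here: this is essentially the same minimality extraction trick that produces $s_1$ and $s_t$ in the proof of Lemma \ref{lem:Z_is_a_clique}. The only care required is to verify that $Z \setminus \{z\}$ is a legitimate competitor for the minimality hypothesis, i.e.\ that it is connected whenever $|Z| \geq 2$; this is automatic because $Z$ is a clique, so removing any single vertex still leaves a clique, which is connected.
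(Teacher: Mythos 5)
Your proof is correct and follows essentially the same approach as the paper: observe that $Z \setminus \{z\}$ is connected because $Z$ is a clique (Lemma~\ref{lem:Z_is_a_clique}), and then the minimality of $Z$ produces a vertex $f(z) \in S$ whose unique neighbor in $Z$ is $z$. The explicit $|Z|=1$ case split you add is harmless but unnecessary, since for $|Z|=1$ the set $Z\setminus\{z\}$ is empty and cannot have $N(Z\setminus\{z\})=S$, so the main argument covers it too.
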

\begin{proof}
By Lemma \ref{lem:Z_is_a_clique}, $Z$ is a clique, so $Z \setminus \{z\}$ is connected for all $z \in Z$. By the minimality of $Z$, it follows that for every $z \in Z$ there exists $f(z) \in S$ such that
$z$ is the unique neighbor of $f(z)$ in~$Z$.
\end{proof}

Let $S$ be a minimal separator of $G$ and let $L, R \in \cc(G-S)$ be full components for $S$. Then, there exists $Z \subseteq L$ such that $Z$ is a clique, $N(Z) = S$, and $(N(z) \cap S) \setminus N(Z \setminus \{z\}) \neq \emptyset$  for all $z \in Z$. Similarly, there exists $Z' \subseteq R$  such that $Z'$ is a clique, $N(Z') = S$, and $(N(z) \cap S) \setminus N(Z' \setminus \{z\}) \neq \emptyset$  for all $z \in Z'$. Let $f: Z \cup Z' \to S$ be as defined in Lemma \ref{lem:f(Z)_is_a_clique}, so $f(z) \in (N(Z) \cap S) \setminus N(Z \setminus \{z\})$ for $z \in Z$ and $f(z') \in (N(Z') \cap S) \setminus N(Z' \setminus \{z'\})$ for $z' \in Z'$. 

For every $z \in Z$, recall that $f(z) \in S$ and denote by $g(z) \in R$ an arbitrarily chosen
neighbor of $f(z)$ in $R$. Similarly, for every $z \in Z'$ denote by $g(z) \in L$ an
arbitrarily chosen neighbor of $f(z)$ in~$L$. 

\begin{figure}[tb]
\begin{center}
\includegraphics{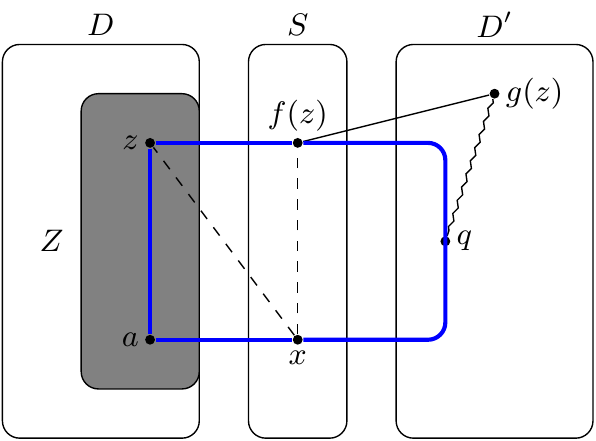}
\caption{Proof of Lemma~\ref{lem:s_hits_all_rungs}. The edge $g(z)q$ may not be present.}\label{fig:34}
\end{center}
\end{figure}
\begin{lemma} 
\label{lem:s_hits_all_rungs}
Let $x \in S$. Then, for every $z \in Z \cup Z'$, it holds that $N(x) \cap \{z, f(z),g(z)\} \neq \emptyset$.
\end{lemma}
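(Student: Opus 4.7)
My plan is to argue by contradiction: suppose $x \in S$ is non-adjacent to each of $z, f(z), g(z)$, and construct either a hole of length at least $6$ or an extended $C_5$ in $G$, contradicting $G \in \ourclass$. By the evident symmetry between $Z$ and $Z'$ (swapping also $L$ and $R$), I may assume $z \in Z$, so $f(z) \in S \cap N(z)$, $g(z) \in R \cap N(f(z))$, and $f(z)$ has no neighbor in $Z \setminus \{z\}$.

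First I would gather the vertices of the cycle. Since $x \in S = N(Z)$ there exists $z_0 \in Z$ with $x z_0 \in E(G)$; by assumption $z_0 \neq z$, and since $Z$ is a clique by Lemma~\ref{lem:Z_is_a_clique}, also $z z_0 \in E(G)$. Because $R$ is a full component for $S$, $x$ has a neighbor $q \in R$, with $q \neq g(z)$ as $x$ is non-adjacent to $g(z)$. Since $R$ is connected and contains both $g(z)$ and $q$, I may choose a shortest $(g(z),x)$-path $P$ whose internal vertices lie in $R$; write it as $P = r_1 - r_2 - \cdots - r_l - x$ with $r_1 = g(z)$, $r_l = q$, and $l \geq 2$. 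Shortness of $P$ forces it to be induced and $x$ to be non-adjacent to $r_1, \ldots, r_{l-1}$.

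The central object is the cycle
\[
C := x - z_0 - z - f(z) - r_1 - r_2 - \cdots - r_l - x,
\]
on $l + 4 \geq 6$ vertices. Routine checks, using the standing non-adjacencies of $x$, the fact that $z_0, z \in L$ are anticomplete to $R$, the defining property $f(z) \notin N(Z \setminus \{z\})$ (so $f(z) z_0 \notin E(G)$), and the inducedness of $P$, eliminate every potential chord of $C$ except those of the form $f(z) r_i$ with $2 \leq i \leq l$. If no such chord exists, $C$ is already a hole of length at least $6$ and we are done.

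The main delicate step is to handle the remaining possible chords. Let $j$ be the largest index in $\{2, \ldots, l\}$ with $f(z) r_j \in E(G)$, provided such $j$ exists. If $j \leq l-1$, shortcutting $C$ through $f(z) r_j$ yields the cycle $x - z_0 - z - f(z) - r_j - r_{j+1} - \cdots - r_l - x$ of length $l - j + 5 \geq 6$; the maximality of $j$ together with the earlier chord checks and the inducedness of $P$ make this cycle induced, giving a hole of length $\geq 6$. If instead $j = l$, so $f(z) \sim q$, then the $5$-cycle $x - z_0 - z - f(z) - q - x$ is induced, and adjoining $g(z)$ produces an extended $C_5$: $g(z)$ is adjacent to $f(z)$ by definition, possibly also to the consecutive vertex $q$, but is non-adjacent to $z, z_0$ (both in $L$) and to $x$ (by assumption). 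In each situation we obtain the promised contradiction, so $N(x) \cap \{z, f(z), g(z)\} \neq \emptyset$ as required.
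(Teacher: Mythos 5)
Your proof is correct and takes essentially the same approach as the paper: form a cycle using two vertices of $Z$, the vertex $f(z)$, and a path through $R$ back to $x$, yielding a long hole unless a $C_5$ arises, in which case adjoining $g(z)$ gives an extended $C_5$. The only cosmetic difference is that the paper takes $P$ to be a shortest $(x, f(z))$-path through $R$ rather than a shortest $(g(z), x)$-path; that choice makes the resulting cycle automatically induced and so avoids the case analysis on chords $f(z)r_i$ that you need.
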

\begin{proof}
See Figure~\ref{fig:34} for an illustration.
Let $z \in Z$; the proof for $z \in Z'$ is symmetrical.
The claim is immediate if $x = f(z)$ as $z, g(z) \in N(f(z))$, so assume otherwise.
Suppose that $x$ is anticomplete to $\{z,f(z),g(z)\}$.
Since $N(Z) \supseteq S$, there exists $a \in Z$ such that $xa \in E(G)$. Let $P$ be a shortest path from $x$ to $f(z)$ through $R$.
Then, $z - f(z) - P - x - a - z$ is a hole of length at least six unless $P$ is of length exactly $2$.
If this is the case, then let $q$ be the middle vertex of $P$.
Note that $q \neq g(z)$ as $g(z)$ is nonadjacent to $x$. 
Then $z - f(z) - q - x - a - z$ is a $C_5$. Furthermore, $g(z)$ is adjacent to $f(z)$ and
possibly also $q$. Hence, $G[\{a,z,f(z),g(z),q,x\}]$ is an extended $C_5$, a contradiction.
\end{proof}
 
The set $\mathcal{F}_0$ constructed in Theorem~\ref{thm:seps_in_nbrhd_single_vtx}
contains every primitive separator of $G$.
Therefore, we assume that $S$ is a non-primitive separator, so $|Z|, |Z'| > 1$. Let $a_1, a_2 \in Z$ be distinct. For $i \in \{1,2\}$ let  $b_i = f(a_i)$ and let $r_i = g(a_i)$.
Similarly, let $d_1, d_2 \in Z'$ be distinct, and let $c_i = f(d_i)$ and $l_i = g(d_i)$; see Figure~\ref{fig:W}.
Note that $b_1 \neq b_2$ and $c_1 \neq c_2$ but it may happen that $r_1=r_2$ or $l_1=l_2$.

\begin{figure}[tb]
\begin{center}
\includegraphics{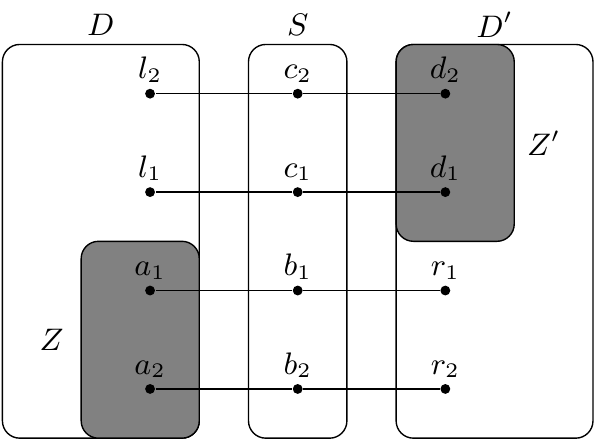}
\caption{Definition of the set $W$.}\label{fig:W}
\end{center}
\end{figure}

Define $W := \{a_1, a_2, b_1, b_2, r_1,r_2,c_1, c_2, d_1, d_2,l_1,l_2\}$. A \emph{profile} is a subset $T \subseteq W$ that meets each of the sets $\{a_1, b_1,r_1\}, \{a_2, b_2,r_2\}, \{c_1, d_1,l_1\}, \{c_2, d_2,l_2\}$. A profile $T$ is \emph{$L$-ambiguous} if $T \subseteq \{a_1, a_2, b_1, b_2, c_1, c_2,l_1,l_2\}$, and \emph{$R$-ambiguous} if $T \subseteq \{b_1, b_2, c_1, c_2, d_1, d_2,r_1,r_2\}$. A profile is \emph{strictly L-ambiguous} if it is $L$-ambiguous and not $R$-ambiguous, and \emph{strictly R-ambiguous} if it is $R$-ambiguous and not $L$-ambiguous.

A few remarks are in place. Lemma~\ref{lem:s_hits_all_rungs} asserts that for every $x \in S$, the set $N(x) \cap W$ is a profile. 
Observe that for every $x \in \{b_1, b_2, c_1, c_2\}$, the profile $N(x) \cap W$ is neither $L$- nor $R$-ambiguous.
Also, note that $\{b_1,b_2,c_1,c_2\}$ is the unique profile that is both $L$- and $R$-ambiguous. 

Let $Z_R$ be the set containing every vertex that is complete to $\{d_1, d_2\}$ and anticomplete to $\{c_1, c_2, l_1, l_2\}$. Similarly, let $Z_L$ be the set containing every vertex that is complete to $\{a_1, a_2\}$ and anticomplete to $\{b_1, b_2, r_1, r_2\}$. We call $Z_R$ and $Z_L$ the \emph{measuring sets} associated to $W$.

\begin{lemma}
\label{lem:s_nbrs_measuring_sets}
If $x \in S$ and $N(x) \cap W$ is an $R$-ambiguous profile, then $x$ has a neighbor in $Z_L \cap L$. Similarly, if $x \in S$ and $N(x) \cap W$ is an $L$-ambiguous profile, then $x$ has a neighbor in $Z_R \cap R$. 
\end{lemma}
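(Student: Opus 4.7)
The plan is to exploit the fact that the clique $Z \subseteq L$ (given by Lemma~\ref{lem:Z_is_a_clique}) already supplies the vertex we need, and symmetrically $Z' \subseteq R$ takes care of the other direction. The whole argument is short and purely combinatorial; no hole-chasing is required at this stage.

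I would prove the first statement and observe the second is symmetric. Suppose $N(x) \cap W$ is $R$-ambiguous, so $N(x) \cap W \subseteq \{b_1,b_2,c_1,c_2,d_1,d_2,r_1,r_2\}$; in particular $x$ is not adjacent to $a_1$ or $a_2$. Since $x \in S = N(Z)$, there is some $z \in Z$ with $xz \in E(G)$. Because $x$ is nonadjacent to $a_1,a_2$, necessarily $z \notin \{a_1,a_2\}$, and because $Z$ is a clique (Lemma~\ref{lem:Z_is_a_clique}) containing $a_1,a_2$, this forces $z$ to be adjacent to both $a_1$ and $a_2$.

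Next I would use Lemma~\ref{lem:f(Z)_is_a_clique}: $b_i = f(a_i)$ means $a_i$ is the \emph{unique} neighbor of $b_i$ in $Z$, so since $z \in Z \setminus \{a_1,a_2\}$, neither $b_1$ nor $b_2$ is adjacent to $z$. Finally, $z \in Z \subseteq L$ while $r_1,r_2 \in R$, and $L,R$ are distinct components of $G - S$, so $z$ is nonadjacent to $r_1$ and $r_2$. Thus $z$ is complete to $\{a_1,a_2\}$ and anticomplete to $\{b_1,b_2,r_1,r_2\}$, i.e., $z \in Z_L$, and $z \in L$, so $x$ has the desired neighbor in $Z_L \cap L$. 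The $L$-ambiguous case is verbatim with $Z',R,d_i,c_i,l_i$ in place of $Z,L,a_i,b_i,r_i$.

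I do not expect any real obstacle: the three ingredients (the clique structure of $Z$, the uniqueness property defining $f$, and the fact that $L,R$ lie in different components of $G-S$) combine immediately. The only thing worth being careful about is not to confuse which of the twelve vertices live in $L$ and which in $R$ when checking the anticompleteness conditions.
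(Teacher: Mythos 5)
Your proof is correct and essentially identical to the paper's: both pick a neighbor $z$ of $x$ in $Z$ with $z \notin \{a_1,a_2\}$ (forced by $R$-ambiguity), then verify $z \in Z_L$ using the clique structure of $Z$, the uniqueness in the definition of $f$, and the fact that $L$ and $R$ are distinct components of $G-S$. The only difference is cosmetic: you spell out why $z$ can be chosen outside $\{a_1,a_2\}$, a step the paper leaves implicit.
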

\begin{proof}
Let $x \in S$ and let $N(x) \cap W$ be an $R$-ambiguous profile. Because $x \in S$ and $N(Z) \cap S = S$, there exists $y \in Z$ such that $xy \in E(G)$ and $y \neq a_1, a_2$. Then, $y$ is complete to $\{a_1, a_2\}$. Furthermore, $y$ is anticomplete to $\{b_1, b_2\}$ by the definition of $f(\cdot)$
and $y$ is anticomplete to $\{r_1,r_2\}$ as $y \in L$. Hence, $y \in Z_L$. Therefore, $x$ has a neighbor in $Z_L \cap L$. By symmetry, if $N(x) \cap W$ is an $L$-ambiguous profile, then $x$ has a neighbor in $Z_R \cap R$. 
\end{proof}

\begin{figure}[tb]
\begin{center}
\includegraphics{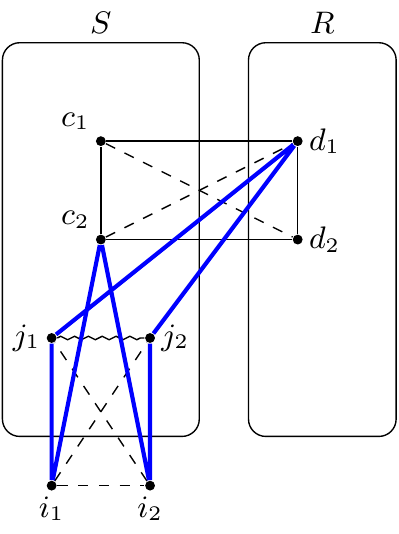}
\caption{Proof of Lemma~\ref{lem:i_nbrs_comparable} in the case 
  when $c_2$ is a common neighbor of $i_1$ and $i_2$;
  otherwise we use also $l_2$ to connect $i_1$ and $i_2$.
The edge $j_1j_2$ may be present.}\label{fig:36}
\end{center}
\end{figure}

\begin{lemma}
\label{lem:i_nbrs_comparable}
Let $I$ be an independent set of $G$. Suppose $i_1, i_2 \in I$ and $N(i_1) \cap W$ and $N(i_2) \cap W$ are both $L$-ambiguous profiles. Then, $N(i_1) \cap Z_R$ and $N(i_2) \cap Z_R$ are comparable in the inclusion order. Similarly, suppose $i_1, i_2 \in I$ and $N(i_1) \cap W$ and $N(i_2) \cap W$ are both $R$-ambiguous profiles. Then, $N(i_1) \cap Z_L$ and $N(i_2) \cap Z_L$ are comparable in the inclusion order.  
\end{lemma}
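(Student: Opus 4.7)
The plan is to argue by contradiction, producing either an induced hole of length at least $6$ or an extended $C_5$, each forbidden since $G\in\ourclass$. The $L$-ambiguous and $R$-ambiguous halves of the statement are symmetric under swapping $L\leftrightarrow R$, $Z\leftrightarrow Z'$, and correspondingly renaming the pieces of $W$, so it suffices to treat the $L$-ambiguous case.

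Suppose $N(i_1)\cap Z_R$ and $N(i_2)\cap Z_R$ are incomparable, and pick $j_1\in (N(i_1)\setminus N(i_2))\cap Z_R$ and $j_2\in (N(i_2)\setminus N(i_1))\cap Z_R$. Since $j_1,j_2\in Z_R$, both are complete to $\{d_1,d_2\}$ and anticomplete to $\{c_1,c_2,l_1,l_2\}$, so $j_1-d_1-j_2$ is an induced path through $R$. On the other side, each profile $N(i_h)\cap W$ is $L$-ambiguous and hence disjoint from $\{d_1,d_2\}$, but must meet $\{c_2,d_2,l_2\}$; therefore each of $i_1,i_2$ is adjacent to $c_2$ or to $l_2$.

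I would then split into two cases. If some $x\in\{c_2,l_2\}$ is a common neighbor of $i_1$ and $i_2$, form the cycle $C=i_1-j_1-d_1-j_2-i_2-x-i_1$ of length $6$. Otherwise, after possibly swapping $i_1$ and $i_2$, assume $c_2\in N(i_1)\setminus N(i_2)$ and $l_2\in N(i_2)\setminus N(i_1)$; using the edge $c_2 l_2\in E(G)$ (which exists because $l_2=g(d_2)$ is a neighbor of $c_2=f(d_2)$ by construction), form the cycle $C=i_1-j_1-d_1-j_2-i_2-l_2-c_2-i_1$ of length $7$.

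The main (and most tedious) step is verifying that the only potential chord of $C$ is $j_1j_2$: edges from $j_1,j_2$ to $\{c_1,c_2,l_1,l_2\}$ are excluded because $j_1,j_2\in Z_R$; $d_1l_2\notin E(G)$ since $d_1\in R$ and $l_2\in L$ lie in different components of $G-S$; $d_1c_2\notin E(G)$ since $c_2=f(d_2)$ has $d_2$ as its unique neighbor in $Z'$; edges from $i_1,i_2$ to $d_1$, and in the $7$-cycle case to the vertex of $\{c_2,l_2\}$ not yet assigned, are ruled out by $L$-ambiguity and the case hypothesis; finally $i_1i_2, j_1i_2, j_2i_1$ are non-edges by the independence of $I$ and by the choice of $j_1,j_2$. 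Once this is established, the proof finishes by a four-way case split: if $j_1j_2\notin E(G)$, then $C$ itself is an induced hole of length $6$ or $7$; if $j_1j_2\in E(G)$, then in the $6$-cycle case $V(C)$ induces a $5$-cycle $i_1-j_1-j_2-i_2-x-i_1$ with $d_1$ attached to the consecutive vertices $j_1,j_2$, an extended $C_5$; and in the $7$-cycle case $V(C)\setminus\{d_1\}$ induces a $6$-hole. All four outcomes contradict $G\in\ourclass$.
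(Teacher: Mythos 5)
Your proof is correct and follows essentially the same argument as the paper: assume incomparability, choose $j_1,j_2\in Z_R$ with the cross non-adjacencies, route one side of a cycle through $d_1$ and the other through $\{c_2,l_2\}$, and then rule out the resulting cycle as a long hole or an extended $C_5$. The paper compresses the case split by defining $P$ to be the edge $j_1j_2$ when it exists (rather than always using $j_1\text{-}d_1\text{-}j_2$ and later deleting $d_1$), and it only invokes $d_1$ to manufacture the extended $C_5$ in the single exceptional case; your version unpacks the same four outcomes and is more explicit about verifying chordlessness, but the underlying configuration and contradictions are identical.
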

\begin{proof}
See Figure~\ref{fig:36} for an illustration.
Let $i_1, i_2 \in I$ and suppose $N(i_1) \cap W$ and $N(i_2) \cap W$ are both $L$-ambiguous profiles. Suppose for sake of contradiction that there exist $j_1, j_2 \in Z_R$ such that $i_1j_1, i_2j_2 \in E(G)$ and $i_1j_2, i_2j_1 \not \in E(G)$. Let $P$ be the edge $j_1j_2$ if $j_1j_2 \in E(G)$, and the path $j_1 - d_1 - j_2$ otherwise.
Recall that $N(i_1) \cap W$ and $N(i_2) \cap W$ are profiles, and since $i_1, i_2$ are not adjacent to $d_2$, each of them must have a neighbors in $\{c_2,l_2\}$.
Let $Q$ be a shortest path from $i_1$ to $i_2$ through $\{c_2,l_2\}$; note that $Q$ is
of length $2$ or $3$. Furthermore, by the definition of $Z_R$, the set $\{c_2,l_2\}$ is anticomplete
to $\{j_1,j_2\}$. 
Then $i_1 - j_1 - P - j_2 - i_2 - Q - i_1$ is a hole of length at least six unless
$j_1j_2$ is an edge and $Q$ is of length $2$. 
However, then $i_1 - j_1 - j_2 - i_2 - Q - i_1$, together with $d_1$, induce an extended
$C_5$, a contradiction.

This proves that no such $i_1, i_2, j_1, j_2$ exist, and therefore, $N(i_1) \cap Z_R$ and $N(i_2) \cap Z_R$ are comparable in the inclusion order. By symmetry, if $N(i_1) \cap W$ and $N(i_2) \cap W$ are both $R$-ambiguous profiles, then $N(i_1) \cap Z_L$ and $N(i_2) \cap Z_L$ are comparable in the inclusion order. 
\end{proof}

Now, let $F$ be a $k$-colorable induced subgraph of $G$.
Fix some $k$-coloring of $F$, and let $I_1, \ldots, I_k$ be the partition of $V(F)$ into color classes.
Let $S$ be a minimal separator of $G$.
Recall that our goal is to construct an $F$-container $\sepcont$ of $S$.
For $1 \leq j \leq k$, let $i^j_L \in I_j \setminus (S \cup R)$ be such that $N(i^j_L) \cap W$ is $L$-ambiguous and $N(i^j_L) \cap Z_R$ is inclusion-wise maximal among all vertices of $I_j \setminus (S \cup R)$ with $L$-ambiguous neighbor sets in $W$.
Similarly, let $i^j_R \in I_j \setminus L$ be such that $N(i^j_R) \cap W$ is $R$-ambiguous, and $N(i^j_R) \cap Z_L$ is inclusion-wise maximal among all vertices of $I_j \setminus (S \cup L)$ with $R$-ambiguous neighbor sets in $W$.
We set $i^j_L := \bot$ ($i^j_R := \bot$) if $I_j \setminus (S \cup R) = \emptyset$ ($I_j \setminus (S \cup L) = \emptyset$, respectively)
and in what follows use the convention that $N(\bot) := \emptyset$. 

Let $\sepcont$ be the set containing the following vertices: 
\begin{itemize}
    \item the vertices $b_1, b_2, c_1, c_2$
    \item all vertices $v$ such that $N(v) \cap W$ is an unambiguous profile 
    \item all vertices $v$ such that $N(v) \cap W$ is a strictly $L$-ambiguous profile and $v$ has a neighbor in $Z_R \setminus \bigcup_{j=1}^{k}N(i^j_L)$
    \item all vertices $v$ such that $N(v) \cap W$ is a strictly $R$-ambiguous profile and $v$ has a neighbor in $Z_L \setminus \bigcup_{j=1}^{k} N(i^j_R)$
    \item all vertices $v$ such that $N(v) \cap W$ is $L$-ambiguous and $R$-ambiguous, $v$ has a neighbor in $Z_R \setminus \bigcup_{j=1}^{k}N(i^j_L)$, and $v$ has a neighbor in $Z_L \setminus \bigcup_{j=1}^{k}N(i^j_R)$
\end{itemize}

\begin{lemma}
\label{lem:s_subset_container}
$\sepcont$ is an $F$-container for $S$. 
\end{lemma}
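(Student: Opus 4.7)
The goal is to verify the two conditions defining an $F$-container: $S \subseteq \sepcont$, and $\sepcont \cap V(F) \subseteq S$ (the reverse inclusion on $V(F)$ following from the first). Both halves will proceed by a case analysis on the shape of the profile $N(\cdot)\cap W$, which Lemma~\ref{lem:s_hits_all_rungs} provides on the side of $S$ and which the definition of $\sepcont$ supplies on the other side.

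For $S \subseteq \sepcont$, I would fix $x \in S$ and split into the four profile types. The vertices $b_1,b_2,c_1,c_2$ lie in $\sepcont$ by the first bullet, unambiguous profiles are handled by the second bullet, and the substantive case is strictly $L$-ambiguous (the strictly $R$-ambiguous case being symmetric, and the both-ambiguous case a simultaneous application of the two arguments). In the strictly $L$-ambiguous case, Lemma~\ref{lem:s_nbrs_measuring_sets} gives a neighbor $z$ of $x$ in $Z_R \cap R$, and the key observation is that $z$ cannot be a neighbor of any $i^j_L$: by construction $i^j_L \notin S \cup R$, whereas every neighbor of $z \in R$ must lie in $R \cup S$. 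Thus $z \in Z_R \setminus \bigcup_j N(i^j_L)$ and $x$ satisfies the third bullet, hence $x \in \sepcont$.

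For $\sepcont \cap V(F) \subseteq S$, I would suppose toward a contradiction that $v \in \sepcont \cap V(F)$ with $v \notin S$, and $v \in I_j$ for some $j$. If $v$ enters $\sepcont$ through an unambiguous profile, then $v$ has a neighbor in $\{r_1,r_2,d_1,d_2\} \subseteq R$ and a neighbor in $\{a_1,a_2,l_1,l_2\} \subseteq L$, forcing $v \in S$ since a vertex of a component $C$ of $G - S$ has all its neighbors in $C \cup S$. The main case is strictly $L$-ambiguous (strictly $R$-ambiguous is symmetric; both-ambiguous is a blend). Here I would first place $v$ in $L$: any component $C \neq L$ of $G-S$ contains none of $a_1,a_2,l_1,l_2$, so for $v \in C$ one has $N(v) \cap W \subseteq \{b_1,b_2,c_1,c_2,r_1,r_2,d_1,d_2\}$, and intersecting with $L$-ambiguity forces $N(v) \cap W \subseteq \{b_1,b_2,c_1,c_2\}$, which is also $R$-ambiguous and thus contradicts strictness. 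Hence $v \in I_j \setminus (S \cup R)$ is a candidate in the selection of $i^j_L$; Lemma~\ref{lem:i_nbrs_comparable} applied to $v$ and $i^j_L$ (both in the independent set $I_j$, both with $L$-ambiguous profiles) together with the maximality of $N(i^j_L) \cap Z_R$ yields $N(v) \cap Z_R \subseteq N(i^j_L) \cap Z_R \subseteq \bigcup_{j'} N(i^{j'}_L)$, contradicting the third-bullet condition. In the both-ambiguous case, $v$ qualifies as a candidate for $i^j_L$ whenever $v \notin R$ and for $i^j_R$ whenever $v \notin L$; since $v \notin S$ forces at least one of these, the corresponding maximality argument contradicts the two-sided condition of the fifth bullet.

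The main obstacle I expect is this component-placement bookkeeping in the $\sepcont \cap V(F) \subseteq S$ direction: one must first rule out $v$ lying in $R$ or in an auxiliary component of $G-S$ before invoking the maximality of $i^j_L$. This is exactly where the \emph{strictness} qualifiers (strictly $L$- vs.\ strictly $R$-ambiguous, as opposed to plainly $L$- or $R$-ambiguous) do useful work, by excluding vertices whose $W$-profile collapses into the universal ``both-ambiguous'' set $\{b_1,b_2,c_1,c_2\}$.
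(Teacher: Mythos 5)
Your proof is correct and follows essentially the same approach as the paper: the inclusion $S \subseteq \sepcont$ via Lemmas~\ref{lem:s_hits_all_rungs} and \ref{lem:s_nbrs_measuring_sets} together with the observation that $i^j_L \notin R \cup S$ (so the witness in $Z_R\cap R$ survives), and the reverse inclusion $\sepcont \cap V(F) \subseteq S$ via Lemma~\ref{lem:i_nbrs_comparable} and the maximality in the choice of $i^j_L, i^j_R$. The only difference is cosmetic: where the paper argues for a general $L$-ambiguous profile that either $u\notin\sepcont$ or $u\in R$ (hence also $R$-ambiguous), and then closes by symmetry via the impossibility of $u\in L\cap R$, you first use strictness of the profile to pin $v$ to one side of $S$ before invoking maximality — a reorganization that lands in the same place.
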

\begin{proof} 

Let $W$, $i^1_R, \ldots, i^{k}_R$, and $i^1_L, \ldots, i^{k}_L$ be as above. First we show that $S \subseteq \sepcont$. Let $s \in S$. By Lemma \ref{lem:s_hits_all_rungs}, $N(s) \cap W$ is a profile for every vertex $s \in S$. If $N(s) \cap W$ is an unambiguous profile, then $s \in \sepcont$. Suppose $N(s) \cap W$ is an $L$-ambiguous profile. By Lemma \ref{lem:s_nbrs_measuring_sets}, there exists $x \in Z_R \cap R$ with $sx \in E(G)$. Since for every $j \in \{1, \ldots, k\}$ we know that $i^j_L \not \in R$ and $i^j_L \not \in S$, it follows that $i^j_Lx \not \in E(G)$ or $i^j_L = \bot$. Therefore, $s$ has a neighbor in
$Z_R \setminus \bigcup_{j=1}^{k}N(i^j_L)$. By symmetry, if $N(s) \cap W$ is an $R$-ambiguous profile, then $s$ has a neighbor in $Z_L \setminus \bigcup_{j=1}^{k}N(i^j_R)$. If $N(s) \cap W$ is strictly $L$-ambiguous, then $s$ has a neighbor in $Z_R \setminus \bigcup_{j=1}^{k}N(i^j_L)$, so $s \in \sepcont$. Similarly, if $N(s) \cap W$ is strictly $R$-ambiguous, then $s$ has a neighbor in $Z_L \setminus \bigcup_{j=1}^{k} N(i^j_R)$, so $s \in \sepcont$. If $N(s) \cap W$ is $L$-ambiguous and $R$-ambiguous, then $s$ has a neighbor in $Z_L \setminus \bigcup_{j=1}^{k} N(i^j_R)$ and a neighbor in $Z_R \setminus \bigcup_{j=1}^{k} N(i^j_L)$, so $s \in \sepcont$.  Therefore, $S \subseteq \sepcont$.

Now we show that $\sepcont \cap V(F) = S \cap V(F)$. Suppose there exists $u \in V(F) \setminus S $;  without loss of generality we may assume that $f \in I_1$.
Recall that for any $v \in V(G)$, if $N(v) \cap W$ is not a profile, then $v \not \in \sepcont$. Therefore suppose that $N(u) \cap W$ is a profile.

We claim that $N(u) \cap W$ is either $L$-ambiguous or $R$-ambiguous.
For contradition, suppose that $N(u) \cap W$ is not ambiguous.
Since $N(u) \cap W$ is not $L$-ambiguous, we observe that $u$ must be adjacent to at least one of $r_1,r_2,d_1,d_2 \in R$.
Similarly, since $N(u) \cap W$ is not $R$-ambiguous, $u$ must be adjacent to at least one of $a_1,a_2,l_1,l_2 \in L$.
Since $u$ has neighbors both in $L$ and in $R$, we conclude that $u \in S$, a contradiction.

If $N(u) \cap W$ is an $L$-ambiguous profile, then Lemma~\ref{lem:i_nbrs_comparable} asserts that $N(u) \cap Z_R$ and $N(i^1_L) \cap Z_R$ are comparable or $i^1_L = \bot$. 
If $i^1_L \neq \bot$ and $N(u) \cap Z_R \subseteq N(i^1_L) \cap Z_R$, then $u \notin \sepcont$ by the definition of $\sepcont$. 
Otherwise, by the choice of $i^1_L$ and since $u \notin S$, we have $u \in R$. This, in turn, implies that $N(u) \cap W$ is an $R$-ambiguous profile. 

By symmetry, we infer that if $N(u) \cap W$ is an $R$-ambiguous profile, then either $u \notin \sepcont$ or $u \in L$. The latter outcome implies that $N(u) \cap W$ is an $L$-ambiguous profile.
Since $u \in L$ and $u \in R$ cannot happen at the same time, we infer that $u \notin \sepcont$. This completes the proof. 
\end{proof}


Now we can finally show an enumeration algorithm for containers of minimal separators.

\begin{theorem}
\label{thm:n^10_sep_containers}
Given an $n$-vertex graph $G \in \ourclass$ and an integer $k$, 
one can in $n^{\Oh(k)}$ time compute a family $\mathcal{F}_1$ of size $\Oh(n^{2k+12})$ such that for every $k$-colorable induced subgraph $F$ of $G$ and every  minimal separator $S$ of $G$
there exists $\sepcont \in \mathcal{F}_1$ such that $\sepcont$ is an
$F$-container for $S$. 
\end{theorem}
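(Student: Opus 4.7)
The plan is to assemble $\mathcal{F}_1$ from two ingredients: the family $\mathcal{F}_0$ from Theorem~\ref{thm:seps_in_nbrhd_single_vtx}, which handles primitive separators, plus a family of candidate containers obtained by exhaustively guessing all the combinatorial data that determine the set $\sepcont$ constructed in Lemma~\ref{lem:s_subset_container}. The point is that once we fix a non-primitive minimal separator $S$, the container $\sepcont$ built in Lemma~\ref{lem:s_subset_container} is a function only of the twelve-vertex set $W = \{a_1,a_2,b_1,b_2,r_1,r_2,c_1,c_2,d_1,d_2,l_1,l_2\}$ and the $2k$ ``maximizer'' vertices $i^1_L,\dots,i^k_L,i^1_R,\dots,i^k_R$ (each of which is either a vertex of $G$ or the symbol $\bot$). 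Neither $S$ nor the coloring $I_1,\dots,I_k$ of $F$ appears anywhere else in the construction: given $W$ one can compute the measuring sets $Z_L,Z_R$ purely from $W$ in polynomial time, and then the case analysis in the definition of $\sepcont$ only refers to $W$, $Z_L$, $Z_R$, and the fixed $2k$-tuple of maximizers.

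Concretely, I would iterate over every labeled $12$-tuple $W \in V(G)^{12}$ and every $2k$-tuple $(i^1_L,\dots,i^k_L,i^1_R,\dots,i^k_R) \in (V(G) \cup \{\bot\})^{2k}$. For each such choice, compute $Z_L$ and $Z_R$ from $W$ using their definitions, then build the candidate $\sepcont$ by running through each vertex $v \in V(G)$ and testing which of the five bullet conditions in the definition of $\sepcont$ it satisfies. Each candidate is produced in polynomial time, and the resulting candidate set together with $\mathcal{F}_0$ forms $\mathcal{F}_1$.

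The size bound follows directly: $|\mathcal{F}_0| = \Oh(n^2)$ by Theorem~\ref{thm:seps_in_nbrhd_single_vtx}, while the number of guessed candidates is at most $n^{12} \cdot (n+1)^{2k} = \Oh(n^{2k+12})$. The total running time is then $n^{\Oh(k)}$ times polynomial per candidate, which is $n^{\Oh(k)}$.

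For correctness, let $S$ be a minimal separator and $F$ a $k$-colorable induced subgraph of $G$. If $S$ is primitive, an $F$-container (in fact $S$ itself) lies in $\mathcal{F}_0$. Otherwise, Lemmas~\ref{lem:Z_is_a_clique}--\ref{lem:f(Z)_is_a_clique} guarantee the existence of the required vertices $a_1,a_2,b_1,b_2,r_1,r_2,c_1,c_2,d_1,d_2,l_1,l_2$ and hence of a ``correct'' set $W$; fixing any $k$-coloring $I_1,\dots,I_k$ of $F$, the vertices $i^j_L,i^j_R$ prescribed in the construction before Lemma~\ref{lem:s_subset_container} are legal choices in our enumeration (using $\bot$ when the corresponding set is empty). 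For this particular guess, our candidate is exactly the set $\sepcont$ of Lemma~\ref{lem:s_subset_container}, which is an $F$-container for $S$. The only mild obstacle is bookkeeping — one must check that the definition of $\sepcont$ makes sense with respect to $W$ alone (no hidden dependence on $L$, $R$, or $S$) and that the existence proofs of $a_1, \dots, l_2$ actually permit such vertices to be enumerated without further global information; both follow immediately by inspection of the preceding lemmas.
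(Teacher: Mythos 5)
Your proposal is correct and is essentially identical to the paper's proof: both take $\mathcal{F}_1$ to be $\mathcal{F}_0$ together with the sets $\sepcont$ obtained by enumerating every labeled $12$-tuple $W$ and every choice of the $2k$ maximizers from $V(G)\cup\{\bot\}$, then invoke Lemma~\ref{lem:s_subset_container} for correctness and $\Oh(n^2) + n^{12}(n+1)^{2k} = \Oh(n^{2k+12})$ for the size bound. Your explicit check that $\sepcont$ (via the profiles and the measuring sets $Z_L,Z_R$) is a function of $W$ and the maximizers alone, with no hidden dependence on $S$, $L$, $R$, or the coloring, is exactly the bookkeeping the paper leaves implicit.
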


\begin{proof}
We first add every separator $S \in \mathcal{F}_0$ to $\mathcal{F}_1$, so $\mathcal{F}_1$ contains all primitive separators of $G$. Next, we enumerate all possible combinations of $W = \{a_1, a_2, b_1, b_2, c_1, c_2, d_1, d_2,l_1,l_2,r_1,r_2\}$, $i^1_R, \ldots, i^{k}_R$, and $i^1_L, \ldots, i^{k}_L$. There are $\Oh(n^{2k+12})$ possibilities for the tuple $(W, i^1_R, \ldots, i^{k}_R, i^1_L, \ldots, i^{k}_L)$. For each tuple $(W, i^1_R, \ldots, i^{k}_R, i^1_L, \ldots, i^{k}_L)$, we add to $\mathcal{F}_1$ the set $\sepcont$ constructed as described above. For every minimal separator $S$ that is not primitive, Lemma \ref{lem:s_subset_container} implies that $ \sepcont$ is an $F$-container for $S$ for the correct choice of $(W,  i^1_R, \ldots, i^{k}_R, i^1_L, \ldots, i^{k}_L)$. Therefore, for every $k$-colorable induced subgraph $F$ of
$G$ and every  minimal separator $S$ of $G$
there exists $\sepcont \in \mathcal{F}_1$ such that $\sepcont$ is an
$F$-container for $S$. 
\end{proof}

In the next section we will need the following strengthening of Theorem \ref{thm:n^10_sep_containers}:

\begin{theorem}
\label{thm:n^11_sep_containers}
Given an $n$-vertex graph $G$ and an integer $k$, 
one can in $n^{\Oh(k)}$ time compute a family $\mathcal{F}_2$ of size $\Oh(n^{2k+13})$ such that for every $k$-colorable induced subgraph $F$ of $G$ and every  minimal separator $S$ of $G$
there exists $\sepcont \in \mathcal{F}_2$ such that $\sepcont$ is an
$F$-container for $S$. 

Furthermore, for every $k$-colorable induced subgraph $F$ of $G$, every 
minimal separator $S$ of $G$ such that $S \notin \mathcal{F}_2$, and 
every two full components $L$ and $R$ of $S$,
there exist $z_\ell,z_r \in S$ with
$N(z_\ell) \cap (V(F) \setminus (S \cup L)) = \emptyset$ and
$N(z_r) \cap (V(F) \setminus (S \cup R)) = \emptyset$.
\end{theorem}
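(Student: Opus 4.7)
The plan is to extend the enumeration in the proof of Theorem~\ref{thm:n^10_sep_containers}. First put $\mathcal{F}_1 \subseteq \mathcal{F}_2$, so the first assertion (that an $F$-container for every minimal separator is in the family) is immediate from Theorem~\ref{thm:n^10_sep_containers}. All effort then goes into the furthermore clause, which I handle by contrapositive: whenever the property fails for a minimal separator $S$ at some pair of full components $L, R$ of $S$, I must guarantee that $S$ itself is placed in $\mathcal{F}_2$.

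Failure of the $L$-side condition means every vertex of $S$ has a neighbor in $V(F) \setminus (S \cup L)$, so the set $V(F) \setminus (S \cup L)$ dominates $S$ in $G$. The plan is to extend the tuples $(W, i^1_R, \ldots, i^k_R, i^1_L, \ldots, i^k_L)$ enumerated in Theorem~\ref{thm:n^10_sep_containers} with one additional witness vertex $v \in V(G)$, yielding $\Oh(n^{2k+13})$ extended tuples, matching the promised size bound. For each extended tuple I add to $\mathcal{F}_2$ a refined candidate, constructed by adjusting the maximality constraints defining $i^j_L$ and $i^j_R$ so that $v$ is treated as a forbidden neighbor lying outside the separator; the intent is that when $v$ is chosen from a dominating set as above, the resulting candidate collapses onto $S$ itself.

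The main obstacle is verifying that the refined enumeration produces $S$ exactly in the bad case. Paralleling Lemma~\ref{lem:s_subset_container}, the key technical claim to establish is that if $V(F) \setminus (S \cup L)$ dominates $S$, then there is a choice of witness $v$ in this dominating set, together with maximality-adjusted vertices $i^j_L$, causing each would-be extra element of $\sepcont$ (beyond $S$) to be dropped by the refined construction. Applying the symmetric argument to the $R$-side settles the furthermore clause in full. I expect the bookkeeping to juggle the ambiguity profiles and measuring sets alongside the new witness constraint, rather than a fundamentally new structural insight, to be the challenge here.

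Size and runtime bounds are then routine: there are $\Oh(n^{2k+13})$ extended tuples, each contributing $\Oh(1)$ sets to $\mathcal{F}_2$, and each set is computable in polynomial time, so $|\mathcal{F}_2| = \Oh(n^{2k+13})$ and the total running time is $n^{\Oh(k)}$, as required.
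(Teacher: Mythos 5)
Your high-level framing is right: assume the furthermore condition fails and conclude $S \in \mathcal{F}_2$. But the concrete mechanism you propose — enumerating one extra witness vertex $v$ and tweaking the maximality rule for the $i^j_L$ so that the construction ``collapses onto $S$'' — is not carried out, and you flag the crucial step (``the resulting candidate collapses onto $S$ itself'') as something you only ``expect'' to work. There is no apparent reason a single witness vertex would force $\sepcont$ to shrink to $S$: the set $V(F) \setminus (S \cup L)$ dominating $S$ can be large, and the construction of $\sepcont$ in Theorem~\ref{thm:n^10_sep_containers} is governed by $k$ vertices $i^1_L,\dots,i^k_L$ together with $k$ vertices $i^1_R,\dots,i^k_R$, none of which your extra witness interacts with in a specified way. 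So the proposal has a genuine gap exactly at the step you identify as the challenge.

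The paper's proof avoids the ``collapse $\sepcont$ to $S$'' route entirely. It sets
$\mathcal{F}_2 := \mathcal{F}_1 \cup \{N(D) : D \in \cc(G - \sepcont),\ \sepcont \in \mathcal{F}_1\}$,
i.e., it pre-adds every adhesion of every container in $\mathcal{F}_1$, giving the claimed $\Oh(n^{2k+13})$ size. Then it argues directly: take the $F$-container $\sepcont \in \mathcal{F}_1$ produced by the Theorem~\ref{thm:n^10_sep_containers} construction for the given $L, R$, a $k$-coloring of $F$, and the appropriate tuple. If $\sepcont \cap L = \emptyset$, then $L$ is a full component of $G - \sepcont$ with $N(L) = S$, so $S$ would be one of the added adhesions and $S \in \mathcal{F}_2$. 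Hence $S \notin \mathcal{F}_2$ forces $\sepcont \cap L \neq \emptyset$ (and symmetrically $\sepcont \cap R \neq \emptyset$). Now any $x \in \sepcont \cap L$ was admitted to $\sepcont$ only because $N(x)\cap W$ is $L$-ambiguous and $x$ has a neighbor $z_r \in Z_R \setminus \bigcup_j N(i^j_L)$; since $z_r$ has a neighbor in $L$ (namely $x$) and in $R$ (namely $d_1$), $z_r \in S$, and the maximality of the $i^j_L$ together with Lemma~\ref{lem:i_nbrs_comparable} then give that $z_r$ is anticomplete to $V(F) \setminus (S \cup R)$. The symmetric argument on $R$ produces $z_\ell$. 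In short: rather than trying to regenerate $S$ itself via a modified enumeration, the paper extracts $z_r, z_\ell$ as by-products of the already-built container $\sepcont$, and only needs to add adhesions to cover the degenerate case $\sepcont \cap L = \emptyset$ or $\sepcont \cap R = \emptyset$. You should either adopt that construction or supply a complete argument that your witness-vertex refinement actually yields $S$.
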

\begin{proof}
Let $\mathcal{F}_2 := \mathcal{F}_1 \cup \{N(D) \mid D \in \cc(G - \sepcont), \ \sepcont \in \mathcal{F}_1\}$. For every $\sepcont \in \mathcal{F}_1$, there are at most $n$ components in $\cc(G-\sepcont)$, so there are $\Oh(n^{2k+13})$ elements in $\mathcal{F}_2$. Let $S \notin \mathcal{F}_2$ be a minimal separator of $G$, let $L$ and $R$ be two full components of $S$, and let $F$ be a $k$-colorable induced subgraph of $G$.

Consider the $F$-container $\sepcont$ for $S$ that is added to $\mathcal{F}_1$ 
for $L$, $R$, a $k$-coloring $F_1,F_2,\ldots,F_k$ of $F$,
and a tuple $(W,i^1_R,\ldots,i^k_R,i^1_L,\ldots,i^k_L)$. 
If $\sepcont \cap L = \emptyset$, then, $L \in \cc(G - \sepcont)$ and $N(L) = S$, so $S \in \mathcal{F}_2$.
So $S \notin \mathcal{F}_2$ implies $\sepcont \cap L \neq \emptyset$ and, symmetrically, $\sepcont \cap R \neq \emptyset$. 

Let $x \in \sepcont \cap L$, and let $y \in \sepcont \cap R$. The vertex $x$ was added to $\sepcont$ because $N(x) \cap W$ is an $L$-ambiguous profile and $x$ has a neighbor in $Z_R \setminus \bigcup_{j=1}^{k}N(i^j_L)$. Let $z_r \in Z_R \setminus \bigcup_{j=1}^{k} N(i^j_L)$ such that $xz_r \in E(G)$.
Recall that all vertices from $Z_R$ are adjacent to both $d_1,d_2 \in R$. 
Since $z_r$ is adjacent to a vertex in $L$ (the vertex $x$) and a vertex in $R$ (e.g., the vertex $d_1$), we conclude that $z_r \in S$.
Therefore, $z_r \in (Z_R \cap S) \backslash \bigcup_{j=1}^{k} N(i^j_L)$.

Because $z_r \not \in \bigcup_{j=1}^{k} N(i^j_L)$ and for every $j \in \{1, \ldots, k \}$, the vertex $i^j_L$ is the vertex in
$F_j \setminus (S \cup R)$ whose neighborhood in $Z_R$ is maximal, we deduce that $z_D$ is anticomplete to $V(F) \setminus (S \cup R)$. The definition and reasoning for $z_\ell$ is symmetrical. This finishes the proof.
\end{proof}

\section{Containers for PMCs}\label{sec:containers_PMCs}
Let again $k$ be a fixed constant and $G \in \ourclass$ be an $n$-vertex graph.
In this section, we describe how to construct a set of containers for the potential maximal cliques of $G$.

The \emph{adhesions} of $\Omega$ are the minimal separators $N(D)$ for $D \in \cc(G - \Omega)$.
We say that $\Omega$ is {\em pure} if all adhesions of $\Omega$ are in $\mathcal{F}_2$, i.e., the family of sets given by Theorem~\ref{thm:n^11_sep_containers}. A PMC that is not pure is called {\em impure}.
  
The following two lemmas are slight strengthenings of results from \cite{long-hole-free-subexp}. 
\begin{lemma}[\cite{long-hole-free-subexp}]
\label{lemma:indset_in_PMC_covered}
Let $G \in \mathcal{C}$ and $\Omega \subseteq V(G)$ be a PMC of $G$, and suppose $J \subseteq \Omega$ is an independent set with $|J| > 1$. Then, there exists $D \in \cc(G - \Omega)$ such that $J \subseteq N(D)$. 
\end{lemma}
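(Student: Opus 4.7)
The plan is to argue by induction on $|J|$. The base case $|J|=2$ is immediate from part~(1) of Theorem~\ref{thm:PMC_characterization}: the non-edge between the two vertices of $J$ is covered by some component of $G-\Omega$, which then contains both vertices in its neighborhood.

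For the inductive step, suppose $|J|=m \geq 3$ and the statement holds for all independent sets in $\Omega$ of size $m-1$. First I would apply the inductive hypothesis to each of the $m$ sets $J \setminus \{j_i\}$, obtaining components $D_1,\ldots,D_m \in \cc(G-\Omega)$ with $J \setminus \{j_i\} \subseteq N(D_i)$. If any $D_i$ also happens to satisfy $j_i \in N(D_i)$ then $J \subseteq N(D_i)$ and we are done, so assume $j_i \notin N(D_i)$ for every $i$. Under this assumption, $D_1,\ldots,D_m$ are pairwise distinct, because for $i \neq i'$ we have $j_{i'} \in N(D_i) \setminus N(D_{i'})$.

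The strategy is then to use $D_1, D_2, D_3$ to produce a long induced cycle and contradict $G \in \ourclass$. For each way of writing $\{i,j,k\}=\{1,2,3\}$, let $P_{ij}$ be a shortest path from $j_i$ to $j_j$ whose interior lies in $D_k$; such a path exists because $D_k$ is connected and both $j_i, j_j \in N(D_k)$, and it has length at least $2$ because $j_ij_j \notin E(G)$. Set $C := P_{12} \cup P_{13} \cup P_{23}$. Since the interiors of the three paths lie in pairwise disjoint components of $G-\Omega$, the paths share only their endpoints $j_1, j_2, j_3$, so $C$ is a cycle whose length is the sum of the three path lengths and hence at least $6$.

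The hard part will be confirming that $C$ has no chord. There are three types of potential chord to rule out. An edge between interior vertices of two different $P_{ij}$'s is impossible because those interiors sit in different components of $G - \Omega$. An edge from an endpoint $j_i$ to an interior vertex of $P_{ij}$ or $P_{ik}$ is ruled out by the fact that shortest paths are induced. Finally, an edge from $j_i$ to an interior vertex of the opposite path $P_{jk}$ (whose interior lies in $D_i$) would force $j_i \in N(D_i)$, contradicting our standing assumption. Thus $C$ is an induced hole of length at least $6$, contradicting $G \in \ourclass$, so some $D_i$ must in fact satisfy $j_i \in N(D_i)$, completing the induction.
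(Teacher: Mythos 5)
Your proof is correct and takes a genuinely different route from the paper's. The paper uses a direct extremal argument: it picks $D_1$ maximizing $|N(D_1)\cap J|$, then $D_2$ maximizing $|J\cap N(D_1)\cap N(D_2)|$ subject to $N(D_2)\cap(J\setminus N(D_1))\neq\emptyset$, and from these choices extracts $x,y,z\in J$ and components $D_1,D_2,D_3$ so that each $D_i$ covers exactly two of $\{x,y,z\}$. You instead argue by induction on $|J|$, applying the inductive hypothesis to each $J\setminus\{j_i\}$ and, if no returned component already works, keeping the first three. Both reach the same contradictory configuration — three pairwise-nonadjacent vertices and three pairwise-distinct components, with vertex $i$ outside $N(D_i)$ but the other two inside — and then stitch three internally-disjoint induced paths into a hole of length at least $6$. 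Your inductive setup is cleaner and arguably more transparent, since you avoid the nested maximization and its accompanying case analysis; the paper's extremal argument is self-contained in a single pass and avoids the need to verify that the three returned components are pairwise distinct (you do verify this, correctly, via $j_{i'}\in N(D_i)\setminus N(D_{i'})$). Your chord analysis is complete: you correctly handle chords within a single path (shortest paths are induced), chords between interiors of different paths (different components of $G-\Omega$), and chords from $j_i$ to the interior of the opposite path (would put $j_i$ in $N(D_i)$). One small note: the paper attributes the lemma to \cite{long-hole-free-subexp}, and it is possible your inductive argument coincides with the one there rather than with the proof reproduced in this paper, but relative to the proof actually printed here your route is distinct.
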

\begin{proof}
If $|J| = 2$, the result follows from Theorem \ref{thm:PMC_characterization}, so assume $|J| \geq 3$. Let $D_1 \in \cc(G - \Omega)$ be the component of $G - \Omega$ that maximizes $|N(D_1) \cap J|$, and suppose $J \setminus N(D_1) \neq \emptyset$. Since every nonedge of $J$ is covered by some component, $|N(D_1) \cap J| \geq 2$.
Let $D_2 \in \cc(G - \Omega)$ be the component that maximizes $|J \cap N(D_1) \cap N(D_2)|$ subject to $N(D_2) \cap (J \setminus N(D_1)) \neq \emptyset$.
Since for every $x \in J \cap N(D_1)$ and $y \in J \setminus N(D_1)$ there exists a component covering the nonedge $xy$, such a component $D_2$ exists and $J \cap N(D_1) \cap N(D_2) \neq \emptyset$.
By the choice of $D_2$, there exists $y \in J \cap (N(D_2) \setminus N(D_1))$. By the maximality of $|N(D_1) \cap J|$, there exists $x \in J \cap (N(D_1) \setminus N(D_2))$.
By Theorem~\ref{thm:PMC_characterization}, there exists a component $D_3 \in \cc(G-\Omega)$ covering the nonedge $xy$; note that $D_3 \neq D_1,D_2$. 
By the maximality of $D_2$, as $x \in J \cap ((N(D_1) \cap N(D_3)) \setminus N(D_2))$ and $y \in J \cap ((N(D_2) \cap N(D_3)) \setminus N(D_1))$, there exists $z \in J \cap ((N(D_1) \cap N(D_2)) \setminus N(D_3))$. Let $P_1$ be a shortest path from $x$ to $z$ via $D_1$, let $P_2$ be a shortest path from $z$ to $y$ through $D_2$, and let $P_3$ be a shortest path from $x$ to $y$ through $D_3$.
Then $x - P_1 - z - P_2 - y - P_3 - x$ is a hole of length at least six (see Figure~\ref{fig:41}), a contradiction.
\end{proof}

\begin{figure}[tb]
\begin{center}
\includegraphics{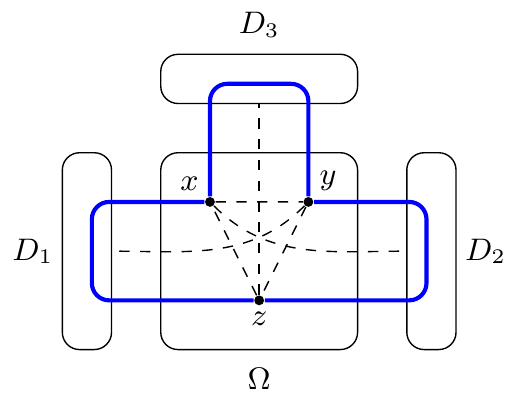}
\caption{Proof of Lemma~\ref{lemma:indset_in_PMC_covered}.}\label{fig:41}
\end{center}
\end{figure}

\begin{figure}[tb]
\begin{center}
\includegraphics{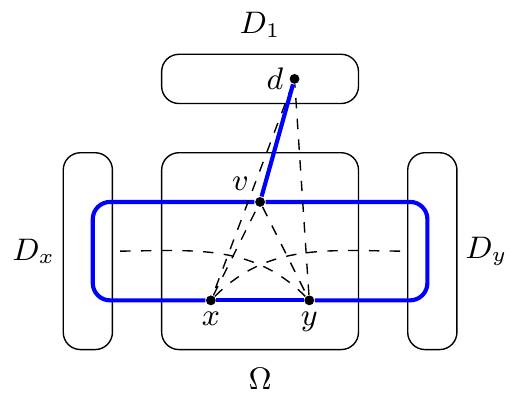}
\caption{Proof of Lemma~\ref{lemma:PMC_minus_nbrhd_covered}.}\label{fig:42}
\end{center}
\end{figure}

\begin{lemma}[\cite{long-hole-free-subexp}]
\label{lemma:PMC_minus_nbrhd_covered}

Let $G \in \mathcal{C}$ , let $\Omega$ be a PMC of $G$, and let $v \in \Omega$ be such that at least one component $D \in \cc(G-\Omega)$
satisfies $v \in N(D)$. Then there exist $D_1,D_2 \in \cc(G - \Omega)$ with $v \in N(D_1) \cap N(D_2)$, and $\Omega \setminus N(v) \subseteq N(D_1) \cup N(D_2)$.
\end{lemma}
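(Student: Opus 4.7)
I would argue by contradiction, mirroring the structure of the proof of Lemma~\ref{lemma:indset_in_PMC_covered} and using $v$ as a common non-neighbor of $W := \Omega \setminus N[v]$ to compensate for the fact that $W$ need not be independent. Set $\mathcal{D}_v := \{D \in \cc(G-\Omega) : v \in N(D)\}$; by Theorem~\ref{thm:PMC_characterization} applied to each nonedge $vu$ with $u \in W$, every $u \in W$ lies in $N(D)$ for some $D \in \mathcal{D}_v$, so the lemma becomes a covering statement about the set system $\{N(D) \cap W : D \in \mathcal{D}_v\}$ by at most two of its sets.

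I would greedily pick $D_1 \in \mathcal{D}_v$ maximizing $|N(D_1) \cap W|$; assuming $W \not\subseteq N(D_1)$, pick $D_2 \in \mathcal{D}_v$ with $N(D_2) \cap (W \setminus N(D_1)) \neq \emptyset$ maximizing $|N(D_2) \cap N(D_1) \cap W|$. Assuming further, for contradiction, that $W \not\subseteq N(D_1) \cup N(D_2)$, choose $w \in W \setminus (N(D_1) \cup N(D_2))$ and a component $D_3 \in \mathcal{D}_v$ with $w \in N(D_3)$, necessarily distinct from $D_1$ and $D_2$. A counting argument identical to that in Lemma~\ref{lemma:indset_in_PMC_covered} then produces $x \in N(D_1) \cap W \setminus N(D_2)$ and $y \in N(D_2) \cap W \setminus N(D_1)$.

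If at least one of the pairs $xy$, $xw$, $yw$ is a nonedge, then the corresponding triple together with $v$ is an independent set in $\Omega$, and Lemma~\ref{lemma:indset_in_PMC_covered} supplies a component $D' \in \mathcal{D}_v$ whose neighborhood covers it; pasting shortest paths through three of the now four components (e.g.\ through $D_1$, $D'$, and $D_2$) yields a hole of length at least $6$ exactly as in the proof of Lemma~\ref{lemma:indset_in_PMC_covered}. This hole is induced because the interiors of the three paths lie in pairwise distinct components of $G-\Omega$ (so no chords between them) and the three corner vertices are pairwise non-adjacent, which contradicts $G \in \ourclass$.

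The remaining, and hardest, case is when $x$, $y$, $w$ are pairwise adjacent in $G$. Here I would build the cycle $v - P_1 - x - y - P_2 - v$ of length $|P_1| + |P_2| + 1 \geq 5$ through $D_1$ and $D_2$, which is induced since the edge $xy$ is chord-free: $x \notin N(D_2)$ and $y \notin N(D_1)$, so the interiors of $P_1, P_2$ have no neighbors on the ``wrong'' side. A length of at least $6$ already contradicts $G \in \ourclass$; a length of exactly $5$ gives a $C_5$ that I would extend to an extended $C_5$ using a neighbor of $v$ inside $D_3$, chosen so as to be adjacent in the cycle only to $v$, or to $v$ and one consecutive cycle vertex. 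The delicate step, and main obstacle, is verifying the required non-adjacencies between this extra vertex and $\{x, y\}$: this forces one to refine the choice of $D_3$ and $w$ (and possibly $D_2$) using the maximality conditions, so as to guarantee $x, y \notin N(D_3)$ or at least that a suitable neighbor of $v$ in $D_3$ exists.
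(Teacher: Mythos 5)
The gap you flag in the ``hardest case'' is real and is precisely where the paper's proof diverges: with the greedy choice of $D_1,D_2$ modeled on Lemma~\ref{lemma:indset_in_PMC_covered}, there is nothing forcing $x,y\notin N(D_3)$, so the extra vertex used to build the extended $C_5$ need not avoid $\{x,y\}$. The paper sidesteps this by reversing the order of choices: it first fixes an \emph{arbitrary} component $D_1$ with $v\in N(D_1)$ and a neighbor $d\in D_1\cap N(v)$, and then looks for a second component covering $\Omega\setminus(N(v)\cup N(D_1))$. All the trouble is now confined to a set that is, by construction, anticomplete to $D_1$, so $d$ automatically has the required non-adjacencies to whatever vertices arise. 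The second ingredient you are missing is the minimal-bad-set trick: instead of greedy maximality over $\mathcal{D}_v$, take $M\subseteq\Omega\setminus(N(v)\cup N(D_1))$ inclusion-minimal with the property that $M\cup\{v\}\not\subseteq N(D)$ for every $D\in\cc(G-\Omega)$. Minimality immediately hands you the two components $D_x,D_y$ (covering $(M\setminus\{y\})\cup\{v\}$ and $(M\setminus\{x\})\cup\{v\}$), together with $x\notin N(D_y)$ and $y\notin N(D_x)$, all in one shot; and if $M$ is independent, $M\cup\{v\}$ is independent, so Lemma~\ref{lemma:indset_in_PMC_covered} contradicts the badness of $M$ directly, with no hole construction needed.

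Your ``independent subcase'' is also shakier than you suggest: you paste paths through $D_1,D',D_2$, but $D'\in\mathcal{D}_v$ means $v\in N(D')$, so $v$ may be adjacent to interior vertices of the path through $D'$, and the cycle need not be chordless. This does not arise in Lemma~\ref{lemma:indset_in_PMC_covered} because there the maximality of $D_2$ is used to find a vertex $z\notin N(D_3)$, and you have no analogous guarantee for $v$. So even the ``easier'' branch needs the reorganization above. The fix for both branches is the same: choose $D_1$ and $d$ first, take a minimal bad set inside $\Omega\setminus(N(v)\cup N(D_1))$, and let minimality do the work.
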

\begin{proof}
Let $D_1 \in \cc(G - \Omega)$ such that $v$ has a neighbor in $D_1$.
Let $d \in D_1 \cap N(v)$.
Suppose that there is no $D   \in \cc(G - \Omega)$ with $v \in N(D)$ such that
$\Omega \setminus (N(v) \cup N(D_1)) \subseteq N(D)$.
Let $M \subseteq \Omega \setminus (N(v) \cup N(D_1)) $ be such that $M \cup \{v\} \not \subseteq N(D)$ for all $D \in \cc(G - \Omega)$, and $M' \cup \{v\} \subseteq N(D)$ for some $D \in \cc(G - \Omega)$ for every $M' \subsetneq M$.
If $M$ is an independent set, $M \cup \{v\}$ is also an independent set, so by Lemma \ref{lemma:indset_in_PMC_covered} we conclude that $M \cup \{v\} \subseteq D$ for some $D \in \cc(G - \Omega)$. Therefore, there exist $x, y \in M$ such that $xy \in E(G)$. 

By the definition of $M$ we know that $(M \setminus \{x\}) \cup \{v\} \subseteq D_y$ for some $D_y \in \cc(G-\Omega)$. Similarly, $(M \setminus \{y\}) \cup \{v\} \subseteq D_x$ for some $D_x \in \cc(G - \Omega)$.
The definition of $M$ implies that $x \notin N(D_y)$ and $y \notin N(D_x)$, so in particular $D_x \neq D_y$. Moreover, $D_1 \neq D_x,D_y$, as $x,y \notin N(D_1)$.
Let $P_x$ be a shortest path from $x$ to $v$ through $D_x$ and $P_y$ be a shortest path from $y$ to $v$ through $D_y$. Then,  $v - P_x - x - y - P_y - v$ is a hole of length at least six
or $G[V(P_x) \cup V(P_y) \cup \{v,d\}]$ is an extended $C_5$ (see Figure~\ref{fig:42}), a contradiction. 
\end{proof}

We now construct a set of $F$-containers for impure  PMCs $\Omega$.

\begin{theorem}
\label{thm:containers_PMCs_bad_adhesions}
Given an $n$-vertex graph $G \in \ourclass$ and an integer $k$,
one can in $n^{\Oh(k)}$ time compute a family $\mathcal{X}_1$ of size $\Oh(n^{8k+54})$ such that for every $k$-colorable induced subgraph $F$ of $G$
and every impure PMC $\Omega$
of $G$, some member of $\mathcal{X}_1$ is an $F$-container for $\Omega$. 
\end{theorem}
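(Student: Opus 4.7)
For each six-tuple $(z_\ell, z_r, C_1^\ell, C_2^\ell, C_1^r, C_2^r) \in V(G)^2 \times \mathcal{F}_2^4$, where $\mathcal{F}_2$ is the family from Theorem~\ref{thm:n^11_sep_containers}, I would add to $\mathcal{X}_1$ the set
$$A := (N[z_\ell] \cup C_1^\ell \cup C_2^\ell) \cap (N[z_r] \cup C_1^r \cup C_2^r).$$
Computing $\mathcal{F}_2$ takes $n^{\Oh(k)}$ time, and since $|\mathcal{F}_2| = \Oh(n^{2k+13})$ the total number of candidates is $\Oh(n^2 \cdot n^{4(2k+13)}) = \Oh(n^{8k+54})$; each $A$ is computable in polynomial time, so the running time stays at $n^{\Oh(k)}$ and the size matches the target.

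To show the family is good, fix an impure PMC $\Omega$ and a $k$-colorable induced subgraph $F$. By impurity there is an adhesion $S = N(D_0) \notin \mathcal{F}_2$ of $\Omega$; set $R := D_0$ and let $L$ be any other full component of $S$ (which exists because $S$ is a minimal separator). The second statement of Theorem~\ref{thm:n^11_sep_containers} applied to $S$, $L$, $R$ produces $z_\ell, z_r \in S \subseteq \Omega$ with
$$N(z_\ell) \cap V(F) \subseteq S \cup L \quad \text{and} \quad N(z_r) \cap V(F) \subseteq S \cup D_0.$$
Both vertices have a neighbor in $D_0 \in \cc(G - \Omega)$, so Lemma~\ref{lemma:PMC_minus_nbrhd_covered} supplies components $D_1^\ell, D_2^\ell, D_1^r, D_2^r \in \cc(G-\Omega)$ with $\Omega \setminus N(z_\ell) \subseteq N(D_1^\ell) \cup N(D_2^\ell)$ and $\Omega \setminus N(z_r) \subseteq N(D_1^r) \cup N(D_2^r)$. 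Choosing $C_i^j \in \mathcal{F}_2$ to be $F$-containers for the adhesions $N(D_i^j)$ (again supplied by Theorem~\ref{thm:n^11_sep_containers}) identifies the good tuple.

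For this tuple, combining the two coverings with $z_\ell, z_r \in \Omega$ gives $\Omega \subseteq N[z_\ell] \cup C_1^\ell \cup C_2^\ell$ and $\Omega \subseteq N[z_r] \cup C_1^r \cup C_2^r$, hence $\Omega \subseteq A$. On the $V(F)$ side, each $C_i^j \cap V(F) = N(D_i^j) \cap V(F) \subseteq \Omega$ by the $F$-container property, while $N[z_\ell] \cap V(F) \subseteq \Omega \cup L$ and $N[z_r] \cap V(F) \subseteq \Omega \cup D_0$ by the containments on $N(z_\ell), N(z_r)$ and the fact that $z_\ell, z_r \in \Omega$. Writing $P := \Omega \cap V(F)$ and distributing,
$$A \cap V(F) \subseteq (P \cup (L \cap V(F))) \cap (P \cup (D_0 \cap V(F))) = P \cup (L \cap D_0 \cap V(F)) = P,$$
since $L$ and $D_0$ are distinct components of $G-S$ and therefore disjoint. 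Together with $\Omega \cap V(F) \subseteq A$, this yields $A \cap V(F) = \Omega \cap V(F)$, and $A$ is an $F$-container for $\Omega$. The main technical point is this geometric cancellation: the only way $F$-vertices outside $\Omega$ could leak into $A$ is through $N(z_\ell)$ or $N(z_r)$, which by the choice of $z_\ell, z_r$ place them inside $L$ or inside $D_0$ respectively, two disjoint components of $G-S$ that the intersection wipes out. Everything else is a mechanical combination of Theorem~\ref{thm:n^11_sep_containers} (for the impurity witnesses and the adhesion containers) with Lemma~\ref{lemma:PMC_minus_nbrhd_covered} (for covering $\Omega$ by $N[v]$ plus two adhesions).
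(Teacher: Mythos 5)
Your proof is correct and follows essentially the same route as the paper: both use the $z_\ell, z_r$ from Theorem~\ref{thm:n^11_sep_containers}, both invoke Lemma~\ref{lemma:PMC_minus_nbrhd_covered} to cover $\Omega$ by neighborhoods of two components around each $z$, and both exploit the disjointness of $L$ and $R$ to kill $V(F)$-leakage. The only difference is the packaging of the container: the paper builds $\widehat{\Omega} := \mathcal{S}^\ell_1 \cup \mathcal{S}^\ell_2 \cup \mathcal{S}^r_1 \cup \mathcal{S}^r_2 \cup (N(z_\ell) \cap N(z_r))$, while you take the intersection $(N[z_\ell] \cup C^\ell_1 \cup C^\ell_2) \cap (N[z_r] \cup C^r_1 \cup C^r_2)$; both are valid and give the same size bound.
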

\begin{proof}
Define
$$\mathcal{X}_1 := \left\{ \left( \bigcup \mathcal{Z} \right) \cup (N(u) \cap N(v)) \mid \mathcal{Z} \subseteq \mathcal{F}_2, |\mathcal{Z}| \leq 4, u, v \in V(G)\right\}.$$
There are $\Oh(n^{2k+13})$ elements in $\mathcal{F}_2$ and $n$ elements in $V(G)$, so $\mathcal{X}_1$ has size $\Oh(n^{8k+54})$.

Suppose $F$ is a $k$-colorable induced subgraph of $G$ and $\Omega$ is an impure PMC of $G$.
Let $S$ be an adhesion of $\Omega$, such that $S \notin \mathcal{F}_2$. Let $L$ be a component of $G-\Omega$
such that $S = N(L)$, and let $R$ be another full component of $S$.
By Theorem~\ref{thm:n^11_sep_containers}, as $S \notin \mathcal{F}_2$, 
there exist $z_\ell, z_r \in  S$ such that
$N(z_\ell) \cap (V(F) \setminus (L \cup S)) = \emptyset$ and
$N(z_r) \cap (V(F) \setminus (R \cup S))=\emptyset$. 
Since $N(L) = S$ and $L \in \cc(G-\Omega)$, each of $z_\ell$ and $z_r$ has a neighbor
in $V(G) \setminus \Omega$.
By Lemma \ref{lemma:PMC_minus_nbrhd_covered}, there exist minimal
separators 
$S_1^{\ell}, S_2^{\ell}, S_1^r,S_2^r$ of $G$, all contained in $\Omega$, such that
$\Omega \setminus N(z_\ell) \subseteq S_1^{\ell} \cup S_2^{\ell}$ and
$\Omega \setminus N(z_r) \subseteq S_1^r \cup S_2^r$.
Then, $\Omega \subseteq S^{\ell}_1 \cup S_2^{\ell} \cup S_1^r \cup S_2^r \cup (N(z_\ell) \cap N(z_r))$. 

For $i=1,2$, pick $\sepcont^{\ell}_i,\sepcont^r_i,  \in \mathcal{F}_2$ such that $\sepcont_i^{\ell}$ is an $F$-container
for $S_i^{\ell}$ and $\sepcont_i^{r}$ is an $F$-container for $S_i^r$.
Consider the set 
$$\widehat{\Omega} := \sepcont_1^{\ell} \cup \sepcont_2^{\ell} \cup \sepcont_1^{r} \cup \sepcont_2^{r} \cup (N(z_\ell) \cap N(z_r)).$$
Clearly, $\Omega \subseteq \widehat{\Omega}$ and $\widehat{\Omega} \in \mathcal{X}_1$.
Since $z_\ell$ is anticomplete to $V(F)  \setminus (S \cup L)$ and $z_r$ is anticomplete to $V(F)  \setminus (S \cup R)$, it follows that $(N(z_\ell) \cap N(z_r)) \cap V(F) \subseteq V(F) \cap S \subseteq \Omega$.
Since $\sepcont^\ell_i$ is an $F$-container for $S^\ell_i$ and $S^\ell_i \subseteq \Omega$, we obtain that $\sepcont^\ell_i \cap V(F) \subseteq \Omega \cap V(F)$ for $i=1,2$;
a symmetric statement holds for $\mathcal{S}^r_i$. 
Hence, $\widehat{\Omega} \cap V(F) \subseteq \Omega \cap V(F)$.
Since $\Omega \subseteq \widehat{\Omega}$, we conclude that $\widehat{\Omega}$ is an $F$-container for $\Omega$. 
\end{proof}

Next, we aim to construct a set $\mathcal{X}_2$ such that every pure PMC $\Omega$ of $G$ belongs to $\mathcal{X}_2$.
To this end, we 
follow a methodology of \emph{survival sequences}, implicit in \cite{BouchitteT01}, and made explicit and formalized in \cite{GrzesikKPP19}. 
We follow the notation of the full version~\cite{GrzesikKPP19arxiv}.
A sequence $\mathcal{S} = (x_1,x_2,\ldots,x_t)$ of distinct vertices of $G$ is a \emph{survival sequence} for a PMC $\pmc$
if for every $0 \leq i \leq t$ the set $\pmc \setminus \{x_1,x_2,\ldots,x_i\}$ is a PMC in the graph $G-\{x_1,x_2,\ldots,x_i\}$. 
We denote $V(\mathcal{S}) = \{x_1,x_2,\ldots,x_t\}$ and we say that $\mathcal{S}$ \emph{ends in} $\pmc \setminus V(\mathcal{S})$, which is a PMC in $G-V(\mathcal{S})$. 
We need the PMC Lifting Lemma from~\cite{GrzesikKPP19,GrzesikKPP19arxiv}.
\begin{lemma}[{PMC Lifting Lemma~\cite[Lemma~22]{GrzesikKPP19arxiv}}]
\label{lem:PMClift}
Let $G$ be a graph and let $\mathcal{S} = (x_1,x_2,\ldots,x_t)$ be a sequence of
distinct vertices of $G$. 
Then for every $\pmc'$ that is a PMC in $G-V(\mathcal{S})$, there exists a unique $\pmc$
that is a PMC in $G$ and $\mathcal{S}$ is a survival sequence for $\pmc$ ending in $\pmc'$.
Moreover, given $G$, $S$, and $\pmc'$, the PMC $\pmc$ can be computed in polynomial time.
\end{lemma}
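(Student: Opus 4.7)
The plan is to proceed by induction on $t$, the base case $t=0$ being trivial. For the inductive step it suffices to prove the following single-vertex claim: given a graph $H$ and $x \in V(H)$, for every PMC $\Omega'$ of $H-x$ there is a unique PMC $\Omega$ of $H$ satisfying $\Omega \setminus \{x\} = \Omega'$, and $\Omega$ can be computed in polynomial time. Applying this iteratively to $H = G - \{x_1, \ldots, x_{i-1}\}$ and $x = x_i$ for $i = t, t-1, \ldots, 1$ reconstructs $\Omega$ in polynomial time from $\Omega'$.

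To prove the single-vertex claim, observe that $x \notin \Omega'$, so the only candidates for $\Omega$ are $\Omega_1 := \Omega'$ and $\Omega_2 := \Omega' \cup \{x\}$; it suffices to show that exactly one of them is a PMC of $H$. Let $\mathcal{D}$ be the set of components of $H - (\Omega' \cup \{x\}) = (H-x) - \Omega'$, let $\mathcal{D}_x \subseteq \mathcal{D}$ consist of those members of $\mathcal{D}$ adjacent to $x$ in $H$, and set $C_x := \{x\} \cup \bigcup \mathcal{D}_x$; then $C_x$ is the component of $H - \Omega'$ containing $x$, while the remaining components of $H - \Omega'$ are exactly those in $\mathcal{D} \setminus \mathcal{D}_x$. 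A routine check using Theorem~\ref{thm:PMC_characterization} then yields that $\Omega_1$ is a PMC of $H$ if and only if $N_H(C_x) \subsetneq \Omega'$, whereas $\Omega_2$ is a PMC of $H$ if and only if $\Omega' \subseteq N_H(x) \cup N_H(C_x)$. All other clauses of Theorem~\ref{thm:PMC_characterization} hold automatically: non-edges inside $\Omega'$ are covered by the same components (possibly merged into $C_x$) that cover them in $H-x$, and every $D \in \mathcal{D}$ satisfies $N_H(D) \subsetneq \Omega' \cup \{x\}$ because $N_{H-x}(D) \subsetneq \Omega'$.

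To finish, I check that exactly one of the two highlighted conditions holds. Because $C_x$ is a component of $H - \Omega'$, we always have $N_H(C_x) \subseteq \Omega'$. If the first condition fails, i.e., $N_H(C_x) = \Omega'$, then trivially $\Omega' \subseteq N_H(x) \cup N_H(C_x)$, so the second holds. Conversely, if both held, pick $u \in \Omega' \setminus N_H(C_x)$; the second condition forces $ux \in E(H)$, but then $x \in C_x$ gives $u \in N_H(C_x)$, a contradiction. Computability of $\Omega$ is immediate since PMC-ness can be tested in polynomial time via Theorem~\ref{thm:PMC_characterization}.

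The main obstacle is correctly tracking the three component structures $H - \Omega'$, $H - (\Omega' \cup \{x\})$, and $(H-x) - \Omega'$: the latter two coincide, while in $H - \Omega'$ the components in $\mathcal{D}_x$ merge with $x$ into the single component $C_x$. Once this merging picture is in hand, verifying the PMC conditions reduces to the short set-inclusion dichotomy above, and the induction on $t$ packages everything together.
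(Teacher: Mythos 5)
Your proof is correct. Note that the paper itself does not prove this lemma; it cites it as Lemma~22 of the arXiv version of the $P_6$-free MWIS paper of Grzesik, Klimo\v{s}ov\'{a}, Pilipczuk, and Pilipczuk, so there is no in-paper proof to compare against. Your argument (reduce to the single-vertex case $t=1$ by induction, observe that the only candidates for the lift are $\Omega'$ and $\Omega' \cup \{x\}$, identify via Theorem~\ref{thm:PMC_characterization} the precise condition under which each is a PMC of $H$ in terms of $C_x$, and show that exactly one of the two conditions holds) is the standard route and, as far as I can tell, essentially the same one used in the cited reference. The bookkeeping you flag as the main obstacle --- that $(H-x)-\Omega'$ and $H - (\Omega' \cup \{x\})$ share the same component structure while in $H - \Omega'$ the components adjacent to $x$ merge into $C_x$ --- is handled correctly, and both the covering condition for non-edges $xu$ with $u \in \Omega'$ and the dichotomy $N_H(C_x) \subsetneq \Omega'$ versus $\Omega' \subseteq N_H(x) \cup N_H(C_x)$ check out.
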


The next lemma and its proof is the analog of Lemma~25
of~\cite{GrzesikKPP19arxiv}.

\begin{lemma}\label{lem:PMClift-long-hole-free}
Suppose $G \in \mathcal{C}$ and let $n=|V(G)|$.
Given a family $\mathcal{Y} \subseteq 2^{V(G)}$, one can in time $(n \cdot |\mathcal{Y}|)^{\Oh(1)}$ compute a family $\mathcal{X}_\mathrm{rec}(\mathcal{Y}) \subseteq 2^{V(G)}$,
such that $|\mathcal{X}_\mathrm{rec}(\mathcal{Y})| \leq 3n^4|\mathcal{Y}|^4$
and the following property holds: for every PMC $\pmc$ in $G$, if $\cc(G-\pmc) \subseteq \mathcal{Y}$,
    then $\pmc \in \mathcal{X}_{\mathrm{rec}}(\mathcal{Y})$.
\end{lemma}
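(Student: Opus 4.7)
The plan is to enumerate candidates for $\pmc$ by guessing a bounded-size ``signature'' consisting of a constant number of vertices of $G$ and sets from $\mathcal{Y}$, and to add each resulting candidate to $\mathcal{X}_\mathrm{rec}(\mathcal{Y})$. The size bound $3n^4|\mathcal{Y}|^4$ suggests a signature of four vertices and four sets, together with a three-way case distinction. The structural engine of the argument is Lemma~\ref{lemma:PMC_minus_nbrhd_covered}: for any $v \in \pmc$ that has a neighbor in $V(G)\setminus \pmc$, there are $D_1,D_2 \in \cc(G-\pmc)$ with $v \in N(D_1) \cap N(D_2)$ and $\pmc \setminus N(v) \subseteq N(D_1) \cup N(D_2)$. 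Applying this twice, to two vertices $v_1,v_2 \in \pmc$ each with a neighbor outside, and using that $N(D) \subseteq \pmc$ for every $D \in \cc(G-\pmc)$, one gets
\[
\pmc \;\subseteq\; (N(v_1) \cap N(v_2)) \,\cup\, N(D_1) \,\cup\, N(D_2) \,\cup\, N(D_3) \,\cup\, N(D_4),
\]
where all four $D_i \in \cc(G-\pmc) \subseteq \mathcal{Y}$. This reduces the recovery of $\pmc$ to enumerating two vertices plus four elements of $\mathcal{Y}$.

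The main obstacle is that the displayed inclusion is not equality: $(N(v_1) \cap N(v_2)) \setminus \pmc$ can be non-empty, precisely because there may exist a component $D \in \cc(G-\pmc)$ with both $v_1,v_2 \in N(D)$. To delete such spurious vertices from the candidate, I would extend the signature by two additional vertices $u_1, u_2 \in V(G)$ used as ``certificates'' identifying which vertices of $N(v_1) \cap N(v_2)$ must be discarded (intuitively, the $u_j$'s are chosen inside components that create the excess). The reconstruction then takes the form
\[
\pmc^\star \;=\; \bigl((N(v_1) \cap N(v_2)) \setminus (N(u_1) \cup N(u_2))\bigr) \,\cup\, N(D_1) \,\cup\, N(D_2) \,\cup\, N(D_3) \,\cup\, N(D_4).
\]
The three-way case distinction accounts for whether zero, one, or two vertices of $\pmc$ fail to have a neighbor outside $\pmc$; in the first case one can drop one of the $v_i$'s but must gain it back elsewhere, and we remedy this by allowing two extra vertices per case (giving in total four enumerated vertices and the leading factor $3$).

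Soundness of the reconstruction---the claim that for every pure PMC $\pmc$ some signature produces $\pmc^\star = \pmc$---will be the hard part and will use the structure of $\ourclass$. Specifically, if $x \in (N(v_1) \cap N(v_2)) \setminus \pmc$ with $x \in D$ for some $D \in \cc(G-\pmc)$, then the absence of holes of length at least $6$ and of extended $C_5$'s, exploited via path arguments analogous to those in Lemmas~\ref{lemma:indset_in_PMC_covered} and~\ref{lemma:PMC_minus_nbrhd_covered}, forces $x$ to lie in a very restricted position: either $x$ already belongs to some $N(D_i)$ (and is then added back correctly), or $x$ is adjacent to one of the certificate vertices $u_j$ guessed by the enumeration. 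This dichotomy justifies the formula above.

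Finally, for each of the $\Oh(n^4|\mathcal{Y}|^4)$ enumerated signatures the candidate $\pmc^\star$ is constructed in time polynomial in $n$ and $|\mathcal{Y}|$; the total running time is $(n \cdot |\mathcal{Y}|)^{\Oh(1)}$ and the size of $\mathcal{X}_\mathrm{rec}(\mathcal{Y})$ is bounded by $3n^4|\mathcal{Y}|^4$ as required. Optionally one can post-filter by testing the PMC characterization of Theorem~\ref{thm:PMC_characterization}, but this is not necessary for the conclusion. The crux of the proof is therefore the case analysis described in the previous paragraph, which is where the hypothesis $G \in \ourclass$ is really used.
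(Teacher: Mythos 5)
Your approach is genuinely different from the paper's, and it has a fundamental gap. The paper's proof (following~\cite{GrzesikKPP19arxiv}) does not reconstruct $\pmc$ directly in $G$. It peels off vertices $x_1,\ldots,x_s$ in a fixed order until $\pmc_s := \pmc\setminus\{x_1,\ldots,x_s\}$ is a PMC of $G_s := G-\{x_1,\ldots,x_s\}$ but $\pmc_{s+1}$ is not a PMC of $G_{s+1}$, and then lifts $\pmc_s$ back to $\pmc$ via the PMC Lifting Lemma (Lemma~\ref{lem:PMClift}). The criticality of $s$ is the whole point: in the case $x_{s+1}\notin\pmc$, the fact that $\pmc_{s+1}$ is not a PMC of $G_{s+1}$ supplies the extra structural constraint $(N(x)\cap N(y))\setminus\pmc_s\subseteq\{x_{s+1}\}$, i.e.\ the excess of the neighbourhood intersection in the peeled graph $G_s$ is a \emph{single, known} vertex. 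That is exactly what makes a clean reconstruction possible using Lemma~\ref{lemma:PMC_minus_nbrhd_covered}, four components from $\mathcal{Y}_s$, and the pair $x,y$. (Note also that the paper's count $3n^4|\mathcal{Y}|^4$ arises as $|\mathcal{G}_0|+|\mathcal{G}_1|+|\mathcal{G}_2|$ with the dominant factor $\binom{n+1}{2}\binom{n}{2}|\mathcal{Y}|^4$ coming from summing over $s$, the pair $x,y$, and four pieces of $\mathcal{Y}_s$ --- not from four free vertex guesses plus four free $\mathcal{Y}$-guesses as you infer.)

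Your proposal works entirely in $G$ and so runs into the very problem that peeling is designed to eliminate. In $G$, the excess $(N(v_1)\cap N(v_2))\setminus\pmc$ can intersect arbitrarily many components of $\cc(G-\pmc)$: $K_{2,m}$-like induced configurations are allowed in $\ourclass$ (the only induced cycles are $C_4$'s), and a non-edge $v_1v_2$ can be covered by $m$ singleton components, giving $m$ excess vertices in $m$ distinct components. Two certificate vertices $u_1,u_2$ cannot clear them. Moreover, subtracting $N(u_1)\cup N(u_2)$ can delete genuine vertices of $\pmc\cap N(v_1)\cap N(v_2)$, and your claim that these are recovered via $N(D_1)\cup\cdots\cup N(D_4)$ is unjustified because $D_1,\ldots,D_4$ are forced by Lemma~\ref{lemma:PMC_minus_nbrhd_covered} and there is no guarantee they cover the deleted vertices. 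Finally, the crucial step --- ``the absence of long holes and extended $C_5$'s forces $x$ to lie in a very restricted position'' --- is asserted, not proved; in fact the paper's proof of this lemma uses no $\ourclass$-structure at all beyond the already-established Lemmas~\ref{lemma:indset_in_PMC_covered} and~\ref{lemma:PMC_minus_nbrhd_covered}, precisely because the peeling plus PMC lifting collapses the excess to a single vertex without any extra hole-free or extended-$C_5$-free argument. Without that device, the proposal does not close.
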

\begin{proof}
Let $\pmc$ be a PMC in $G$, such that $\cc(G-\pmc) \subseteq \mathcal{Y}$.
Let $x_1,x_2,\ldots,x_n$ be an arbitrary enumeration of $V(G)$
and for $0 \leq i \leq n$, let $X_i := \{x_1,x_2,\ldots,x_i\}$ (where $X_0 := \emptyset$),
$G_i := G-X_i$, and $\pmc_i := \pmc \setminus X_i$.
Let $0 \leq s \leq n$ be the maximum integer such that $\pmc_s$ is a PMC in $G_s$;
since $\pmc = \pmc_0$ is a PMC in $G = G_0$, such an integer exists.

Since $\cc(G-\pmc) \subseteq \mathcal{Y}$, we have $\cc(G_s-\pmc_s) \subseteq \mathcal{Y}_s$,
where 
$\mathcal{Y}_s: = \bigcup_{D \in \mathcal{Y}} \cc(G[D\setminus X_s])$.
Note that $|\mathcal{Y}_s| \leq (n-s)|\mathcal{Y}|$.

If $s=n$, then $(x_1,x_2,\ldots,x_n)$ is a survival sequence for $\pmc$ ending in
$\pmc_n = \emptyset$ in an empty graph $G_n$.
By Lemma~\ref{lem:PMClift}, there is exactly one such PMC $\pmc_\emptyset$ and it can be computed in polynomial time. We define $\mathcal{G}_0 = \{\pmc_\emptyset\}$. 

Assume then $s < n$ and let $v := x_{s+1}$.
Then $\Omega_{s+1} = \Omega_s \setminus \{v\}$
is not a PMC in $G_{s+1} = G_s - \{v\}$ due to the choice of $s$.

First, suppose $v \in \Omega$.
Then, $\cc(G_s - \Omega_s) = \cc(G_{s+1} - \Omega_{s+1})$.
Therefore, for every nonedge $xy$ in $\Omega_{s+1}$, there exists a component $D \in \cc(G_{s+1}-\Omega_{s+1})$ that covers $xy$. It follows that $\Omega_{s+1}$ is not a PMC of $G_{s+1}$ because for some $D \in \cc(G_{s+1}-\Omega_{s+1})$ it holds that $N_{G_{s+1}}(D) = \Omega_{s+1}$. Then, $\Omega_s = N_{G_s}(D) \cup \{v\}$. 
Thus, $\Omega \in \mathcal{G}_1$ where
$\mathcal{G}_1$ is constructed as follows: for every $0 \leq s < n$
and every $D \in \mathcal{Y}_s$, compute $Z := N_{G_s}(D) \cup \{x_{s+1}\}$
and if $Z$ is a PMC in $G_s$, apply the PMC Lifting Lemma to 
the graph $G$, the sequence $(x_1,x_2,\ldots,x_s)$ and the PMC $Z$, and insert the resulting
PMC of $G$ into $\mathcal{G}_1$.
Note that $|\mathcal{G}_1| \leq \sum_{s=0}^{n-1} (n-s)|\mathcal{Y}| = \binom{n+1}{2} |\mathcal{Y}|$.

Now, suppose $v \not \in \Omega$. Then, $\pmc_s = \pmc_{s+1}$ and
$v \in D$ for some $D \in \cc(G_s-\Omega_s)$.
For every $D' \in \cc(G_{s+1} - \Omega_{s+1})$, either $D' \in \cc(G_s - \Omega_s)$ or $D' \subseteq D$, so $N_{G_{s+1}}(D') \subsetneq \Omega_{s+1}$ for all $D' \in \cc(G_{s+1}-\Omega_{s+1})$.
It follows that $\Omega_{s+1}$ is not a PMC in $G_{s+1}$, because some nonedge $xy$ in $\Omega_{s+1}$ is not covered by a component in $\cc(G_{s+1} - \Omega_{s+1})$. Therefore, $D$ is the unique component in $\cc(G_s - \Omega_s)$ covering $xy$. Furthermore, $v \in D$ and $(N(x) \cap N(y)) \setminus \Omega_s \subseteq \{v\}$.
By  Lemma \ref{lemma:PMC_minus_nbrhd_covered}, there exist
$D_1, D_2, D_3,D_4 \in  \cc(G_s - \Omega_s)$ such that
\[\Omega_s = \left(  \left( \bigcup_{1 \leq i \leq 4} N_{G_s}(D_i) \right) \cup \Bigl(N(x) \cap N(y)\Bigr) \right) \setminus \{v\}.\]
Hence, $\Omega \in  \mathcal{G}_2$ where $\mathcal{G}_2$ is constructed as follows:
for every $0 \leq s < n$, for every $D_1,D_2,D_3,D_4 \in \mathcal{Y}_s$,
and for every $x, y \in V(G)$, compute
\[Z :=  \left(  \left( \bigcup_{1 \leq i \leq 4} N_{G_s}(D_i) \right) \cup \Bigl(N(x) \cap N(y)\Bigr) \right) \setminus \{x_{s+1}\},\]
and if $Z$ is a PMC in $G_s$, apply the PMC Lifting Lemma to 
the graph $G$, the sequence $(x_1,x_2,\ldots,x_s)$, and the PMC $Z$, and insert the resulting
PMC of $G$ into $\mathcal{G}_2$.
Note that $|\mathcal{G}_2| \leq \sum_{s=0}^{n-1} \binom{n}{2}(n-s)|\mathcal{Y}|^4 = \binom{n+1}{2}\binom{n}{2} |\mathcal{Y}|^4$.

We output $\mathcal{X}_\mathrm{rec}(\mathcal{Y}) := \mathcal{G}_0 \cup \mathcal{G}_1 \cup \mathcal{G}_2$. By the above estimations, for $n > 1$
the output is of size at most $3n^4|\mathcal{Y}|^4$, while for $n=1$ the output is of size at most $2$.
\end{proof}

 We can now construct a set containing all pure PMCs of $G$. 
\begin{theorem}
\label{thm:PMCs_correct_adhesions}
Given an $n$-vertex graph $G \in \ourclass$ and an integer $k$,
one can in time $n^{\Oh(k)}$ construct a set $\mathcal{X}_2$ of size $\Oh(n^{8k+60})$ such that every pure PMC $\Omega$ of $G$ belongs to $\mathcal{X}_2$. 
\end{theorem}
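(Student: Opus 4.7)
The plan is to invoke Lemma~\ref{lem:PMClift-long-hole-free} on a family $\mathcal{Y}$ that is carefully chosen so that every pure PMC $\Omega$ satisfies $\cc(G-\Omega) \subseteq \mathcal{Y}$. Since that lemma produces $\mathcal{X}_{\mathrm{rec}}(\mathcal{Y})$ of size at most $3n^4|\mathcal{Y}|^4$ in polynomial time and containing every PMC whose components all lie in $\mathcal{Y}$, it will suffice to design $\mathcal{Y}$ of size $O(n^{2k+14})$; then $\mathcal{X}_2 := \mathcal{X}_{\mathrm{rec}}(\mathcal{Y})$ will have size $O(n^4 \cdot (n^{2k+14})^4) = O(n^{8k+60})$, matching the target bound.

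The guiding observation is simple. Let $\Omega$ be a pure PMC of $G$ and $D \in \cc(G-\Omega)$. By the definition of pure, the adhesion $S := N(D)$ lies in $\mathcal{F}_2$. Moreover, $D \cap S = \emptyset$, the set $D$ is connected in $G$, and no vertex of $V(G) \setminus (D \cup S)$ has a neighbor in $D$ (otherwise such a vertex would lie in $N(D) = S$); therefore $D$ is one of the connected components of $G - S$. So once $\mathcal{Y}$ collects the components of $G - S$ for every $S \in \mathcal{F}_2$, it will automatically contain $D$.

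Guided by this, I define
\[
\mathcal{Y} \ := \ \bigcup_{S \in \mathcal{F}_2} \cc(G - S).
\]
Since $G - S$ has at most $n$ components and $|\mathcal{F}_2| = O(n^{2k+13})$ by Theorem~\ref{thm:n^11_sep_containers}, we have $|\mathcal{Y}| = O(n^{2k+14})$, and $\mathcal{Y}$ is computable in polynomial time from $\mathcal{F}_2$. By the observation above, $\cc(G - \Omega) \subseteq \mathcal{Y}$ for every pure PMC $\Omega$, so Lemma~\ref{lem:PMClift-long-hole-free} applied to $\mathcal{Y}$ outputs the desired $\mathcal{X}_2$ in time $n^{O(k)}$. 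No real obstacle is expected here: the hard work has been packaged in Theorem~\ref{thm:n^11_sep_containers} (which produces a small family of candidate minimal separators) and in Lemma~\ref{lem:PMClift-long-hole-free} (which lifts a family of components into a family of PMCs), and the present proof merely wires them together.
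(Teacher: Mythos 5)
Your proof is correct and takes exactly the same approach as the paper: both apply Lemma~\ref{lem:PMClift-long-hole-free} with $\mathcal{Y} := \bigcup_{S \in \mathcal{F}_2} \cc(G-S)$ and derive the size bound from $|\mathcal{F}_2| = \Oh(n^{2k+13})$. You helpfully spell out the step the paper leaves implicit, namely that each $D \in \cc(G-\Omega)$ is itself a connected component of $G - N(D)$, so it lands in $\mathcal{Y}$ once $N(D) \in \mathcal{F}_2$.
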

\begin{proof}
We apply Lemma~\ref{lem:PMClift-long-hole-free}
to $G$ and $\mathcal{Y} := \bigcup_{S \in \mathcal{F}_2} \cc(G-S)$, where
$\mathcal{F}_2$ comes from Theorem~\ref{thm:n^11_sep_containers}.
Since $\mathcal{F}_2 = \Oh(n^{2k+13})$, we obtain that $|\mathcal{Y}| = \Oh(n^{2k+14})$ and the size bound follows.
\end{proof}

Finally, we can combine the results of Theorems~\ref{thm:containers_PMCs_bad_adhesions} and~\ref{thm:PMCs_correct_adhesions}, giving the following. 

\pmccontainers*



\section{Dynamic programming algorithm}\label{sec:DP}\label{sec:DP_algorithm}
The goal of this section is to prove Theorem \ref{thm:DP_intro}.
\DPtheorem*

Contrary to the previous two sections, here it is more convenient use the terms
of tree decompositions instead of chordal completions. 
Our main technical statement is the following theorem: 

\begin{theorem}
  \label{thm:DP}
  Assume we are given a graph $G$ with weight function $\weight : V(G) \to \N$, 
	a family $\coverfam$ of subsets of $V(G)$, and a positive integer $k$ with the following promise:
  \begin{displayquote}
  For every induced subgraph $F$ of $G$ of treewidth less than $k$
  there exists a tree decomposition $(T,\beta)$ of $G$
  such that
  \begin{itemize}
  \item for every $t \in V(T)$, an $F$-container for  $\beta(t)$ belongs to $\coverfam$,
  \item $(T,\beta_F)$ is a tree decomposition of $F$ of width less than $k$,
  where $\beta_F(t) := \beta(t) \cap V(F)$ for every $t \in V(T)$.
  \end{itemize}
  \end{displayquote}
\noindent  Then, 
	one can in time $|\coverfam|^2 |V(G)|^{\Oh(k)}$
	find a maximum-weight induced subgraph of $(G,\weight)$ of treewidth less than $k$.
\end{theorem}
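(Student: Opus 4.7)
My plan is to adapt the classical Bouchitt\'{e}--Todinca dynamic programming~\cite{BouchitteT01}, as extended by Fomin--Todinca--Villanger~\cite{FominTV15} to find the maximum-weight induced subgraph of treewidth less than $k$, so that it operates on containers in $\coverfam$ rather than on potential maximal cliques directly. The DP states will be triples $(A, C, I)$ where $A \in \coverfam$ is a container playing the role of a ``scaffold bag'', $C$ is a connected component of $G - A$, and $I \subseteq A$ with $|I| \leq k$ is a ``trace'' representing the intersection of the sought subgraph $F$ with $A$. The intended value $\mathrm{opt}(A, C, I)$ is the maximum weight of $Y \subseteq I \cup C$ with $Y \cap A = I$ such that $G[Y]$ admits a tree decomposition of width less than $k$ containing $I$ as a bag, and extending upward to a decomposition of some piece of $G$ whose bags have containers in $\coverfam$. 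The number of states is $|\coverfam| \cdot n \cdot n^{\Oh(k)}$: the trace is bounded by $k$ because the hypothesis guarantees $|\beta(t) \cap V(F)| \leq k$, and the $F$-container property yields $|A \cap V(F)| = |\beta(t) \cap V(F)|$.

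For the recursion, I would compute $\mathrm{opt}(A, C, I)$ by enumerating a ``next'' pair $(A', I')$ with $A' \in \coverfam$, $I' \subseteq A'$, $|I'| \leq k$, and the consistency condition $I \cap A' = I' \cap A$ (agreement of $F$-traces on the shared region). For each such $(A', I')$, the connected components of $G - A'$ that are contained in $C$ become new sub-blocks $(A', C'', I' \cap N[C''])$, and the value contribution is $\weight((I' \setminus I) \cap C)$ plus the sum of the recursive values over these sub-blocks. The final answer is then obtained by choosing an ``apex'' container $A^\star$ and trace $I^\star$, and summing $\mathrm{opt}(A^\star, C, I^\star)$ over every component $C$ of $G - A^\star$, maximizing over the choice of $(A^\star, I^\star)$. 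Correctness follows by induction on the promise tree decomposition $(T, \beta)$, rooted arbitrarily at some node: at each node $t$, the container $A_t \in \coverfam$ and the trace $F \cap A_t$ form a valid DP configuration achieving at least the optimal value; conversely, every valid DP execution yields a set $Y$ together with an explicit tree decomposition of $G[Y]$ of width less than $k$, since each bag is a trace of size at most $k$ and every edge of $G[Y]$ is covered via the composed ``virtual'' decomposition inherited from the chosen containers.

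The main obstacle is reconciling the discrepancy between containers and actual tree-decomposition bags: the container $A$ may be strictly larger than the underlying bag $\beta(t)$, so components of $G - A$ no longer correspond one-to-one with children in $T$. The $F$-container property $A \cap V(F) = \beta(t) \cap V(F)$ is the key remedy, since the trace $I$ captures everything relevant to $F$ inside $A$, decoupling the DP from the extra vertices in $A \setminus \beta(t)$. A secondary subtlety is that components of $G - A'$ could in principle have vertices outside of $C$ when $A'$ contains vertices in $V(G) \setminus (A \cup C)$; enforcing the sub-component condition $C'' \subseteq C$ together with trace consistency ensures that the DP only assembles valid decompositions, while still enumerating the optimal configuration guaranteed by the promise. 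The runtime is dominated by the $|\coverfam| \cdot n \cdot n^{\Oh(k)}$ states times $|\coverfam| \cdot n^{\Oh(k)}$ transition candidates each, giving the claimed $|\coverfam|^2 \cdot |V(G)|^{\Oh(k)}$ bound.
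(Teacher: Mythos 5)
Your plan identifies the right state space (container $A$, trace $Q$ of size at most $k$, component $D$ of $G-A$) and correctly observes that the $F$-container property bounds the trace by $k$. This matches the skeleton of the paper's argument. However, there are two concrete gaps, one of which is exactly the crux the paper has to work to overcome.

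First, your recursion restricts to components $C''$ of $G - A'$ with $C'' \subseteq C$. Because a container can be strictly larger than the bag it covers, a component $C''$ of $G - A'$ can intersect $C$ while also reaching into $A \setminus A'$; such a $C''$ is not contained in $C$, yet its intersection with $C$ must be covered by the recursion. Requiring $C'' \subseteq C$ therefore drops parts of $C$ and the DP value becomes incomplete. The paper instead collects all $D' \in \cc(G-A')$ with $D' \cap D \neq \emptyset$ and defines the candidate as $J := D \cap \bigl(Q' \cup \bigcup_{D'} \dpres(A',Q',D')\bigr)$, trimming overreach by intersecting with $D$ afterwards.

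Second, and more fundamentally, once you allow the sub-problem sets $D'$ to spill into $A$, the value stored at $(A',Q',D')$ may commit to vertices inside $A \setminus \beta(t)$. When the maximum-weight solution is not unique, different sub-problems for the same round can pick genuinely different optima that disagree on $A \setminus \beta(t)$, so the assembled set $J$ may fail to equal the projection of any single global solution. Your ``correctness follows by induction'' glosses over exactly this point. The paper resolves it by canonizing first: it fixes a total order $\indlt$ (maximum weight, ties broken lexicographically), proves that for the $\indlt$-minimum solution $F$ the set $D \cap V(F)$ is the $\indlt$-minimum feasible solution to $(A_t, A_t \cap V(F), D)$, and has the DP always store the $\indlt$-minimum feasible solution rather than just a maximum-weight one. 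This is the missing idea; without it the inductive step does not close, and a proof that simply compares weights cannot be made to work for reasons the paper explicitly notes in its discussion preceding the proof.
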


We show how Theorem~\ref{thm:DP_intro} follows from Theorem~\ref{thm:DP} in Section~\ref{ssec:DP2DP}
and prove Theorem~\ref{thm:DP} in Section~\ref{ssec:DP_proof}.

\subsection{Proof of Theorem~\ref{thm:DP_intro}}\label{ssec:DP2DP}

We need the following facts on relations between chordal completions and tree decompositions.
The first one is straightforward. 

\begin{proposition}\label{prop:td2fillin}
Let $G$ be a graph and let $(T,\beta)$ be a tree decomposition of $G$.
Then 
$$\mathcal{E} := \bigcup_{t \in V(T)} \binom{\beta(t)}{2} \setminus E(G)$$
is a chordal completion of $G$. 
Consequently, if $G$ has treewidth less than $k$, then
there exists a minimal chordal completion $\mathcal{E}$ of $G$
such that every clique of $G+\mathcal{E}$ is of size at most $k$.
\end{proposition}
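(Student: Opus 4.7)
The plan for the first assertion is to argue that after enlarging each bag into a clique, $(T,\beta)$ remains a valid tree decomposition of the enlarged graph $G+\mathcal{E}$: the vertex–subtree property is unchanged because bag memberships are identical, and the edge-coverage property is immediate since $\mathcal{E}$ only introduces edges between vertices that already share a bag. Hence $G+\mathcal{E}$ admits a tree decomposition in which every bag is a clique, which is the standard characterization of chordal graphs (the pair $(T,\beta)$ then serves as a clique tree of $G+\mathcal{E}$). If a self-contained proof is desired, I would induct on $|V(G)|$: picking a leaf $t$ of $T$ with neighbor $t'$, either $\beta(t) \subseteq \beta(t')$ and the leaf may be deleted, or there is some $v \in \beta(t) \setminus \beta(t')$; the running-intersection property then forces $\{u \in V(T) : v \in \beta(u)\} = \{t\}$, so the neighbors of $v$ in $G+\mathcal{E}$ all lie in $\beta(t)\setminus\{v\}$, a clique, making $v$ simplicial, and the induction proceeds after deleting $v$.

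For the consequence, I would fix a tree decomposition $(T,\beta)$ of $G$ of width less than $k$, so $|\beta(t)| \leq k$ everywhere; the first assertion then yields a chordal completion $\mathcal{E}_0$ of $G$. By the Helly property for pairwise-intersecting subtrees of a tree, applied to $\{T_v : v \in K\}$ with $T_v := \{t : v \in \beta(t)\}$ and $K$ any clique of $G+\mathcal{E}_0$ (every two such $T_v,T_u$ meet because $uv$ is an edge of $G+\mathcal{E}_0$), there is a single bag $\beta(t^\star)$ containing $K$; hence every clique of $G+\mathcal{E}_0$ has at most $k$ vertices. To upgrade $\mathcal{E}_0$ to a minimal chordal completion without inflating clique sizes, I pick $\mathcal{E}$ inclusion-minimal among chordal completions of $G$ contained in $\mathcal{E}_0$. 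Such an $\mathcal{E}$ is also globally minimal, since any $\mathcal{E}' \subsetneq \mathcal{E}$ would be a proper subset of $\mathcal{E}_0$ as well, and so could not be a chordal completion by the choice of $\mathcal{E}$. Finally, $G+\mathcal{E}$ is a spanning subgraph of $G+\mathcal{E}_0$, so every clique of $G+\mathcal{E}$ is a clique of $G+\mathcal{E}_0$ and therefore has size at most $k$.

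There is essentially no obstacle here; the proposition is a bookkeeping assembly of two classical tree-decomposition facts (clique trees characterize chordal graphs, and the Helly property for subtrees of a tree places every clique inside a single bag). The only mildly subtle point is that inclusion-minimality among chordal completions contained in $\mathcal{E}_0$ coincides with global inclusion-minimality, but this follows tautologically from the downward nature of the definition of minimal chordal completion.
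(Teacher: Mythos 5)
Your proof is correct. The paper itself gives no argument for this proposition, merely labeling it ``straightforward,'' so there is no official proof to compare against; your write-up fills in exactly the expected details. Both of your key steps are the standard ones: $(T,\beta)$ stays a tree decomposition of $G+\mathcal{E}$ because $\mathcal{E}$ only adds edges within bags, and a graph admitting a tree decomposition with all bags cliques is chordal (this is the intersection-graph-of-subtrees characterization; you do not actually need the bags to be \emph{maximal} cliques as in the paper's Proposition~\ref{prop:chordal_clique_tree}, and your inductive simplicial-vertex argument is a fine self-contained alternative). For the second assertion, the Helly property for subtrees is precisely what bounds the clique size, and your observation that inclusion-minimality among chordal completions contained in $\mathcal{E}_0$ automatically gives global inclusion-minimality is the small point one has to notice; your justification of it is correct. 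Nothing is missing.
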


The second one is a well-known characterization of chordal graphs.
\begin{proposition}[see e.g. \cite{LokshtanovVV14}]
\label{prop:chordal_clique_tree}
	A graph $G$ is chordal if and only if there exists a tree decomposition $(T, \beta)$ of $G$ such that every bag is a maximal clique in $G$. If $G$ is chordal, such a tree decomposition is called a \emph{clique tree} of $G$. 
\end{proposition}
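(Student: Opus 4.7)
The plan is to prove the two implications separately, using two classical facts: Dirac's theorem that every non-empty chordal graph contains a \emph{simplicial} vertex (a vertex whose open neighborhood induces a clique), and the \emph{Helly property} for subtrees of a tree (any finite family of pairwise-intersecting subtrees of a tree shares a common vertex).

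For \emph{chordal $\Rightarrow$ clique tree}, I would induct on $|V(G)|$. The base case $|V(G)|=1$ is immediate. For $|V(G)| \ge 2$, pick a simplicial vertex $v$ of $G$, and by induction let $(T', \beta')$ be a clique tree of $G - v$. The subtrees $T'_u := \{t \in V(T') : u \in \beta'(t)\}$, for $u \in N_G(v)$, pairwise intersect (adjacent vertices share a bag), so by Helly some bag $\beta'(t_0)$ contains all of $N_G(v)$. Extend $(T',\beta')$ to $(T,\beta)$ by attaching a new leaf $t$ to $t_0$ with $\beta(t) := N_G[v]$; this bag is a clique because $v$ is simplicial, and one checks directly that the tree-decomposition axioms are preserved. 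To enforce that every bag is \emph{maximal}, iteratively contract any edge $t_1 t_2 \in E(T)$ with $\beta(t_1) \subseteq \beta(t_2)$, retaining $\beta(t_2)$; a final Helly argument shows that in the terminal decomposition no bag can be enlarged by adding an outside vertex.

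For \emph{clique tree $\Rightarrow$ chordal}, I would also induct on $|V(G)|$. Pick a leaf $t$ of $T$ with unique neighbor $t'$. If $\beta(t) \subseteq \beta(t')$, delete $t$ from $T$ and apply induction. Otherwise choose $v \in \beta(t) \setminus \beta(t')$; the subtree $T_v := \{s \in V(T) : v \in \beta(s)\}$ contains $t$ but not $t'$, so $T_v = \{t\}$ (since $t$'s only neighbor in $T$ is $t'$). Consequently every neighbor of $v$ in $G$ lies in $\beta(t)$, which is a clique, so $v$ is simplicial in $G$. Removing $v$ from $G$ and from $\beta(t)$ yields a tree decomposition of $G - v$ in which every bag is still a clique, so by induction $G - v$ is chordal. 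Adding $v$ back cannot create an induced cycle of length at least $4$ through $v$, because two of $v$'s neighbors on such a cycle would be nonadjacent, contradicting that $N_G(v)$ is a clique. Hence $G$ has no induced cycle of length at least $4$, i.e., $G$ is chordal.

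Both directions become routine once Dirac's theorem and the Helly property are in hand. The main obstacle, such as it is, is the maximality clause in the forward direction: after the contraction step one must show that if $\beta(t) \cup \{w\}$ were a clique with $w \notin \beta(t)$, then $T_w$ would pairwise intersect each $T_u$ for $u \in \beta(t)$, so by Helly some $t^* \neq t$ would satisfy $\beta(t^*) \supseteq \beta(t) \cup \{w\}$; walking from $t$ toward $t^*$ in $T$ produces a neighbor $t''$ of $t$ with $\beta(t) \subseteq \beta(t'')$, contradicting the post-contraction invariant.
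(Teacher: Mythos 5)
This proposition is stated in the paper without proof, with only a citation to the literature; there is no ``paper's own proof'' to compare against. Your argument is the standard textbook proof of the clique-tree characterization of chordal graphs, built on Dirac's simplicial-vertex theorem and the Helly property for subtrees, and it is essentially correct. Two small points of hygiene: in the reverse direction the induction should be taken on $|V(G)| + |V(T)|$ (or some similar compound measure), since the branch that deletes a leaf with $\beta(t) \subseteq \beta(t')$ decreases $|V(T)|$ but not $|V(G)|$; and you rightly weaken the inductive invariant in that direction to ``every bag is a clique'' rather than ``every bag is a maximal clique,'' because maximality in $G-v$ need not survive the deletion of $v$ from $\beta(t)$ -- it is worth stating that this weakening is deliberate and still suffices for the conclusion. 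With these cosmetic adjustments the proof is complete and sound.
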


The third one has been pivotal to the results of~\cite{FominV10,FominTV15}.
\begin{lemma}[{\cite[Lemma~3.1]{FominV10}, \cite[Lemma~2.9]{FominTV15}}]\label{lem:FTV}
Let $F$ be an induced subgraph of $G$ and let $\mathcal{E}_F$ be a minimal chordal completion
of $F$. Then there exists a minimal chordal completion $\mathcal{E}_G$ of $G$
such that for every clique $\Omega$ of $G+\mathcal{E}_G$, the intersection $\Omega \cap V(F)$ 
is either empty or is a clique of $F+\mathcal{E}_F$. 
\end{lemma}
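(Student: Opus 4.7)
The plan is to reformulate the conclusion purely in terms of which fill edges of $\mathcal{E}_G$ lie inside $V(F)$, and then explicitly build a minimal triangulation of $G$ whose restriction to $V(F)$ is exactly $\mathcal{E}_F$.

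First I would observe that the desired conclusion is equivalent to the equality $\mathcal{E}_G \cap \binom{V(F)}{2} = \mathcal{E}_F$. Indeed, if the inclusion $\mathcal{E}_G \cap \binom{V(F)}{2} \subseteq \mathcal{E}_F$ holds, then for any clique $\Omega$ of $G+\mathcal{E}_G$, every pair in $\Omega \cap V(F)$ is an edge of $G+\mathcal{E}_G$ lying in $\binom{V(F)}{2}$, hence an edge of $F+\mathcal{E}_F$, so $\Omega \cap V(F)$ is a clique of $F+\mathcal{E}_F$. Conversely, any fill edge of $\mathcal{E}_G$ inside $V(F)$ not in $\mathcal{E}_F$ would violate the conclusion on the $2$-vertex clique formed by its endpoints. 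The reverse inclusion $\mathcal{E}_F \subseteq \mathcal{E}_G \cap \binom{V(F)}{2}$ will come for free: since $(G + \mathcal{E}_G)[V(F)]$ is an induced subgraph of a chordal graph it is itself chordal, so $\mathcal{E}_G \cap \binom{V(F)}{2}$ is a chordal completion of $F$, and by minimality of $\mathcal{E}_F$ it must equal $\mathcal{E}_F$.

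Next I would construct some (not necessarily minimal) chordal completion $\mathcal{E}^0$ of $G$ satisfying $\mathcal{E}^0 \cap \binom{V(F)}{2} \subseteq \mathcal{E}_F$. Using Proposition~\ref{prop:chordal_clique_tree}, fix a clique tree $(T, \beta)$ of $F + \mathcal{E}_F$, and inflate every bag by setting $\beta'(t) := \beta(t) \cup (V(G) \setminus V(F))$. Since every vertex of $V(G) \setminus V(F)$ lies in every bag of $\beta'$, while vertices of $V(F)$ retain their original (connected) occurrence in $T$, the pair $(T, \beta')$ is easily checked to be a tree decomposition of $G$: edges of $F$ are covered by $(T, \beta)$, and every other edge of $G$ is incident to a vertex of $V(G) \setminus V(F)$, which lies in every bag. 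Applying Proposition~\ref{prop:td2fillin} to $(T, \beta')$ yields a chordal completion $\mathcal{E}^0$ of $G$. Any pair $\{u,v\} \in \mathcal{E}^0$ with $u,v \in V(F)$ must lie in some $\beta'(t) \cap V(F) = \beta(t)$, hence is an edge of $F + \mathcal{E}_F$, establishing the claimed inclusion.

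Finally I would invoke the classical folklore fact that every chordal completion of $G$ contains, as a subset, a minimal one: starting from $\mathcal{E}^0$, iteratively remove any edge of $\mathcal{E}^0$ whose deletion preserves chordality, until none remains. This yields a minimal chordal completion $\mathcal{E}_G \subseteq \mathcal{E}^0$ of $G$, so that $\mathcal{E}_G \cap \binom{V(F)}{2} \subseteq \mathcal{E}^0 \cap \binom{V(F)}{2} \subseteq \mathcal{E}_F$, which by the first paragraph is exactly the conclusion of the lemma. The only conceptual obstacle is the construction of $\mathcal{E}^0$: the key realization is that since we care only about containment of fill edges (not treewidth), we are free to throw all of $V(G) \setminus V(F)$ into every bag of a clique tree of $F+\mathcal{E}_F$, which guarantees that no new fill edge is created inside $V(F)$; the subsequent reduction to a minimal triangulation can only shrink the restriction to $V(F)$, and minimality of $\mathcal{E}_F$ prevents it from shrinking at all.
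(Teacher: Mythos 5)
The paper does not give its own proof of this lemma; it is cited directly from Fomin--Villanger and Fomin--Todinca--Villanger. Your proof is correct and is essentially the standard argument appearing in those sources: you reformulate the conclusion as $\mathcal{E}_G \cap \binom{V(F)}{2} \subseteq \mathcal{E}_F$, build an initial (non-minimal) triangulation of $G$ by taking a clique tree of $F + \mathcal{E}_F$ and adding all of $V(G) \setminus V(F)$ to every bag (so no new fill edge falls inside $V(F)$), and then shrink to a minimal chordal completion using the classical Rose--Tarjan--Lueker fact that a chordal completion is minimal iff no single fill edge is redundant; shrinking can only remove fill edges, so the inclusion into $\mathcal{E}_F$ is preserved. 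All the steps check out, including the equivalence argument (the $\supseteq$ direction indeed follows from $\subseteq$, the chordality of $(G+\mathcal{E}_G)[V(F)]$, and minimality of $\mathcal{E}_F$), so the proposal is a valid proof.
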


Consider the input tuple $(G,\weight,\coverfam,k)$ as in Theorem~\ref{thm:DP_intro}.
We claim that we can pass the same tuple to the algorithm of Theorem~\ref{thm:DP}:
the output of both the algorithm of Theorem~\ref{thm:DP_intro} and Theorem~\ref{thm:DP}
is the same, we need only to verify the promise of Theorem~\ref{thm:DP}.

Let $F$ be an induced subgraph of $G$ of treewidth less than $k$.
By Proposition~\ref{prop:td2fillin}, there exists a minimal chordal completion $\mathcal{E}_F$
of $F$ such that every clique of $F+\mathcal{E}_F$ is of size at most $k$.
By Lemma~\ref{lem:FTV}, there exists a minimal chordal completion $\mathcal{E}_G$
of $G$ such that for every clique $\Omega$ of $G+\mathcal{E}_G$, the set 
$\Omega \cap V(F)$ is either empty or is a clique of $F + \mathcal{E}_F$.
In particular, if $(T,\beta)$ is the clique tree of $G+\mathcal{E}_G$ (from Proposition~\ref{prop:chordal_clique_tree}), then $|\beta(t) \cap V(F)| \leq k$ for every $t \in V(T)$, 
   so $(T,\beta_F)$ is a tree decomposition of $F$ of width less than $k$,
   where $\beta_F(t) = \beta(t) \cap V(F)$ for every $t \in V(T)$. 
Since $\beta(t)$ is a maximal clique of $G+\mathcal{E}_G$ for every $t \in V(T)$, 
by the assumptions of Theorem~\ref{thm:DP_intro}, $\coverfam$
contains an $F$-container for $\beta(t)$. 

This verifies the promise of Theorem~\ref{thm:DP} and thus completes the proof
of Theorem~\ref{thm:DP_intro}, assuming Theorem~\ref{thm:DP}.

\subsection{Proof of Theorem~\ref{thm:DP}}\label{ssec:DP_proof}
  
  Let us first give some intuition how the proof of Theorem~\ref{thm:DP} works and where it
  differs from the proofs of analogous statements proved by Fomin and Villanger~\cite{FominV10},
  and Fomin, Todinca, and Villanger~\cite{FominTV15}. 
  Fix a solution $F$ and a tree decomposition $(T,\beta)$ as in the theorem statement. 

  In~\cite{FominV10,FominTV15}, we are given a family $\mathcal{B}$ that contains all \emph{bags}
  of the tree decomposition $(T,\beta)$.
  The dynamic programming state consists of a set $B \in \mathcal{B}$,
  a set $Q \subseteq B$ of size at most $k$, and a component $D \in \cc(G-B)$.
  The dynamic programming algorithm computes a partial solution $\dpres(B,Q,D) \subseteq D$
  that is intended to fit to solutions $F'$ with $V(F') \cap B = Q$.
  That is, we aim at achieving $\dpres(B,Q,D) = D \cap V(F)$ whenever
  $B = \beta(t)$ for some $t \in V(T)$ and $Q = B \cap V(F)$. 
  In one step of the dynamic programming algorithm, given $(B,Q,D)$, the algorithm 
  tries all possibilities for $B' \in \mathcal{B}$ and $Q' \subseteq B'$ of size at most $k$.
  For fixed $B'$ and $Q'$, the algorithm assembles a candidate for $\dpres(B,Q,D)$
  from entries $\dpres(B',Q',D')$ for every $D' \in \cc(G-B')$ where $D \cap D' \neq \emptyset$.
  Whenever indeed $B = \beta(t)$ and $Q = B \cap V(F)$ for some $t \in V(T)$,
  we aim at obtaining the correct solution
  when using $B' = \beta(t')$ and $Q' = \beta(t') \cap V(F)$ for $t'$ being
  the neighbor of $t$ in the component of $T-\{t\}$ whose bags contain all vertices of $D$.
  
  In our algorithm, we given a list $\coverfam$ that contains only \emph{containers} for bags $\beta(t)$,
  not the bags exactly. The difficulty in the above approach 
  appears where the container $A$ of $\beta(t)$ is ``much larger'' than the container $A'$ of $\beta(t')$
  and for a number of components $D' \in \cc(G-A')$ we have both $D' \cap D \neq \emptyset$
  and $D' \cap A \neq \emptyset$ (which cannot happen in the setting of~\cite{FominV10,FominTV15}).
  Then, when the optimum solution is not unique, optimum partial solutions for
  states $(A',Q',D')$ may intersect $A$ outside $\beta(t)$, causing inconsistencies. 

  The main trick in our proof is to canonize the solution first to a lexicographically-minimum
  solution. This removes ambiguities in a way that can be decided on the level of partial
  solutions for a fixed state $(A,Q,D)$.

  \medskip
  Let us proceed to the main proof. 
  As promised, we start with some canonization definitions. 
	The \emph{lexicographic order} on subsets of $V(G)$ is defined as follows.
	We order the vertices of $V(G)$ arbitrarily as $\{v_1,v_2,\ldots,v_n\}$ where $n = |V(G)|$
	and with a set $B \subseteq V(G)$ we associate a $\{0,1\}$-vector $\iota_B$ of length $n$
	with $\iota_B[i] = 1$ if and only if $v_i \in B$, for $i \in [n]$. 
	For two subsets $B_1,B_2 \subseteq V(G)$, we have that $B_1$ is lexicographically earlier
	than $B_2$, $B_1 \lexlt B_2$ if $\iota_{B_1}$ is lexicographically earlier than $\iota_{B_2}$.
	Lexicographic order allows us to define an order $\indlt$ on induced subgraphs of $G$.
	If $F_1$ and $F_2$ are two induced subgraphs of $G$, then $F_1 \indlt F_2$
  if $\weight(V(F_1)) > \weight(V(F_2))$
	or $\weight(V(F_1)) = \weight(V(F_2))$ and $V(F_1) \lexlt V(F_2)$. 
	That is, the $\indlt$-minimum induced subgraph of treewidth less than $k$
  is the lexicographically first of all maximum-weight induced subgraphs of treewidth less than $k$.
  Our algorithm will in fact return such a set.
	
	We immediately have the following property.
	\begin{lemma}\label{lem:lex}
		If $B_1,B_2 \subseteq V(G)$ and $X \subseteq V(G)$ such that $B_1 \setminus X = B_2 \setminus X$,
		but $B_1 \cap X \lexlt B_2 \cap X$, then $B_1 \lexlt B_2$.
		Consequently, if $B_1,B_2 \subseteq V(G)$ are two vertex sets and $X \subseteq V(G)$ is such
		that $B_1 \setminus X = B_2 \setminus X$ but $B_1 \cap X \indlt B_2 \cap X$, then
		$B_1 \indlt B_2$.
	\end{lemma}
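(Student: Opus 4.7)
The plan is to verify both statements directly from the definition of the indicator vector $\iota_B$ and the lexicographic order, and then derive the $\indlt$ statement from the $\lexlt$ statement by a simple case analysis. Since this is essentially a bookkeeping argument on $\{0,1\}$-vectors, the only thing to watch is which index is responsible for the first coordinate of disagreement between $\iota_{B_1}$ and $\iota_{B_2}$.

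First I would prove the $\lexlt$ statement. The key observation is that since $B_1 \setminus X = B_2 \setminus X$, for every index $j$ with $v_j \notin X$ we have $\iota_{B_1}[j] = \iota_{B_2}[j]$; moreover, for every index $j$ with $v_j \in X$ we have $\iota_{B_1}[j] = \iota_{B_1 \cap X}[j]$ and $\iota_{B_2}[j] = \iota_{B_2 \cap X}[j]$. Let $i^\star$ be the smallest index where $\iota_{B_1 \cap X}$ and $\iota_{B_2 \cap X}$ disagree; by the hypothesis $B_1 \cap X \lexlt B_2 \cap X$ we have $\iota_{B_1 \cap X}[i^\star] = 0$ and $\iota_{B_2 \cap X}[i^\star] = 1$, so in particular $v_{i^\star} \in X$. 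Now for every $j < i^\star$, either $v_j \notin X$ (and then $\iota_{B_1}[j] = \iota_{B_2}[j]$ by the first observation above), or $v_j \in X$ (and then $\iota_{B_1}[j] = \iota_{B_1 \cap X}[j] = \iota_{B_2 \cap X}[j] = \iota_{B_2}[j]$ by minimality of $i^\star$). At index $i^\star$ itself, $v_{i^\star} \in X$ gives $\iota_{B_1}[i^\star] = 0 < 1 = \iota_{B_2}[i^\star]$. Therefore $i^\star$ is the first index of disagreement between $\iota_{B_1}$ and $\iota_{B_2}$, and the disagreement is in the required direction, so $B_1 \lexlt B_2$.

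For the second statement, I would unpack the definition of $\indlt$ applied to the hypothesis $B_1 \cap X \indlt B_2 \cap X$ into two cases. If $\weight(B_1 \cap X) > \weight(B_2 \cap X)$, then adding the equal quantity $\weight(B_1 \setminus X) = \weight(B_2 \setminus X)$ to both sides yields $\weight(B_1) > \weight(B_2)$, and the definition of $\indlt$ directly gives $B_1 \indlt B_2$. Otherwise, $\weight(B_1 \cap X) = \weight(B_2 \cap X)$ and $B_1 \cap X \lexlt B_2 \cap X$; then $\weight(B_1) = \weight(B_2)$ and by the already established first statement $B_1 \lexlt B_2$, so again $B_1 \indlt B_2$ by definition.

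The proof is entirely elementary and I do not anticipate any real obstacle; the only potential pitfall is a sign/direction slip in the convention for $\lexlt$ (that an earlier $0$ beats a later $1$), which is cleared up by identifying $i^\star$ explicitly as above.
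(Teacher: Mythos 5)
Your proof is correct. The paper states Lemma~\ref{lem:lex} without proof, treating it as immediate (``We immediately have the following property.''); your argument is exactly the elementary bookkeeping one would write to justify it: for the $\lexlt$ claim you correctly identify the first index of disagreement $i^\star$ (necessarily in $X$), show agreement for all smaller indices via the two observations, and verify the direction $0 < 1$ at $i^\star$; for the $\indlt$ claim the two-case unpacking (weight strictly larger, or weights equal plus $\lexlt$) combined with additivity of $\weight$ over the disjoint pieces $B_i \cap X$ and $B_i \setminus X$ is precisely right.
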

	

	
	We start by defining the set of states of our dynamic programming algorithm. A \emph{state}
	is a tuple $(A,Q,D)$ where $A \in \coverfam$, $Q \subseteq A$
	is of size at most $k$, and $D$ is a connected component of $G-A$. 
	Let $\States$ be the set of states.
  A set $P \subseteq D$ is a \emph{feasible solution} to the state $(A,Q,D)$
  if $G[P \cup Q]$ admits a tree decomposition of width less than $k$
  with $Q$ being contained in one of the bags.

  Observe that one can verify in time $n^{\Oh(k)}$ whether $P$ is a feasible solution
  to $(A,Q,D)$ by applying the algorithm of Arnborg, Corneil, and Proskurowski~\cite{tw-nk}
  (that verifies if a given $n$-vertex graph has treewidth less than $k$ in 
   time $\Oh(n^{k+1})$) to the graph $G[P \cup Q]$ with $Q$ turned into a clique.
  
	For every state $(A,Q,D) \in \States$ the algorithm will compute a set $\dpres(A,Q,D)$
	that is a feasible solution to $(A,Q,D)$. 
  The algorithm initializes $\dpres(A,Q,D) := \emptyset$ for every state $(A,Q,D)$;
  note that $\emptyset$ is a feasible solution to every state due to the assumption $|Q| \leq k$.

  We will need the following observation.
  \begin{lemma}\label{lem:DP-glue}
  Let $A \in \coverfam$, $Q \subseteq A$ be of size at most $k$,
  let $\mathcal{D} \subseteq \cc(G-A)$, and let $(J_D)_{D \in \mathcal{D}}$ be
 such that $J_D$ is a feasible solution to $(A,Q,D)$ 
  for every $D \in \mathcal{D}$.
  Define
  $$F(A,Q,\mathcal{D},(J_D)_{D \in \mathcal{D}}) := Q \cup \bigcup_{D \in \mathcal{D}} J_D.$$
  Then, $G[F(A,Q,\mathcal{D},(J_D)_{D \in \mathcal{D}})]$ admits a tree decomposition
  of width less than $k$ with $Q$ contained in one of the bags.
  \end{lemma}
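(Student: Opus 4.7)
The plan is to build the desired tree decomposition of $G[F]$ by gluing together the tree decompositions of $G[J_D \cup Q]$ promised by the definition of feasible solution, joining them at a fresh common node whose bag equals $Q$.

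First I would invoke the feasibility assumption: for every $D \in \mathcal{D}$, the graph $G[J_D \cup Q]$ admits a tree decomposition $(T_D, \beta_D)$ of width less than $k$ that contains a node $t_D^\ast \in V(T_D)$ with $Q \subseteq \beta_D(t_D^\ast)$. I would then construct the tree $T$ by taking the disjoint union of the $T_D$'s together with a new node $r$, and adding an edge $r t_D^\ast$ for every $D \in \mathcal{D}$; set $\beta(r) := Q$ and $\beta(t) := \beta_D(t)$ for every $t \in V(T_D)$. The bag $\beta(r) = Q$ clearly has size at most $k$, so the width is still less than $k$, and $Q$ is contained in a single bag, as required.

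Next I would verify that $(T,\beta)$ is a genuine tree decomposition of $G[F]$. The vertex-cover condition is immediate on each $V(T_D)$, and the root $r$ adds nothing new. For the edge-cover condition, the key observation is that every edge of $G[F]$ lies inside some $G[J_D \cup Q]$: an edge with both endpoints in $Q$ is covered inside any $(T_D,\beta_D)$ (and also by $\beta(r)$); an edge with one endpoint in $Q$ and the other in some $J_D$ is covered inside $(T_D,\beta_D)$; and there is no edge between $J_{D_1}$ and $J_{D_2}$ for distinct $D_1, D_2 \in \mathcal{D}$, because $J_{D_i} \subseteq D_i$ and the $D_i$ are distinct connected components of $G - A$, hence separated by $A \supseteq Q$. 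Finally, for the connectedness-of-subtree condition I would argue case by case: vertices of $Q$ are in $\beta(r)$ and in $\beta_D(t_D^\ast)$ for each $D$, and the subtree of $T_D$ containing a given $v \in Q$ includes $t_D^\ast$ by the decomposition property of $(T_D,\beta_D)$, so all of these subtrees glue through $r$ into a single subtree of $T$; a vertex of some $J_D$ appears only in bags within $T_D$, where connectedness is inherited from $(T_D,\beta_D)$.

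I do not foresee a real obstacle here; the only substantive point is the absence of edges between different $J_D$'s, which follows directly from $\mathcal{D} \subseteq \cc(G-A)$ and $J_D \subseteq D$. Everything else is a routine verification of the three tree-decomposition axioms on the glued tree, and the width bound is immediate from $|Q| \leq k$ together with the width bounds of the constituent decompositions.
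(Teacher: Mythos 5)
Your proof is correct and follows essentially the same approach as the paper: take the disjoint union of the tree decompositions $(T_D,\beta_D)$ of $G[Q\cup J_D]$ and attach a fresh node with bag $Q$ to each designated node $t_D^\ast$. The paper states the construction and asserts it works; you additionally spell out the verification of the three axioms, with the only non-trivial point being the absence of edges between distinct $J_{D_1}$ and $J_{D_2}$, which you correctly justify via $\mathcal{D}\subseteq\cc(G-A)$.
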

  \begin{proof}
  Fix $D \in \mathcal{D}$. 
  Since $J_D$ is a feasible solution for $(A,Q,D)$, there exists
  a tree decomposition $(T_D,\beta_D)$ of $G[Q \cup J_D]$ of width less than $k$
  with a node $t_D \in V(T_D)$ such that $Q \subseteq \beta_D(t_D)$. 

  Construct a tree decomposition $(T,\beta)$ of $G[F(A,Q,\mathcal{D},(J_D)_{D \in \mathcal{D}})]$ as follows.
  First, let $T$ be obtained by taking a disjoint union of all trees $T_D$, for $D \in \mathcal{D}$,
  and adding a new node $t$, which is adjacent to $t_D$ for every $D \in \mathcal{D}$. 
  Second, define $\beta$ to be the union of all $\beta_D$ for $D \in \mathcal{D}$, and additionally
  $\beta(t) = Q$. 
  Then, $(T,\beta)$ is a tree decomposition of $G[F(A,Q,\mathcal{D},(J_D)_{D \in \mathcal{D}})]$
  of width less than $k$ with $\beta(t) = Q$, as desired.
  \cqed\end{proof}

	Let $F$ be the $\indlt$-minimum induced subgraph of $G$ of treewidth less than $k$.
	Let $(T,\beta)$ be the tree decomposition promised for $F$ in the theorem statement.
	By standard arguments, we can assume that $|E(T)| \leq |V(G)|$. Indeed, if there
	is an edge $t_1t_2 \in E(T)$ with $\beta(t_1) \subseteq \beta(t_2)$, we can contract the edge
	$t_1t_2$, keeping $\beta(t_2)$ as the bag associated to the resulting node. 
  It is straightforward to verify that such a contraction does not break the promised 
  properties of $(T,\beta)$. 
  If no such contraction
	is possible, root $T$ at an arbitrary node and observe that for every edge $t_1t_2$
	with $t_2$ being the parent and $t_1$ being the child,
	there is at least one vertex in $\beta(t_1) \setminus \beta(t_2)$ and every vertex
	of $V(G)$ can be an element of $\beta(t_1) \setminus \beta(t_2)$ for at most one
	pair $(t_1,t_2)$ where $t_1$ is a child of $t_2$. Thus, there are at most $|V(G)|$
	edges of $T$.
	
	For every $t \in V(T)$, let $A_t \in \coverfam$ be the container promised in the theorem statement,
	that is, $\beta(t) \subseteq A_t$ while $A_t \cap V(F) = \beta(t) \cap V(F)$.
  In particular, this implies that $|A_t \cap V(F)| \leq k$ for every $t \in V(T)$,
  so $(A_t,A_t \cap V(F),D) \in \States$ for every $t \in V(T)$ and $D \in \cc(G-A_t)$.
  
  We observe now the following straightforward corollary of the properties of a tree 
  decomposition.
  \begin{lemma}\label{lem:DP-tD}
  For every $t \in V(T)$ and $D \in \cc(G-A_t)$ there exists a unique neighbor $t_D$
  of $t$ in $T$ such that the vertices of $D$ appear only
  in bags in the component $T_D$ of $T-\{tt_D\}$ that contains $t_D$.
  \end{lemma}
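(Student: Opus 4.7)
The plan is to prove the lemma directly from the two defining properties of tree decompositions, together with the fact that $\beta(t)\subseteq A_t$. The key observation is that any vertex $v\in D$ is missing from $A_t$, hence missing from $\beta(t)$, so the subtree of $T$ carrying $v$ lies entirely in one component of $T-\{t\}$; the argument then shows that all vertices of $D$ select the same component.

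In more detail, first I would fix $t$ and $D\in\cc(G-A_t)$ and, for every $v\in D$, consider the set $T_v:=\{s\in V(T):v\in\beta(s)\}$. By the second tree-decomposition axiom, $T_v$ is nonempty and induces a connected subgraph of $T$. Since $D\cap A_t=\emptyset$ and $\beta(t)\subseteq A_t$, we have $v\notin\beta(t)$, so $t\notin T_v$. Consequently $T_v$ is contained in a unique component of $T-\{t\}$, and this component corresponds to a unique neighbor $t_v$ of $t$ in $T$.

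Next I would show that the assignment $v\mapsto t_v$ is constant on $D$. Given two vertices $v,w\in D$ with $vw\in E(G)$, the first tree-decomposition axiom gives a bag $\beta(s)$ containing both $v$ and $w$, so $s\in T_v\cap T_w$; since $T_v$ sits in the branch attached via $t_v$ and $T_w$ in the branch attached via $t_w$, this forces $t_v=t_w$. Since $D$ is connected, iterating along a path in $G[D]$ between any two vertices yields a common value $t_D$ for the whole of $D$. I would then set $t_D$ to be this common neighbor and take $T_D$ to be the component of $T-\{tt_D\}$ containing $t_D$; by construction $T_v\subseteq V(T_D)$ for every $v\in D$, so all vertices of $D$ appear only in bags indexed by $V(T_D)$.

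For uniqueness, I would note that if $t'$ were another neighbor of $t$ whose component of $T-\{tt'\}$ contained every occurrence of every vertex of $D$, then picking any $v\in D$ would force $T_v$ to lie in both the $t_D$-branch and the $t'$-branch, and since $T_v$ is nonempty and $t\notin T_v$, this would give $t'=t_D$. No step is really an obstacle here; the mild subtlety to keep in mind is the degenerate case $D=\emptyset$, which does not arise since elements of $\cc(G-A_t)$ are nonempty, and the case where $D$ is a single isolated vertex, which is handled directly by step one without needing the connectivity argument.
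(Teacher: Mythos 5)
The paper does not actually spell out a proof of this lemma --- it is introduced as a ``straightforward corollary of the properties of a tree decomposition'' and left to the reader. Your proof correctly fills in that gap: the observation that $D\cap A_t=\emptyset$ and $\beta(t)\subseteq A_t$ force $t\notin T_v$ for each $v\in D$, combined with connectivity of $T_v$, of $D$, and the edge-coverage axiom to collapse the branch choice to a single neighbor $t_D$, is exactly the intended argument and is sound. The only point worth noting explicitly, which you glide over but which is harmless, is that a component of $T-\{t\}$ and the corresponding component of $T-\{tt_D\}$ coincide, since the deleted vertex $t$ meets each branch only through the single edge $tt_D$.
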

  By the choice of $(T,\beta)$, the decomposition $(T,\beta_F)$ is a tree decomposition
  of $F$ of width less than $k$, where $\beta_F(t) = \beta(t) \cap V(F)$ for every $t \in V(T)$.
  
  Fix $t \in V(T)$ and $D \in \cc(G-A_t)$.
  Let $Q = V(F) \cap A_t$; since $A_t$ is an $F$-container for $\beta(t)$, we know that
  $Q = \beta_F(t)$. 
  Let $t_D$ and $T_D$ be as in Lemma~\ref{lem:DP-tD} for $t$ and $D$.
  Let $T_D'$ be obtained from the tree $T_D$ by adding the vertex $t$ and the edge $tt_D$.
  In other words, $T_D'$ is the subtree of $T$ induced by $V(T_D) \cup \{t\}$.
  Let $\beta_{F,t,D}$ be defined as $\beta_{F,t,D}(t') := \beta_F(t') \cap (D \cup A_t)$ 
  for all $t' \in V(T_D')$. 
  Then, $(T_D',\beta_{F,t,D})$ is a tree decomposition of $F[A_t \cup D]$ of width less than
  $k$, satisfying $Q = \beta_{F,t,D}(t)$. 
  Hence, $D \cap V(F)$ is a feasible solution to $(A_t,A_t \cap V(F),D)$. 

	Furthermore, Lemma~\ref{lem:lex} implies
	that the set $D \cap V(F)$ is $\indlt$-minimum feasible solution to $(A_t,A_t \cap V(F),D)$.
	Indeed, if there were a set $J \indlt \left( D \cap V(F) \right)$ that is also a feasible solution
  to $(A_t,A_t \cap V(F),D)$, then
  $F' := G[(V(F) \setminus D) \cup J]$ would also be of treewidth
 less than $k$ (thanks to Lemma~\ref{lem:DP-glue})
  and $V(F') \indlt V(F)$, contradicting the choice of $F$.
  
  We will prove that our algorithm actually finds $D \cap V(F)$
  as a feasible solution for every $t \in V(T)$ and $D \in \cc(G-A_t)$.
  That is, we will prove that in the end the algorithm attains the following property.
	\begin{equation}\label{eq:dp1}
	\dpres(A_t,A_t \cap V(F),D) = D \cap V(F)\qquad\qquad\mathrm{for\ every\ }t \in V(T)\mathrm{\ and\ }D \in \cc(G-A_t).
	\end{equation}
	Assume for the moment that the values $\dpres(\cdot)$ are computed such that~\eqref{eq:dp1}
	is satisfied. We show how to conclude. Iterate over all sets $A \in \coverfam$
	and sets $Q \subseteq A$ of size at most $k$. For every pair $(A,Q)$
	compute 
  $$F_{A,Q} := F(A,Q,\cc(G-A),(\dpres(A,Q,D))_{D \in \cc(G-A)}).$$

  Lemma~\ref{lem:DP-glue} asserts that $G[F_{A,Q}]$ is of treewidth less than $k$.
	Our algorithm returns the $\indlt$-minimum set among all considered sets $F_{A,Q}$.
	Clearly, given the values $\dpres(\cdot)$, choosing such $F_{A,Q}$ can be done
	in time $|\coverfam| \cdot |V(G)|^{\Oh(k)}$. Furthermore, for every $t \in V(T)$ there is an interation where
	the algorithm considers the pair $(A_t,A_t \cap V(F))$ and then~\eqref{eq:dp1} ensures that
	$F_{A_t,A_t \cap V(F)} = V(F)$. Thus, the algorithm returns $V(F)$.
	It remains to show how to compute the values $\dpres(\cdot)$ so that the property~\eqref{eq:dp1}
  is satisfied. 
	
	Recall that the algorithm initializes $\dpres(A,Q,D) := \emptyset$ for every $(A,Q,D) \in \States$.
	The algorithm performs $|V(G)|$ rounds. In each round, the algorithm inspects
	every state $(A,Q,D) \in \States$ and performs the following computation. 
	It iterates over every pair $(A',Q')$, where $A' \in \coverfam$ and $Q' \subseteq A'$
	is of size at most $k$, such that $Q \cap (A \cap A') = Q' \cap (A \cap A')$.
	For a fixed pair $(A',Q')$, let 
	$$\mathcal{D} := \{D' \in \cc(G-A')~|~D' \cap D \neq \emptyset\}.$$
	The algorithm inspects all values $\dpres(A',Q',D')$ for $D' \in \mathcal{D}$
  and computes 
  $$J := D \cap F(A',Q',\mathcal{D},(\dpres(A',Q',D'))_{D' \in \mathcal{D}}).$$
	If $J$ is a feasible solution to $(A,Q,D)$ and $J \indlt \dpres(A,Q,D)$,
  then the algorithm updates the value $\dpres(A,Q,D)$ by setting  $\dpres(A,Q,D) := J$. 
	We shall later refer to the above step as \emph{considering $J$ as a candidate for $\dpres(A,Q,D)$}.

	Clearly, the algorithm runs in time $|\coverfam|^2 |V(G)|^{\Oh(k)}$.
	It remains to show the property~\eqref{eq:dp1}.

  Fix $t \in V(T)$ and $D \in \cc(G-A_t)$.
	Since $D \cap V(F)$ is the $\indlt$-minimum feasible solution to $(A_t,A_t \cap V(F), D)$, 
  if at some moment the algorithm considers $D \cap V(F)$ as a candidate value
	for $\dpres(A_t,A_t \cap V(F),D)$, then it sets $\dpres(A_t,A_t \cap V(F),D) := D \cap V(F)$
	and never changes it later.
  Thus, it suffices to show that 
  the set $D \cap V(F)$ is at least once considered as a candidate for $\dpres(A_t,A_t \cap V(F),D)$.

	For a pair $(t_1,t_2)$ of adjacent nodes of $T$, the \emph{depth} of $(t_1,t_2)$ is the maximum
	number of edges on a simple path in $T$ that starts in $t_1$ and has $t_2$ as a second vertex.
  Let $t_D$ and $T_D$ be as in Lemma~\ref{lem:DP-tD} for $t$ and $D$.
	Let $d$ be the depth of $(t,t_D)$.
	We will show by induction on the depth of $(t,t_D)$ that $\dpres(A_t,A_t \cap V(F),D) = D \cap V(F)$ after
	$d$ rounds.
	
	To this end, we show that in $d$-th round we consider $J = D \cap V(F)$
	for the pair $(A',Q') = (A_{t_D}, A_{t_D} \cap V(F))$. 
  Clearly, $(A_t \cap V(F)) \cap (A_t \cap A_{t_D}) = (A_{t_D} \cap V(F)) \cap (A_t \cap A_{t_D})$,
  so the pair $(A',Q') = (A_{t_D}, A_{t_D} \cap V(F))$ is considered by the algorithm
  while iterating over pairs $(A',Q')$ for the state $(A_t,A_t \cap V(F), D)$. 
	Recall that
	$$\mathcal{D} = \{D' \in \cc(G-A_{t_D})~|~D' \cap D \neq \emptyset\}.$$
	
	From the properties of a tree decomposition we infer the following.
	\begin{lemma}\label{lem:progress}
		For every $D' \in \mathcal{D}$ there exists a neighbor $s_{D'}$ of $t_D$
		distinct from $t$ such that all vertices of $D'$
		lie only in bags of the component of $T-\{t_Ds_{D'}\}$ that contains $s_{D'}$.
	\end{lemma}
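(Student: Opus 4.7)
The plan is to show two things: (a) every bag containing a vertex of $D'$ corresponds to a node of $T_D$, and (b) within $T_D$, all such bags sit in a single component of $T_D-\{t_D\}$, whose corresponding neighbor of $t_D$ will serve as $s_{D'}$.

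For (a), since $D' \in \mathcal{D}$ we may pick some $v \in D' \cap D$. By Lemma~\ref{lem:DP-tD} applied to $D$, the bag-subtree of $v$ is contained in $T_D$. Moreover, $\beta(t_D) \subseteq A_{t_D}$ while $D' \cap A_{t_D} = \emptyset$, so no vertex of $D'$ lies in $\beta(t_D)$. Now for an arbitrary $u \in D'$, walk along a path in $G[D']$ from $v$ to $u$; each consecutive pair shares a bag. Inductively, the bag-subtree of each traversed vertex meets that of its predecessor and therefore intersects $T_D$, but avoids $t_D$ since the vertex is not in $\beta(t_D)$; as $T_D-\{t_D\}$ attaches to the rest of $T$ only through $t_D$, the whole bag-subtree of that vertex lies in $T_D-\{t_D\}$.

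For (b), decompose $T_D - \{t_D\}$ into its connected components, indexed by the neighbors $s_1, \ldots, s_r$ of $t_D$ other than $t$. By (a), each $u \in D'$ has its bag-subtree inside a single such component, giving a map $f \colon D' \to \{1, \ldots, r\}$. For any edge $uu'$ of $G[D']$ the vertices $u, u'$ share a bag, and bags in distinct components of $T_D-\{t_D\}$ are disjoint, so $f(u) = f(u')$. Connectivity of $G[D']$ then forces $f$ to be constant with some value $i$, and we set $s_{D'} := s_i$. The only mildly delicate point is the inductive step in (a), which relies on combining two ingredients: one vertex of $D'$ already has its bag-subtree entirely inside $T_D$, and $D'$ avoids $\beta(t_D)$ altogether, so once a bag-subtree has entered $T_D$ it cannot leak back out through $t_D$.
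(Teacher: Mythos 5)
Your proof is correct and takes essentially the same approach as the paper. The paper first asserts (as a standard tree-decomposition fact, without re-deriving it) that since $\beta(t_D)\subseteq A_{t_D}$, each $D'\in\cc(G-A_{t_D})$ gets an associated neighbor $s_{D'}$ of $t_D$, and then identifies the ``crux'' as showing $s_{D'}\neq t$: picking $v\in D'\cap D$, the bag-subtree of $v$ must lie both in the component of $T-\{tt_D\}$ containing $t_D$ and in the component of $T-\{t_Ds_{D'}\}$ containing $s_{D'}$, which rules out $s_{D'}=t$. Your steps (a) and (b) re-derive the existence of $s_{D'}$ explicitly (using connectivity of $G[D']$ and the fact that $D'$ avoids $\beta(t_D)$) rather than citing the analogue of Lemma~\ref{lem:DP-tD} for $A_{t_D}$, but the load-bearing observation — anchoring via $v\in D'\cap D$ to confine $D'$ to $T_D$, then using $D'\cap\beta(t_D)=\emptyset$ together with connectivity to single out one side of $t_D$ — is the same.
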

	\begin{proof}
		Since $\beta(t_D) \subseteq A_{t_D}$, for every $D' \in \cc(G-A_{t_D})$ 
		there exists a neighbor $s_{D'}$ of $t_D$ such that 
		all vertices of $D'$ lie only in bags of the component of $T-\{t_Ds_{D'}\}$ that contains $s_{D'}$.
		The crux is to show that if $D' \in \mathcal{D}$, then $s_{D'} \neq t$.
		
		Pick $v \in D' \cap D$. There exists a node $s \in V(T)$ with $v \in \beta(s)$. 
		By the choice of $t_D$, the node $s$ lies in the component of $T-\{tt_D\}$ that contains $t_D$.
		By the choice of $s_{D'}$, the node $s$ lies in the component of $T-\{t_Ds_{D'}\}$ that contains $s_{D'}$.
		Hence, $t = s_{D'}$ would give a contradiction. This completes the proof.
		\cqed\end{proof}
	Observe that for every neighbor $s$ of $t_D$ that is distinct from $t$,
	the depth of $(t_D,s)$ is strictly smaller than the depth of $(t,t_D)$.
	Consequently, by the inductive hypothesis, $\dpres(A_{t_D}, A_{t_D} \cap V(F), D') = D' \cap V(F)$
	for every $D' \in \mathcal{D}$. 
  Thus, the algorithm considers as a candidate for $\dpres(A_t,A_t \cap V(F), D)$ the value
	\begin{align*}
	J &= D \cap \left((A_{t_D} \cap V(F)) \cup \bigcup_{D' \in \mathcal{D}} \dpres(A_{t_D}, A_{t_D} \cap V(F),D')\right)\\
	&= D \cap \left((A_{t_D} \cap V(F)) \cup \bigcup \left\{D' \cap V(F)~|~D' \in \cc(G-A_{t_D}) \wedge D' \cap D \neq \emptyset\right\}\right)\\
	&= D \cap V(F).
	\end{align*}
	Hence, $\dpres(A_t,A_t \cap V(F),D) = D \cap V(F)$ after $d$ rounds of the algorithm.
	This completes the proof of property~\eqref{eq:dp1} and thus of Theorem~\ref{thm:DP}.

\section{Conclusion}\label{sec:conclusion}
In this paper, we modify the dynamic programming algorithm in the framework
of potential maximal cliques to take as input a set of containers of potential maximal cliques. 
We apply it to the class $\ourclass$ that contains both long-hole-free graphs and $P_5$-free graphs.
We hope that the method of containers will find applications in other scenarios as well.

We would like to discuss here three directions of generalizations of Theorem~\ref{thm:DP_intro}.
Recall the requirement of the theorem that for every induced subgraph $F$ of $G$ of treewidth
less than $k$ and every potential maximal clique $\pmc$ of $G$, the supplied family
$\coverfam$ contains an $F$-container for $\Omega$.

\paragraph{Allowing $\Oh(1)$ extra vertices of the solution in a container.}
In the first direction, let us focus on the requirement
$A \cap V(F) = \pmc \cap V(F)$ for the set $A$ to be an $F$-container for $\pmc$.
We observe that this requirement can be easily generalized to allow $A$ to contain a constant
number of vertices of $F$ that are not in $\pmc$.
More formally, for an integer $p$ and an induced subgraph $F$ of $G$, 
we say that $A \subseteq V(G)$ is an \emph{$(F,p)$-container} for $\pmc \subseteq V(G)$
if $\pmc \subseteq A$ and $|(A \setminus \pmc) \cap V(F)| \leq p$.
In particular, an $(F,0)$-container is an $F$-container.

Assume that we can enumerate a family $\coverfam$ with only the promise that
$\coverfam$ contains an $(F, p)$-container for $\pmc$ for every $F$ and $\pmc$
as in Theorem~\ref{thm:DP_intro}.
Then, the family
$$\coverfam' := \left\{A \setminus B~|~A \in \coverfam \wedge B \subseteq A \wedge |B| \leq p\right\}$$
is of size $\Oh(|\coverfam| n^p)$ and contains an $F$-container
for every $F$ and $\Omega$.

\paragraph{Enumerating containers for only selected PMCs.}
In the second direction, let us focus on the necessity to enumerate in $\coverfam$ a container for \emph{every} PMC. 
The main insight of the work of Lokshtanov, Vatshelle, and Villanger~\cite{LokshtanovVV14} is
to enumerate only some PMCs, guaranteeing that for the sought solution $I$ there exists
a minimal chordal completion that does not add any edge incident with $I$ and all maximal
cliques of that completion are enumerated. 
An astute reader can notice that the statement of Theorem~\ref{thm:DP}, a technical statement behind Theorem~\ref{thm:DP_intro},
requires only to list containers for bags of the promised tree decomposition $(T,\beta)$ of $G$ for any feasible solution $F$. 
Furthermore, in the proof of Theorem~\ref{thm:DP_intro}, we use only containers for bags of the decomposition $(T,\beta)$ for the
$\indlt$-minimum solution $F$ (i.e., lexicographically-minimum solution of maximum weight).
Hence, we can state the following generalization of Theorem~\ref{thm:DP_intro}.
\begin{theorem}\label{thm:DP_intro_gen}
Assume we are given a graph $G$ with weight function $\weight : V(G) \to \N$, 
a family $\coverfam$ of subsets of $V(G)$, and a positive integer $k$ with the following promise:
\begin{displayquote}
For every induced subgraph $F$ of $G$ of treewidth less than $k$
there exists a minimal chordal completion $\mathcal{E}$ of $G$
such that
\begin{itemize}
\item every clique of $(G+\mathcal{E})[V(F)]$ is of size at most $k$, and
\item for every maximal clique $\pmc$ of $G+\mathcal{E}$,
  $\coverfam$ contains an $F$-container for $\pmc$.
\end{itemize}
\end{displayquote}
\noindent Then, one can in time $|\coverfam|^2 |V(G)|^{\Oh(k)}$
find a maximum-weight induced subgraph of $(G,\weight)$ of treewidth less than $k$.
\end{theorem}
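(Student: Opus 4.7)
The plan is to reduce Theorem~\ref{thm:DP_intro_gen} to Theorem~\ref{thm:DP} by invoking the algorithm of Theorem~\ref{thm:DP} on the very same tuple $(G, \weight, \coverfam, k)$. All that remains to check is that the promise of Theorem~\ref{thm:DP_intro_gen} implies the promise of Theorem~\ref{thm:DP}; the running time bound and the correctness of the output then follow automatically from Theorem~\ref{thm:DP}.

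Fix an induced subgraph $F$ of $G$ of treewidth less than $k$. By the strengthened promise, there exists a minimal chordal completion $\mathcal{E}$ of $G$ such that (a)~every clique of $(G+\mathcal{E})[V(F)]$ has size at most $k$, and (b)~for every maximal clique $\pmc$ of $G+\mathcal{E}$, the family $\coverfam$ contains an $F$-container for $\pmc$. Apply Proposition~\ref{prop:chordal_clique_tree} to the chordal graph $G+\mathcal{E}$ to obtain a clique tree $(T,\beta)$, that is, a tree decomposition whose bags are precisely the maximal cliques of $G+\mathcal{E}$. Since $G$ is a subgraph of $G+\mathcal{E}$, the pair $(T,\beta)$ is also a tree decomposition of $G$, and this is the decomposition I would hand to Theorem~\ref{thm:DP}.

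The two conditions required by Theorem~\ref{thm:DP} now fall out directly. For each $t \in V(T)$, the bag $\beta(t)$ is a maximal clique of $G+\mathcal{E}$, so by (b) an $F$-container for $\beta(t)$ lies in $\coverfam$. Setting $\beta_F(t) := \beta(t) \cap V(F)$, the set $\beta_F(t)$ is a clique of $(G+\mathcal{E})[V(F)]$, and so by (a) has size at most $k$. Standard verification shows that $(T, \beta_F)$ is a tree decomposition of $F$: every edge of $F$ is an edge of $G$ hence covered by some $\beta(t)$, and its endpoints then lie in $\beta_F(t)$; for every $v \in V(F)$ the set $\{t : v \in \beta_F(t)\}$ coincides with $\{t : v \in \beta(t)\}$, which is a nonempty subtree of $T$. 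Therefore $(T,\beta_F)$ has width at most $k-1$, as required.

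There is no real obstacle: the reduction essentially mirrors that of Section~\ref{ssec:DP2DP}, except that Lemma~\ref{lem:FTV} is no longer needed to manufacture a chordal completion behaving well on $V(F)$ -- the strengthened promise supplies one for free. The only mild subtlety to flag is that the clique tree of $G+\mathcal{E}$ doubles as a tree decomposition of $G$ (fed to Theorem~\ref{thm:DP}) and, after restriction to $V(F)$, yields the desired low-width tree decomposition of $F$, because intersecting a maximal clique of $G+\mathcal{E}$ with $V(F)$ produces a clique of $(G+\mathcal{E})[V(F)]$ and hence a set of size at most $k$ by hypothesis~(a).
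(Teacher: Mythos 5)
Your proof is correct and matches the paper's intended argument. The paper does not spell out the reduction for Theorem~\ref{thm:DP_intro_gen}; it remarks only that Theorem~\ref{thm:DP} ``requires only to list containers for bags of the promised tree decomposition'' and that Theorem~\ref{thm:DP_intro_gen} follows. You have written out precisely the reduction the authors have in mind: take the clique tree of $G+\mathcal{E}$ from Proposition~\ref{prop:chordal_clique_tree}, observe it is simultaneously a tree decomposition of $G$ with bags equal to the maximal cliques of $G+\mathcal{E}$ (for which containers are promised) and, after intersecting bags with $V(F)$, a width-$(k-1)$ tree decomposition of $F$ by hypothesis (a). This is the argument of Section~\ref{ssec:DP2DP} with Proposition~\ref{prop:td2fillin} and Lemma~\ref{lem:FTV} stripped out, since the strengthened promise already hands you the chordal completion they were used to manufacture.
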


Lemma~\ref{lem:FTV}, originating in~\cite{FominV10,FominTV15}, is the crucial observation
allowing us to go from the world of tree decompositions in Theorem~\ref{thm:DP}
to the world of minimal chordal completions in Theorem~\ref{thm:DP_intro}.
For the special case of \textsc{MWIS} (i.e., $k=1$ in Theorem~\ref{thm:DP_intro}),
Lemma~\ref{lem:FTV} boils down exactly to an existence of a minimal chordal completion
of $G$ that does not add any edge incident to the $\indlt$-minimum solution $F$. 
Taking into account also the discussion in the previous paragraphs,
we can state the following variant of Theorem~\ref{thm:DP_intro_gen}, tailored
for \textsc{MWIS}.
\begin{theorem}
\label{thm:DP_MWIS}
Assume we are given a graph $G$ with weight function $\weight : V(G) \to \N$,
a family $\coverfam$ of subsets of $V(G)$, and an integer $p$ with the following promise:
\begin{displayquote}
For every maximal independent set $I$ of $G$
there exists a minimal chordal completion $\mathcal{E}$ of $G$ such that
\begin{itemize}
\item $\mathcal{E}$ does not contain any edge incident with $I$, and 
\item for every maximal clique $\Omega$ of $G+\mathcal{E}$,
$\coverfam$ contains an $(I,p)$-container for $\Omega$.
\end{itemize}
\end{displayquote}
\noindent Then, one can in time $|\coverfam|^2 |V(G)|^{\Oh(p)}$
find a maximum-weight independent set in $(G,\weight)$.
\end{theorem}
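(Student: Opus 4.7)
The plan is to deduce Theorem~\ref{thm:DP_MWIS} by chaining the two observations already spelled out earlier in this section: the passage from $(F,p)$-containers to $F$-containers by ``shaking off'' up to $p$ vertices, and the specialization of Theorem~\ref{thm:DP_intro_gen} to the case $k=1$, which is exactly \textsc{MWIS}.

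Concretely, setting $n := |V(G)|$, I first build the family
$$\coverfam' := \left\{ A \setminus B \mid A \in \coverfam,\ B \subseteq A,\ |B| \leq p \right\}$$
of size $\Oh(|\coverfam| \cdot n^p)$, in time $\Oh(|\coverfam| \cdot n^{p+1})$, and then run the algorithm of Theorem~\ref{thm:DP_intro_gen} on the tuple $(G, \weight, \coverfam', 1)$. Since a graph has treewidth less than $1$ precisely when it is edgeless, that algorithm returns a maximum-weight independent set of $G$; its running time is $|\coverfam'|^2 \cdot n^{\Oh(1)} = |\coverfam|^2 \cdot n^{\Oh(p)}$, matching the claim.

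The only content is to verify that $(G, \weight, \coverfam', 1)$ meets the promise of Theorem~\ref{thm:DP_intro_gen}. Fix an induced subgraph $F$ of $G$ of treewidth less than $1$, that is, with $V(F)$ independent in $G$, and extend $V(F)$ to a maximal independent set $I$ by greedily adding vertices of $V(G) \setminus V(F)$. The promise of Theorem~\ref{thm:DP_MWIS} applied to $I$ yields a minimal chordal completion $\mathcal{E}$ of $G$ containing no edge incident with $I$; since $V(F) \subseteq I$, the set $V(F)$ remains independent in $G+\mathcal{E}$, so every clique of $(G+\mathcal{E})[V(F)]$ has size at most $1$. For each maximal clique $\Omega$ of $G + \mathcal{E}$, let $A \in \coverfam$ be the promised $(I,p)$-container for $\Omega$, and set $B := (A \setminus \Omega) \cap I$. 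Then $|B| \leq p$, so $A' := A \setminus B \in \coverfam'$. Since $B \cap \Omega = \emptyset$ we have $\Omega \subseteq A'$, and a direct set-theoretic computation using $\Omega \subseteq A$ gives $A' \cap I = \Omega \cap I$; combined with $V(F) \subseteq I$ this yields $A' \cap V(F) = \Omega \cap V(F)$, so $A'$ is an $F$-container for $\Omega$. Both conditions of the promise of Theorem~\ref{thm:DP_intro_gen} are thereby satisfied.

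Since the entire argument is a two-step composition of previously stated reductions, there is no real obstacle; the only point deserving care is the identity $A' \cap I = \Omega \cap I$, which follows in one line from the definition of $B$ and $\Omega \subseteq A$, together with the mild bookkeeping that the extension of $V(F)$ to a maximal $I$ preserves the container property because $I$-containers automatically descend to $F$-containers whenever $V(F) \subseteq I$.
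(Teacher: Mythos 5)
Your proof is correct and is exactly the composition the paper has in mind: shrink each $(I,p)$-container to an $F$-container by removing the $\le p$ extra solution vertices, then invoke Theorem~\ref{thm:DP_intro_gen} with $k=1$. The paper does not spell out a formal derivation (it says only ``taking into account also the discussion in the previous paragraphs''), but the step you supply---extending $V(F)$ to a maximal independent set $I$, taking $B := (A\setminus\Omega)\cap I$, and checking $A'\cap I = \Omega\cap I$ via $\Omega\subseteq A$---is precisely the missing bookkeeping, and it is carried out correctly.
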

That is, Theorem~\ref{thm:DP_MWIS}, being in fact a special case of Theorem~\ref{thm:DP} for $k=1$,
generalizes Theorem~\ref{thm:PMC_nonexhaustive_list} to containers.

\paragraph{Counting Monadic Second Order logic.}
In the third direction, we focus the use of Counting Monadic Second Order logic (CMSO), as in the work of Fomin, Todinca, and Villanger~\cite{FominTV15}.
The syntax of CMSO consists of basic boolean operations, vertex, edge, vertex set, and edge sets variables, and equality, containment, and incidence relations. 
Fomin, Todinca, and Villanger~\cite{FominTV15} considered the following problem for fixed CMSO formula $\phi$ with one free vertex set variable and an integer $k$: given a graph $G$, find
a pair $(F,X)$ maximizing $|X|$ such that $F$ is an induced subgraph of $G$ of treewidth less than $k$, $X \subseteq V(F)$, and $(F,X)$ satisfy $\phi$. 
They show that the problem can be solved in time polynomial in the size of $G$ and the number of PMCs in $G$, even if the input is equipped with vertex weights and we aim at maximizing the weight
of $X$. Note that this (weighted) problem generalizes the problem considered in Theorem~\ref{thm:DP_intro} by taking $\phi$ that requires $X = V(F)$. 

We observe that the same use of CMSO can smoothly and effortlessly be embedded into Theorems~\ref{thm:DP_intro} and~\ref{thm:DP}. 
That is, instead of asking for induced subgraph $F$ of treewidth less than $k$ maximizing the weight of $V(F)$, we can fix a CMSO formula $\phi$ as above and ask for a pair $(F,X)$
maximizing the weight of $X$ such that $F$ is an induced subgraph of $G$ of treewidth less than $k$, $X \subseteq V(F)$, and $(F,X)$ satisfy $\phi$. 
Then, the running time bound would be multiplied by a term depending only on $\phi$ and $k$:
\begin{theorem}\label{thm:DP_CMSO}
Assume we are given a graph $G$ with weight function $\weight : V(G) \to \N$, 
a family $\coverfam$ of subsets of $V(G)$, a positive integer $k$,
and a CMSO formula $\phi$ with one free vertex set variable,
 with the following promise:
\begin{displayquote}
For every induced subgraph $F$ of $G$ of treewidth less than $k$
and every potential maximal clique $\Omega$ of $G$,
if $|V(F) \cap \Omega| \leq k$, then $\coverfam$ contains an
$F$-container for $\Omega$.
\end{displayquote}
\noindent Then, one can in time $C(\phi, k) \cdot |\coverfam|^2 |V(G)|^{\Oh(k)}$
find a pair $(F,X)$ maximizing the weight of $X$ subject to the following constraints:
$F$ is an induced subgraph of $G$ of treewidth less than $k$,
$X \subseteq V(F)$, and $\phi$ is satisfied on $(F,X)$.
Here, $C(\phi,k)$ is a constant depending only on $\phi$ and $k$.
\end{theorem}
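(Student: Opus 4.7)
The plan is to extend the dynamic programming of Theorem~\ref{thm:DP} by adding a CMSO automaton state to each DP entry, following the standard Courcelle-style machinery used by Fomin, Todinca, and Villanger~\cite{FominTV15}. First, I would observe that the reduction from Theorem~\ref{thm:DP_intro} to Theorem~\ref{thm:DP} carried out in Section~\ref{ssec:DP2DP} goes through verbatim in the CMSO setting: the combinatorial step, relying on Proposition~\ref{prop:td2fillin} and Lemma~\ref{lem:FTV}, produces for every induced subgraph $F$ of treewidth less than $k$ a tree decomposition $(T,\beta)$ of $G$ such that $(T,\beta_F)$ has width less than $k$ and each bag $\beta(t)$ admits an $F$-container in $\coverfam$; the formula $\phi$ plays no role here. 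So it suffices to prove a CMSO-enriched analogue of Theorem~\ref{thm:DP}.

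Second, I would invoke Courcelle's theorem to obtain, for fixed $\phi$ and $k$, a finite tree automaton $\mathcal{A}$ with at most $C(\phi,k)$ states that evaluates $\phi$ on pairs $(H,X)$ presented via any tree decomposition of $H$ of width less than $k$, processing bags bottom-up. I would then augment each DP state $(A,Q,D)$ from Theorem~\ref{thm:DP} with a pair $(X_Q,\tau)$, where $X_Q \subseteq Q$ represents $X \cap Q$ and $\tau$ is the automaton state at the interface $Q$ produced by the partial solution below. The stored value $\dpres(A,Q,D,X_Q,\tau)$ is an $\indlt$-minimum pair $(J,X_J)$, with $J$ a feasible solution to $(A,Q,D)$ and $X_J \subseteq J$, such that running $\mathcal{A}$ on the subtree induced by $(X_Q,X_J)$ yields state $\tau$ at $Q$. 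The ordering $\indlt$ is redefined on pairs $(F,X)$ to prioritize the weight of $X$, with lexicographic tie-breaking first on $V(F)$ and then on $X$, so that the $\indlt$-minimum feasible pair is the lexicographically first pair of maximum weight of $X$.

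Third, the DP transition is essentially unchanged: when combining partial solutions from states $(A',Q',D',X_{Q'},\tau_{D'})$ for $D' \in \mathcal{D}$, the labellings $X_Q$ and $X_{Q'}$ must agree on $A \cap A'$, and the automaton states $\tau_{D'}$ must combine through $\mathcal{A}$ to the prescribed $\tau$ at $Q$. The correctness proof is the same induction on the depth of $(t,t_D)$ as in Section~\ref{ssec:DP_proof}, applied to the $\indlt$-minimum optimal pair $(F^\star,X^\star)$ and to the automaton states $\tau^\star(t,D)$ that $(F^\star,X^\star)$ induces along the decomposition $(T,\beta_{F^\star})$. The final answer is read off by iterating over all $(A,Q,X_Q,\tau)$ with $\tau$ accepting and returning the one maximizing the weight of the resulting $X$-set, exactly as in the last step of the proof of Theorem~\ref{thm:DP}.

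The hardest part will be verifying that $\tau$ is truly a function of the bag $Q$ and of the partial solution in the subtree below, and not of the container $A$ that may be much larger. Since $\mathcal{A}$ only reads the subgraph of $F^\star$ induced by bags $\beta(t)$ of the decomposition together with the restriction of $X^\star$ to those bags, and since the container promise gives $A_t \cap V(F^\star) = \beta(t) \cap V(F^\star)$, the automaton state is well-defined for the $\indlt$-minimum solution; this is exactly the same phenomenon that forced the lexicographic canonization argument already in Section~\ref{ssec:DP_proof}. The running-time overhead is a factor of $2^k$ (for the choice of $X_Q \subseteq Q$) times $C(\phi,k)^{\Oh(1)}$ (for the choice of $\tau$ and the transitions of $\mathcal{A}$), which is absorbed into the constant $C(\phi,k)$ after renaming.
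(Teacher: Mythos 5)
Your proposal is correct and follows essentially the same approach as the paper, which in Section~\ref{sec:conclusion} only sketches the construction (the augmented state $(A,Q,Q_X,c,D)$ with $Q_X \subseteq Q$ the intended $X \cap Q$ and $c$ a $\phi$-type in the Borie--Parker--Tovey sense, with partial solutions compared by weight of $Y$) and defers the details to~\cite{FominTV15}, explicitly declining to carry the generalization through the proof of Theorem~\ref{thm:DP}. Your tree-automaton formulation is interchangeable with the paper's ``$\phi$-types,'' and your explicit redefinition of the canonization order $\indlt$ on pairs $(F,X)$ (weight of $X$, then lex on $V(F)$, then lex on $X$) correctly fills in the tie-breaking detail that the paper's sketch leaves implicit.
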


We refer to~\cite{FominTV15} for examples of problems expressible by this formalism. 

The work of~\cite{FominTV15} relies on previous framework by Borie, Parker, and Tovey~\cite{BoriePT92} to handle the CMSO property $\phi$. 
The key property of CMSO formulae is that they define \emph{regular} properties:
in our setting, given a pair $(F,X)$ with $X \subseteq V(F)$, a vertex separator $Q$ of $F$ of size at most $k$,
and a component $P$ of $G-Q$, there is only a bounded in $k$ and the size of $\phi$ number of potential ``types of partial behavior'' of $\phi$ on the tuple $(F[Q \cup P], Q, X \cap (Q \cup P))$.
We refer to~\cite{FominTV15} for a gentle introduction and precise definitions. 

In the dynamic programming algorithm inside the proof of Theorem~\ref{thm:DP}, handling a CMSO requirement $\phi$ can be done exactly in the same way as it is done
in the analogous dynamic programming algorithm in~\cite{FominTV15}. Recall that the state of the algorithm consists of a container $A \in \coverfam$, a set $Q \subseteq A$ of size at most $k$
(intended intersection of the solution with $A$) and a component $D \in \cc(G-A)$. The state seeks to extend the solution into $D$: a feasible solution to $(A,Q,D)$ is a set $P \subseteq D$
such that $G[Q \cup P]$ admits a tree decomposition of width less than $k$ with $Q$ contained
in one bag. 
With the CMSO requirement $\phi$, we need to extend the dynamic programming state
to a tuple $(A,Q,Q_X,c,D)$, where $Q_X \subseteq Q$ is the intended intersection of the set $X$
with $Q$ and $c$ is the $\phi$-type of a sought feasible solution inside $D$.
That is, now a partial solution is a pair $(P,Y)$ with $Y \subseteq P \subseteq D$
such that $G[Q \cup P]$ admits a tree decomposition of width less than $k$ with $Q$ contained
in one bag and the tuple $(G[Q \cup P], Q, Q_X \cup Y)$ has $\phi$-type $c$;
partial solutions are compared by the weight of $Y$.

We decided to omit the above generalization in the proof of Theorem~\ref{thm:DP} for the sake
of clarity of the arguments. The above generalization is a straightforward application 
of the techniques of~\cite{FominTV15} that would bring here a large definitional overhead
without bringing any new insight. 


\bibliographystyle{abbrv}

\bibliography{../references}

\end{document}